\documentclass{article}
\usepackage{mathrsfs}
\usepackage{a4wide}
\usepackage{amsmath,amsfonts,amssymb,amsthm,color}
\usepackage[usenames,dvipsnames,svgnames,table]{xcolor}

\usepackage[active]{srcltx}

\allowdisplaybreaks

\def \beq{\begin{equation}}
\def \eeq{\end{equation}}


\newcommand{\R}{{\mathbb R}}

\newcommand{\X}{{\mathcal{X}}}
\newcommand{\z}{\text{\tt{z}}}
\def\bb1{{1\hspace*{-2.4pt}\rm{l}}}
\newcommand {\curl} {{\rm curl}\,}

\newtheorem{theorem}{Theorem}[section]
\newtheorem{definition}[theorem]{Definition}
\newtheorem{proposition}[theorem]{Proposition}
\newtheorem{corollary}[theorem]{Corollary}
\newtheorem{lemma}[theorem]{Lemma}

\theoremstyle{definition}
\newtheorem{remark}[theorem]{Remark}

\newtheorem{hypothesis}[theorem]{Hypothesis}

\numberwithin{equation}{section}

\def\R{{\rm I\kern-.2em R}}

\def\X{\mathcal X}

\def\s0{\sigma_0}

\def\z{\mathfrak{z}}

\def\bb1{{\rm{1}\hspace{-3pt}\mathbf{l}}}
\def\Ie0{[-\epsilon_0,\epsilon_0]}

\def\Int{\mathfrak{I}\mathit{nt}}
\def\lambdabar{\lambda\hspace{-9pt}\relbar}

\def\beq{\begin{equation}}
\def\eeq{\end{equation}}

\begin{document}

\noindent 
\begin{center}
\textbf{\large Peierls' substitution for low lying spectral energy windows }
\vspace{0,3cm}
\end{center}

\vspace{0.2cm}

\begin{center}
{\bf
Horia D. Cornean\footnote{Department of Mathematical Sciences,
  Aalborg University, Skjernvej 4A, 9220 Aalborg, Denmark}, \ 
  Bernard Helffer\footnote{Laboratoire
de Math\'ematiques Jean Leray, Universit\'{e} de Nantes and CNRS, 2 rue de la Houssini\`ere 44322 Nantes Cedex (France) and Laboratoire de Math\'ematiques d'Orsay, Univ.
Paris-Sud, Universit\'e Paris-Saclay.},  and Radu
Purice\footnote{Institute
of Mathematics Simion Stoilow of the Romanian Academy, P.O.  Box
1-764, Bucharest, RO-70700, Romania.}}
\end{center}

\abstract{

We consider a $2d$ magnetic Schr\"odinger operator perturbed by a weak magnetic field which slowly varies around a positive mean.  In a previous paper we proved the  appearance of a `Landau
type' structure of spectral islands at the bottom of the spectrum, under the
hypothesis that the lowest Bloch eigenvalue of the unperturbed operator remained simple on the whole Brillouin zone, even though its range may overlap with the range of the second eigenvalue.  We also assumed that the first Bloch spectral projection was smooth and had a  zero Chern number.

In this paper we extend our previous results to the only two remaining  possibilities: either the first Bloch eigenvalue remains isolated while its corresponding spectral projection has a non-zero Chern number, or the first two Bloch eigenvalues cross each other.

\tableofcontents

\section{Introduction}\label{introduc}

\subsection{Some history of the problem and the structure of the paper}
Since the pioneering work of Peierls \cite{Pe} and Luttinger \cite{Lu}, physicists consider Peierls' substitution to be the right way of taking into account external magnetic fields in periodic solid state systems. Very roughly, the receipt is as follows: consider a non-magnetic periodic Schr\"odinger Hamiltonian  having an energy band $\lambda_0(\theta)$ which is periodic with respect to the Brillouin zone; then in the presence of an external magnetic vector potential $A$, the right physical behaviour of the magnetically perturbed system should be given by $\lambda_0(\theta -A(x))$, whatever that means. 

From a rigorous mathematical point of view the situation is more complicated, and here we will only discuss the case in which the external magnetic perturbation is weak and slowly variable. The smallness of the magnetic perturbation is generically represented by a parameter $\epsilon\geq 0$, where $\epsilon=0$ corresponds to the unperturbed objects. The first natural question one asks in this context is whether one can construct an effective Hamiltonian which is close in some sense to the original one, either dynamically or spectrally. The second question is to analyse the effective model and draw conclusions on its spectral and dynamical properties. 

To the best of our knowledge, all the effective models constructed until now suppose that the unperturbed Bloch
spectrum under investigation remains isolated from the rest; this does not require a gap in the full spectrum, only a non-empty gap at each fixed quasi-momentum $\theta$ of the Brillouin zone (the non-crossing, range-overlapping case). On top of that, with the notable exception of a recent work by Freund and Teufel \cite{FT}, all existing papers require that the Bloch projection associated to the Bloch spectrum under investigation is {\it topologically trivial}. The triviality condition leads to the existence of a basis consisting of spatially localized Wannier functions \cite{CHN, CM, FMP, HS4, Ne-LMP, Ne-RMP} which plays an essential role in the reduction procedure. We note that although \cite{FT} can treat non-trivial projections, only bounded perturbations are allowed. 

When the external magnetic field is constant and the range of $\lambda_0$ does not overlap with the other ones (the gapped case), one can construct a smooth and periodic symbol $\lambda_\epsilon(\xi)$ such that the Weyl quantization of $\lambda_\epsilon(\xi -A_\epsilon(x))$ is {\it isospectral} with the "true" magnetic band Hamiltonian, provided $\epsilon$ is small enough. This is essentially due to Nenciu and Helffer-Sj\"ostrand \cite{HS4, Ne-RMP, Sj}. Concerning the spectral analysis of these effective magnetic pseudo-differential operators there exists a large amount of literature which classifies the  spectral gaps induced by the magnetic field; see the works by Bellissard \cite{Be1, Be2}, Bellissard-Rammal \cite{RB}, Helffer-Sj\"ostrand \cite{HS, HS4} and references therein.

When the magnetic perturbation comes from a non-constant but slowly variable magnetic field, still under the assumption that a localized Wannier basis exists, one can again construct an effective Hamiltonian up to any order in $\epsilon$ (see \cite{CIP, DN-L, PST}) but which typically lives in an $\epsilon$-dependent subspace. Thus the spectral analysis of the effective operator seems to be as complicated as the original one. 

In our previous paper \cite{CHP} we proved (for the first time in the non-constant case) the appearance of a `Landau
type' structure at the bottom of the spectrum consisting of spectral islands of width of order $\epsilon$, separated by gaps of roughly the same order of magnitude; in \cite{CHP} we still worked under the hypothesis that the lowest Bloch eigenvalue of the unperturbed operator remained simple on the whole Brillouin zone and a localized Wannier basis can be constructed.

The current manuscript extends our previous results to the only two remaining possibilities: either the first Bloch eigenvalue remains isolated while its corresponding spectral projection has a non-zero Chern number, or the first two Bloch eigenvalues cross each other. 

Our main result is Theorem \ref{mainTh}. It roughly states that in a certain narrow energy window near the bottom of the spectrum of the "true" perturbed Hamiltonian, the perturbed spectrum is well approximated by the spectrum of a pseudo-differential operator corresponding to a magnetically quantized periodic symbol which only depends on $\xi$. This result is new also for the constant magnetic case since we no longer need global non-crossing and/or triviality conditions on the lowest Bloch eigenvalue. 

Another important consequence is that we can prove that gaps of order $\epsilon$ open near the bottom of the perturbed spectrum, due to the fact that the spectrum of our effective Hamiltonian has the same property. For more details see Corollaries \ref{C1} and \ref{C2}. 

The paper is organized as follows:  
\begin{itemize}
\item In the rest of this section we introduce some of the objects we want to study and we briefly review the Bloch-Floquet theory.
\item In Section \ref{sectiunea2} we state our main results (Theorem \ref{mainTh} and Corollaries \ref{C1} and \ref{C2}). In subsection \ref{ss2.2} we outline the main ideas behind the proofs. 
\item In Section \ref{S_q-W} we construct a Wannier-like localized basis which spans a certain energy window of the unperturbed Hamiltonian; in this window, the lowest Bloch eigenvalue has to be non-degenerate.  
\item In Section \ref{S.4} we describe a general reduction argument based on the Feshbach-Schur map which permits the construction of an effective Hamiltonian in a narrow window near the bottom of the spectrum.
\item In Section \ref{ss1.3} we review the so-called magnetic pseudo-differential calculus where the symbol composition rule is a twisted Moyal product, while in Section \ref{S.5} we construct the magnetic counterpart of the Wannier-like projection from Section \ref{S_q-W}.
\item In Section \ref{S.6} we show that the magnetic Hamiltonian satisfies the conditions of the abstract reduction procedure which we developed in Section \ref{S.4}. 
\item In Section \ref{modif-B-function} we analyse the structure of the effective magnetic matrix constructed in Section \ref{S.6}. In particular, we show in Proposition \ref{magn-matrix-ke} that the spectrum of this magnetic matrix is close to the spectrum of a magnetic pseudo-differential operator whose symbol $\lambdabar^\epsilon(\xi)$ is smooth and periodic in $\xi$ while  independent of $x$. In particular, this completes the proof of Theorem \ref{mainTh}(i). In Proposition \ref{mod-Bloch-func} we show that the function $\lambdabar^\epsilon(\cdot)$ is close to the unperturbed Bloch eigenvalue $\lambda_0$ near its minimum; this ends the proof of Theorem \ref{mainTh}. Finally, we prove Corollaries \ref{C1} and \ref{C2}. 
\end{itemize}

\subsection{Notation and conventions}

For any real finite dimensional Euclidean space $\mathcal{V}$ we shall denote by:
\begin{enumerate}
\item $BC^\infty(\mathcal{V})$ (respectively $C^\infty_{\text{\tt pol}}(\mathcal{V})$) the family of smooth complex functions on $\mathcal{V}$ that are bounded (respectively polynomially bounded) together with all their derivatives,
\item $\tau_v$ the translation by $-v\in\mathcal{V}$ on any class of distributions on $\mathcal{V}$, 
\item the "Japanese bracket" is denoted by $<v>:=\sqrt{1+|v|^2}$ for any $v\in\mathcal{V}$ with $|v|\in\mathbb{R}_+$ its Euclidean norm, 
\item $\mathscr{S}(\mathcal{V})$ the space of Schwartz  functions on $\mathcal{V}$ 
 and with $\mathscr{S}^\prime(\mathcal{V})$ the space of tempered distributions on $\mathcal{V}$.
\end{enumerate}
For any Banach space $\mathcal{B}$ we denote by $\mathcal{L}(\mathcal{B})$ the Banach space of continuous linear operators in $\mathcal{B}$. For any Hilbert space $\mathcal{H}$ and any pair of unit vectors $(u,v)$ in it we use the physics notation  $|u\rangle\langle v|$ for the 
 the projector
\beq
\mathcal H \ni w \mapsto |u\rangle\langle v|w:=\langle v,w\rangle_{\mathcal{H}}u\,.
\eeq
Our scalar products are anti-linear in the first factor. 
As we are working in a two-dimensional framework, we  use the following
notation for the {\it vector product} of two vectors $u$ and
$v$ in $\mathbb R^2$\,:
\beq\label{Def-wedge}
u\wedge v\ :=\ u_1v_2\,-\,u_2v_1\,.
\eeq
and the $-\pi/2$ rotation of $v$:
\beq\label{Def-bot}
v^\bot:=\big(v_2,-v_1\big)\,.
\eeq

\subsection{The Hamiltonian}

Let $\Gamma\subset\mathbb{R}^2\equiv\mathcal{X}$ be a regular periodic
lattice. We consider a smooth $\Gamma$-periodic
potential $V_\Gamma:\X\rightarrow\mathbb{R}$ and a smooth $\Gamma$-periodic magnetic field $B^\Gamma:\X\rightarrow\mathbb{R}$ with vanishing flux through the unit cell of the lattice. 
Let us denote by $A^\Gamma:\X\rightarrow\mathbb{R}^2$ a smooth $\Gamma$-periodic vector potential generating the magnetic field $B^\Gamma$. 
We  consider the Hilbert space
$
\mathcal{H}\,:=\,L^2(\X)
$
and define the periodic Schr\"{o}dinger operator
\beq\label{H0}
H^0:=(-i\partial_{x_1} - A^{\Gamma}_1)^2 +
(-i\partial_{x_2} - A^{\Gamma}_2)^2 + V_\Gamma\,,
\eeq
as the unique self-adjoint extension in $\mathcal{H}$ of the above differential operator initially defined on $\mathscr{S}(\mathcal{V})$ (see \cite{RS-4}).

As a perturbation of the periodic case, we consider a family of magnetic
fields indexed by $(\epsilon,\kappa)\in[0,1]\times[0,1]$:
\beq\label{Bek}
B^{\epsilon,\kappa}(x)\,:=\,\epsilon B_0\,+\,\kappa\epsilon B(\epsilon x)\,.
\eeq
Here $B_0>0$ is constant, while  
$B:\X\rightarrow\mathbb{R}$ is of class $BC^\infty(\X)$.  
We choose the vector potential $A^0$ of the constant magnetic field $B_0$ in   the form
{\it (transverse gauge)}:
\beq \label{A0}
A^0(x)\,=\,(B_0/2)\big(-x_2,x_1\big)
\eeq
and some {\it vector potential} 
$A:\X\rightarrow\mathbb{R}^2$, that we can choose to be of class $C^\infty_{\text{\tt pol}}(\X)$ such that:
\begin{equation}\label{B0}
B_0=\partial_1A^0_2-\partial_2A^0_1\,,\quad B=\partial_1A_2-\partial_2A_1\,.
\end{equation}
Similarly, for 
$$
A^{\epsilon,\kappa}(x):=\epsilon A^0(x)+\kappa A(\epsilon x)\,,
$$
we have 
\begin{equation}\label{Aek}
B^{\epsilon,\kappa}=\partial_1A^{\epsilon,\kappa}_2-\partial_2A^{\epsilon,
\kappa}_1\,.
\end{equation}
We shall also use the notation:
\beq\label{B}
B^\epsilon:=\epsilon B_0\,,\quad A^\epsilon:=\epsilon
A_0,\quad B_\epsilon(x):=B(\epsilon x)\,,\quad A_\epsilon(x):=A(\epsilon x)\,.
\eeq

\begin{definition}
We consider the  magnetic Schr\"{o}dinger operators defined as the self-adjoint extension in $\mathcal{H}$ of the following differential operators:
\begin{equation}\label{mainH}
H^{\epsilon,\kappa}:=(-i\partial_{x_1}-A^\Gamma_1 - A^{\epsilon,\kappa}_1)^2 +
(-i\partial_{x_2} -A^\Gamma_2- A^{\epsilon,\kappa}_2)^2 + V_\Gamma\,.
\end{equation} 
\begin{equation}\label{constmH}
H^{\epsilon}:=H^{\epsilon,0}=(-i\partial_{x_1} -A^\Gamma_1- A^{\epsilon}_1)^2 +
(-i\partial_{x_2} -A^\Gamma_2- A^{\epsilon}_2)^2 + V_\Gamma\,.
\end{equation} 
\end{definition}
When $\kappa=\epsilon=0$, we recover $H^{0}$.

\subsection{The Bloch-Floquet Theory}\label{BF-theory}

We  consider
the quotient space $\R^2/\Gamma$ which is canonically
isomorphic to the two dimensional torus $\mathbb{T}^2\equiv \mathbb{T}$
and denote by $\R^2\ni
x\mapsto\hat{x}\in\mathbb{T}$ the canonical
projection onto the quotient. We choose two generators $\{e_1,e_2\}$ for the lattice $\Gamma$ and the associate 
\textit{elementary cell}:
\beq\label{E-cell}
E_\Gamma\,=\,\left\{y=\sum\limits_{j=1}^2 t_je_j\in\R^2\,\mid\,-(1/2)\leq
t_j<(1/2)\,,\ \forall j\in\{1,2\}\right\},
\eeq
so that we have a bijection
$
\X\ni x\mapsto\big(\gamma(x),y(x)\big)\in\Gamma\times E_\Gamma\, .$ 
The dual lattice of $\Gamma$ is then defined as
$$
\Gamma_*\,:=\,\left\{\gamma^*\in\X^*\,\mid\,<\gamma^*,\gamma>\in\,2\pi\,\mathbb{Z}\,
,\,
\forall\gamma\in\Gamma\right\}.
$$
Considering the dual basis
$\{e^*_1,e^*_2\}\subset\X^*$ of
$\{e_1,e_2\}$, which is defined by
$<e^*_j,e_k>=\,2\pi\, \delta_{jk}\,,$ 
 we have  $\Gamma_*:=\oplus_{j=1,2}\, \mathbb{Z}e^*_j\,.
 $
 We
 similarly define $\mathbb{T}_*:=\X^*/\Gamma_*$  and $
E_{\Gamma_*}\equiv E_*$.

We recall that the Floquet unitary map $\mathcal{U}_{\Gamma}$ is defined by:
\begin{equation}\label{UGamma}
\mathscr{S}(\X)\ni\phi\mapsto\mathcal{U}_{\Gamma}\phi\in
C^\infty\big(\X\times\mathbb{T}_*\big) \mbox{ with }\quad\big[\mathcal{U}_{\Gamma}\phi\big]
(x,\theta):=\sum\limits_{\gamma\in\Gamma}e^{i<\theta,\gamma>}\phi(x-\gamma)\,,
\end{equation}
so that we have the following property
\beq\label{F-cond}
\big[\mathcal{U}_{\Gamma}\phi\big](x+\gamma,\theta)=e^{i<\theta,\gamma>}\big[
\mathcal{U}_{\Gamma}\phi\big](x,\theta)\,,\,
\forall\gamma\in\Gamma\,,\ \forall(x,\theta)\in\X\times\mathbb{T}_*\,.
\eeq
For any $\theta\in\mathbb{T}_*$ we introduce
\begin{equation}\label{F-theta}
\mathscr{F}_\theta\,:=\,\big\{v\in L^2_{\text{\sf
loc}}(\X)\,\mid\,v(x+\gamma)=e^{i<\theta,\gamma>}v(x)\,,\,
\forall\gamma\in\Gamma\big\}
\end{equation}
and notice that it is a Hilbert space for the scalar product
\begin{equation}\label{F-norm}
\langle v\,,\, w\rangle_{\mathscr{F}_\theta}\,:= \int_{E}\bar v(x)\, w (x)\,dx\,.
\end{equation}
For each $\theta\in\mathbb{T}_*$, we define  
$\mathcal{V}_\theta:\mathscr{F}_\theta\rightarrow L^2(\mathbb{T})$ as the multiplication  with the function \break 
$\sigma_\theta(x):=e^{-i<\theta,x>}$. We can then form the associated direct integral of Hilbert spaces
\beq\label{F}
\mathscr{F}:=\int_{\mathbb{T}_*}^\oplus\mathscr{F}_\theta\,d\theta\,,
\eeq
and the following unitary map from $ \int_{\mathbb{T}_*}^
\oplus\mathscr{F}_\theta\,d\theta$ onto
$L^2(E_*)\otimes L^2(\mathbb{T})$ defined by:
\beq\label{V}
\mathcal{V}_{\mathbb{T}_*}:=\int_{\mathbb{T}_*}^
\oplus\mathcal{V}_\theta\,d\theta\,.
\eeq
Then $\mathcal{U}_{\Gamma}$ defines a unitary
operator
$L^2\big(\X\big)\rightarrow\mathscr{F}$. 

Note that 
a bounded operator $X\in\mathcal{L}(\mathcal{H})$ commutes with all the translations with elements from $\Gamma$ if and only if there exists a measurable family $\mathbb{T}_*\ni\theta\mapsto\hat{X}(\theta)\in\mathcal{L}(\mathscr{F}_\theta)$ such that
$$
\mathscr{U}_\Gamma X\mathscr{U}_\Gamma^{-1}\ =\ \int^\oplus_{\mathbb{T}_*}\hat{X}(\theta)\,d\theta\,.
$$

The following statements are well-known 
(see for example \cite{Ku, RS-4}):
\begin{enumerate}
\item We have the direct integral decomposition: 
\begin{equation}\label{fibop}
\hat{H}^0:=\mathcal{U}_\Gamma H^0\mathcal{U}_\Gamma^{-1}=
\int_{\mathbb{T}_*}^\oplus\hat{H}^0(\theta)\, d\theta\,,
\end{equation}
with 
$$
{\hat{H}^0(\theta):=(-i\partial_{x_1} - A^{\Gamma}_1)^2 +
(-i\partial_{x_2} - A^{\Gamma}_2)^2 + V_\Gamma}\,,
 $$
 whose domain in $ \mathscr{F}_\theta$ is the following local Sobolev space:
 \beq
 \mathscr{H}^2_\theta\ :=\ \,\big\{v\in\mathcal{H}^2_{\text{\sf
loc}}(\X)\,\mid\,v(x+\gamma)=e^{i<\theta,\gamma>}v(x)\,,\,
\forall\gamma\in\Gamma\big\}\,\subset\,\mathscr{F}_\theta\,.
 \eeq
\item  The family $\hat{H}^0(\theta)$ ($\theta \in \mathbb{T}_*$)  is unitarily equivalent with an 
analytic family of type A in the sense of Kato \cite{Ka}. The unitary operator is given by the multiplication with $\mathscr{V}_\theta$ which maps $\mathcal{F}_\theta$ to $L^2(\mathbb{T})$. The rotated operator is essentially self-adjoint on the set of smooth and periodic functions in $L^2(E)$ and acts on them as:  
$$\widetilde{H}^0(\theta):=(-i\nabla - A^{\Gamma}+\theta)^2  + V_\Gamma,\quad \theta\in E_*\,.$$
\item
There exists a family of continuous functions
$\mathbb{T}_*\ni\theta\mapsto\lambda_j(\theta)\in
\mathbb{R}$  indexed by
$j\in\mathbb{N}$, called \textit{the Bloch eigenvalues},
such that $\lambda_j(\theta)\leq\lambda_{j+1}(\theta)$ for every
$j\in\mathbb{N}$ and $\theta\in \mathbb{T}_*$, and
\beq\label{Bloch}
\sigma\big(\hat{H}^0(\theta)\big)=\underset{j\in\mathbb{N}}{\bigcup}\{
\lambda_j(\theta)\}=\sigma\big(\widetilde{H}^0(\theta
)\big).
\eeq

\item 
For each fixed $\theta$ we can define the Riesz spectral projections using complex Cauchy integrals of the resolvent $(z- \hat{H}^0(\theta))^{-1}$ around each eigenvalue $\lambda_j(\theta)$. If the lowest eigenvalue $\lambda_0(\theta)$ is always non-degenerate, it is also smooth in $\theta$ and its corresponding Riesz projection $\hat{
\pi}_0(\theta)$ is globally smooth on the dual torus and has rank one. If $\lambda_0$ can cross with the other higher eigenvalues, then by convention, at these crossing points $\hat{
\pi}_0(\theta)$ is defined as the higher-rank Riesz projection which encircles the still lowest, but now degenerate eigenvalue. In this way,  the projection $\hat{
\pi}_0(\theta)$ is well defined on the whole dual torus but it is no longer continuous.
\end{enumerate}

\section{The main results}\label{sectiunea2}

\subsection{(Non)existence of localized Wannier functions}

In our previous work \cite{CHP} we imposed the condition that $\lambda_0$ had to remain non-degenerate on the whole dual torus, while the unperturbed periodic operator $H^0$ had to be time-reversal invariant (i.e. commuting with the complex conjugation operator). Under these conditions, the orthogonal projection family $\hat{
\pi}_0(\theta)$ is smooth and time-reversal symmetric in the sense that 
$$ C \, \hat{
\pi}_0(\theta)=\hat{
\pi}_0(-\theta) \, C $$
where $C$ is the anti-unitary operator given by the complex conjugation. In particular, this implies that the Chern number of this family equals zero and that one can construct \cite{HS4, Ne-RMP, FMP, CM} a global smooth  section $\hat{\phi}_0:\mathbb{T}_*\rightarrow\mathscr{F}$ such that
\begin{equation}\label{defhatpi} 
 \hat{
\pi}_0(\theta)=|\hat{\phi}_0(\theta)\rangle\langle\hat{\phi}_0(\theta)|\,.
\end{equation}
Thus we have the orthogonal projection in $\mathcal{H}$
\begin{equation}\label{defpij}
\pi_0:=\mathcal{U}_\Gamma^{-1}\, \hat{\pi}_0\, \mathcal{U}_\Gamma\,,
\end{equation}
which commutes with $H^0$ but is not necessarily a spectral projection of it (unless the range of $\lambda_0$ is isolated from the rest of the spectrum). When a smooth global section $\hat{\phi}_0$ exists, the principal 
{\it Wannier function} $\phi_0$ is defined by:
\beq\label{expl-FBZ}
\phi_0(x) := \big[\mathscr{U}_\Gamma^{-1}\hat{\phi}_0\big](x)\,=
\,|E_*|^{-\frac 12}\int_{\mathbb{T}_*}\hat{\phi}_0(x,\theta)\,d\theta\,
\eeq
and has rapid decay. The family $\phi_\gamma:=\tau_{\gamma}\phi_0$ generated from $\phi_0$ by translations over the lattice $\Gamma$ is an orthonormal basis for the subspace $\pi_0\mathcal{H}$, consisting of (exponentially) localized functions. 

As we have already mentioned, the main spectral result in \cite{CHP} was  obtained under the hypothesis that a global smooth section as in \eqref{defhatpi} exists, or equivalently, that there exists a localized Wannier basis for $\pi_0$. Our main concern in the present paper is to show that roughly the same spectral results can be proved demanding {\bf neither} the global simplicity of $\lambda_0$, {\bf nor} the symmetry with respect to $\xi \mapsto -\xi$ of the symbol of $H^0$.
 In other words, we no longer demand the existence of a localized Wannier basis for the range of $\pi_0\,$.

\subsection{Main statements}
Without any loss of generality we may assume that $\inf(\sigma(H^0))=0\,$. Let us state our only additional hypothesis concerning the bottom of the spectrum of the unperturbed operator $H^0$.

\begin{hypothesis}\label{Hyp-1}
The map $\lambda_{0}:\mathbb{T}_*\rightarrow\mathbb{R}$ has a unique non-degenerate global minimum value realized for $\theta_0\in \mathbb{T}_*$ and $\lambda_0(\theta_0)=0\,$. 
\end{hypothesis}

When $A^\Gamma=0$  the above hypothesis is always satisfied \cite{KS, CHP} with $\theta_0=0\,$ and moreover,  $\lambda_0(\theta)=\lambda_0(-\theta)$. Also, under this hypothesis and if $b$ is small enough, the set 
$$
\Sigma_b:=\lambda_0^{-1}([0,b))\subset \mathbb{T}_*
$$
becomes a simply connected neighbourhood of the minimum.\\

Before we can state the main result of our paper we need some more notation: 
\begin{itemize}
\item Given a vector potential $A$ with components of class
$C^\infty_{\text{\sf pol}}(\X)\,$ and a symbol $F(x,\xi)$ we can define  the following Weyl-magnetic pseudodifferential operator:  
\begin{align}\label{hc111}
&  \big(\mathfrak{Op}^A(F)u\big)(x)\ :=\
(2\pi)^{-2}\int_\X\int_{\X^*}e^{i\langle
\xi,x-y\rangle}e^{-i\int_{[x,y]}A}\, F\left(\frac{x+y}
{2},\xi\right )u(y)\,d\xi\,dy\,.
\end{align}
The magnetic pseudodifferential calculus was developed in \cite{MP1}; for the convenience of the reader we summarized its main features in Section \ref{ss1.3}. 

\item $d_H(M_1,M_2)$ denotes the Hausdorff distance between the subsets $M_1$ and $M_2$ in $\mathbb{R}$. 
\end{itemize}

\begin{theorem}\label{mainTh}  Let $H^{\epsilon,\kappa}$, with $(\epsilon,\kappa)\in[0,1]\times[0,1]$,  be the family  introduced in \eqref{mainH} for a Hamiltonian $H^0$ satisfying Hypothesis \ref{Hyp-1} and a magnetic field of the form \eqref{Bek}.
Then there exists $b>0$ and a smooth real function $\lambdabar^\epsilon:\mathbb{T}_*\rightarrow\mathbb{R}$ 
such that:
\begin{enumerate}
 \item [{\rm(i)}] For any $N\in\mathbb{N}^*$ there exist some constant $C_0>0$ and some
$(\epsilon_0,\kappa_0)\in(0,b/N)\times(0,1)$, such that, for
any $(\epsilon,\kappa)\in(0,\epsilon_0]\times(0,\kappa_0]$,
\beq\label{horiac2}
d_H\left(\sigma\big(H^{\epsilon,\kappa}\big)\bigcap[0,N\epsilon]\,,\,\sigma\big(\mathfrak{Op}^{\epsilon,\kappa}(\widetilde{\lambdabar^\epsilon})\big)\bigcap[0,N\epsilon]\right)\,\leq\,C_0\big(\kappa\epsilon+\epsilon^{2}\big),
\eeq
where $\widetilde{\lambdabar^\epsilon}:\X^*\rightarrow\mathbb{R}$ is the periodic extension of $\lambdabar^\epsilon:\mathbb{T}_*\rightarrow\mathbb{R}$  considered as a function  on $\Xi\,$ constant along the directions in $\X\times\{0\}$, and $\mathfrak{Op}^{\epsilon,\kappa}(\widetilde{\lambdabar^\epsilon})\equiv \mathfrak{Op}^{A^{\epsilon,\kappa}}(\widetilde{\lambdabar^\epsilon})$ is the magnetic quantization of $\widetilde{\lambdabar^{\epsilon}}$ as in \eqref{hc111}.

\item [{\rm(ii)}] For every $m\in \mathbb{N}$ there exists $C_m>0$ such that for all $\theta\in \Sigma_b$ and $|\alpha|\leq m$ we have
$$
\left|\partial^\alpha\lambdabar^\epsilon(\theta)-
\partial^\alpha\lambda_{0}(\theta)\right|\leq C_m\,\epsilon\,.
$$ 
\end{enumerate}
\end{theorem}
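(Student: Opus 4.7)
The strategy is to circumvent the absence of a global smooth Bloch section (equivalently, of a localized orthonormal Wannier basis for $\pi_0$) by working only in the narrow energy window $[0,b)$, chosen so small that $\Sigma_b:=\lambda_0^{-1}([0,b))$ is a simply connected neighbourhood of the unique minimum $\theta_0$ (Hypothesis \ref{Hyp-1}). On $\Sigma_b$ the band $\lambda_0$ is simple and the Riesz projection $\hat\pi_0(\theta)$ is smooth; since $\Sigma_b$ is simply connected, the resulting line bundle is trivial and a smooth normalized Bloch section $\hat\phi_0:\Sigma_b\to\mathscr F$ does exist locally. Multiplying by a smooth cutoff equal to $1$ on a slightly smaller set and zero outside $\Sigma_b$, and Fourier-integrating over $\mathbb{T}_*$ as in \eqref{expl-FBZ}, produces what I will call a quasi-Wannier function $\phi_0$ with rapid decay whose lattice translates $\{\tau_\gamma\phi_0\}_{\gamma\in\Gamma}$ form a Riesz basis (not necessarily orthonormal) of a subspace $\pi\mathcal H$ that contains the spectral subspace associated with $[0,b)$.

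Next I would turn on the magnetic field. Using the magnetic pseudodifferential calculus of Section \ref{ss1.3}, I would manufacture in Section \ref{S.5} a magnetic analogue $\pi^{\epsilon,\kappa}$ of $\pi$ which is close to being invariant under $H^{\epsilon,\kappa}$, in the sense that the off-diagonal terms $\pi^{\epsilon,\kappa}H^{\epsilon,\kappa}(1-\pi^{\epsilon,\kappa})$ are small in a suitable norm and that the part of $\sigma(H^{\epsilon,\kappa})$ lying below $N\epsilon$ is captured by $\pi^{\epsilon,\kappa}\mathcal H$ for $\epsilon$ small. With these almost-invariance estimates in hand, the abstract Feshbach--Schur reduction of Section \ref{S.4} gives an effective magnetic matrix $M^{\epsilon,\kappa}$ on $\ell^2(\Gamma)$ whose spectrum inside $[0,N\epsilon]$ is at Hausdorff distance $O(\kappa\epsilon+\epsilon^2)$ from $\sigma(H^{\epsilon,\kappa})\cap[0,N\epsilon]$.

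Then, following Section \ref{modif-B-function}, I would identify $M^{\epsilon,\kappa}$ with the magnetic Weyl quantization of a symbol on $\X\times\X^*$ and isolate its principal part. The construction of $\pi$ and of the magnetic intertwiner produces a symbol whose $x$-dependence is only through $A^{\epsilon,\kappa}$, encoded inside the magnetic Moyal product, so the remaining symbol can be taken to depend only on $\xi$; calling this principal symbol $\widetilde{\lambdabar^\epsilon}$ and invoking Proposition \ref{magn-matrix-ke}, the spectrum of $M^{\epsilon,\kappa}$ is $O(\kappa\epsilon+\epsilon^{2})$-close to $\sigma\bigl(\mathfrak{Op}^{\epsilon,\kappa}(\widetilde{\lambdabar^\epsilon})\bigr)\cap[0,N\epsilon]$. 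Combining the two Hausdorff estimates yields part (i). For part (ii) I would use Proposition \ref{mod-Bloch-func}: in the absence of a magnetic field ($\epsilon=0$) the Feshbach--Schur symbol reduces to $\lambda_0$ on $\Sigma_b$, and the corrections induced by the magnetic Moyal product with $A^\epsilon$ are by construction of order $\epsilon$, together with all their derivatives, on the window $\Sigma_b$.

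The main obstacle is to construct the magnetic quasi-Wannier projection $\pi^{\epsilon,\kappa}$ with good enough control: one must ensure simultaneously that it is a genuine bounded projection in $\mathcal H$ up to errors that remain $O(\kappa\epsilon+\epsilon^2)$, that it commutes with magnetic translations up to the same order, and that it captures the whole spectrum of $H^{\epsilon,\kappa}$ in $[0,N\epsilon]$. A secondary difficulty is that, because our quasi-Wannier system is a Riesz basis and not orthonormal, the Feshbach--Schur effective matrix must be renormalized (via a magnetic square root of the overlap matrix) before one can read off a clean Weyl symbol independent of $x$; this renormalization is what ultimately justifies the Peierls-type form $\lambdabar^\epsilon(\xi-A^{\epsilon,\kappa}(x))$ via magnetic quantization and is the step whose error bounds drive the $\kappa\epsilon+\epsilon^2$ rate in \eqref{horiac2}.
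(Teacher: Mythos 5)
Your overall roadmap — a quasi-Wannier projection supported on the window $\Sigma_b$, its magnetic deformation, and a Feshbach--Schur reduction to a magnetic matrix identified with the quantization of a $\xi$-only periodic symbol — matches the paper's architecture. But two concrete steps in your construction are wrong, and each would break the argument.

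First, the construction of the quasi-Wannier system. You propose to multiply the local Bloch section on $\Sigma_b$ by a cutoff vanishing outside $\Sigma_b$ and take the Bloch--Floquet inverse transform, claiming the lattice translates of the resulting $\phi_0$ form a Riesz basis. This is false: in the Bloch--Floquet picture the Gram matrix of $\{\tau_\gamma\phi_0\}$ is a convolution operator whose Fourier symbol is $\|g(\theta)\hat\phi_0(\theta)\|^2_{\mathscr F_\theta}=|g(\theta)|^2$, which vanishes identically outside $\Sigma_b$, so the family cannot be a Riesz basis (the Gram operator is not bounded below). What the paper does instead is to \emph{extend} the local Bloch eigensection to a global smooth section $\hat\psi_0:\mathbb T_*\to\mathscr F$ of \emph{unit norm at every $\theta$}, gluing the eigensection with some auxiliary Schwartz functions via a partition of unity; this exists because one drops the requirement that $\hat\psi_0(\theta)$ lie in $\operatorname{Ran}\hat\pi_0(\theta)$ outside $\Sigma_b$. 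Since $\|\hat\psi_0(\theta)\|\equiv 1$, the translates $\tau_\gamma\psi_0$ are exactly \emph{orthonormal}, and $\pi$ is a genuine orthogonal projection. The cutoff idea cannot be repaired by an overlap-matrix renormalization because the Gram matrix is not invertible.

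Second, the size of the off-diagonal coupling. You assume $\pi^{\epsilon,\kappa}H^{\epsilon,\kappa}(1-\pi^{\epsilon,\kappa})$ is small and that this near-invariance drives the Feshbach--Schur estimate. In the present setting that is precisely what fails: because $\hat\psi_0$ is an arbitrary deformation of the Bloch eigensection outside $\Sigma_b$, the commutator $[H^{\epsilon,\kappa},\pi^{\epsilon,\kappa}]$ is only $O(1)$ bounded, not $O(\epsilon)$, and the paper stresses this explicitly. The whole point of the abstract reduction in Section \ref{S.4} is to cope with an $O(1)$ off-diagonal block: one must introduce the dressing operator $Y=\Pi+\Pi H\Pi^\perp R_\perp(0)^2\Pi^\perp H\Pi$ and work with the dressed Hamiltonian $\widetilde H=Y^{-1/2}[\Pi H\Pi-\Pi H\Pi^\perp R_\perp(0)\Pi^\perp H\Pi]Y^{-1/2}$, so that the Hausdorff error is controlled by $\|H\Pi\|^2(\beta')^2\beta^{-3}$, of order $\epsilon^2$ for $\beta'=O(\epsilon)$, without ever requiring smallness of the off-diagonal part. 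Your sketch omits this mechanism, and a naive Schur complement expansion in powers of the off-diagonal block would not converge.

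The remaining parts of your outline (magnetic quasi-band projection, identification of the effective magnetic matrix with $\mathfrak{Op}^{\epsilon,\kappa}(\widetilde{\lambdabar^\epsilon})$ up to $O(\kappa\epsilon)$, and recovery of $\lambda_0$ on $\Sigma_b$ for part (ii)) are consistent with the paper, provided the two points above are fixed.
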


\vspace{0.2cm}

Combining the above theorem with the results from \cite{CHP} regarding the existence of spectral gaps for magnetic pseudo-differential operators like  $\mathfrak{Op}^{\epsilon,\kappa}(\widetilde{\lambdabar^\epsilon}) $, we will prove in Subsection \ref{hcfin} the following two corollaries: 

\begin{corollary}\label{C1}
Assume $B^\Gamma=0\,$. Then for any integer $N\geq1$, there exist some constants  $C_0,C_1,C_2 >0\,$, and 
$(\epsilon_0,\kappa_0)\in(0,\widetilde{b}/N)\times(0,1)\,$,  such
that, for
any $(\epsilon, \kappa) \in (0,\epsilon_0]\times (0,\kappa_0]$, there exist $a_0<b_0<a_1<\cdots <a_N<b_N$,
with $a_0= \inf\{\sigma(H^{\epsilon,\kappa})\}$ so that:
\begin{align}
&\sigma(H^{\epsilon,\kappa})\cap \left [a_0,b_N\right ]\subset \bigcup_{k=0}^N [a_k,b_k]\,,\quad{\rm dim}
\big({\rm Ran} E_{[a_k,b_k]}(H^{\epsilon,\kappa}) \big )= + \infty\,,\nonumber \\
& b_k-a_k\leq C_0 \,\epsilon\big(\kappa + C_1\, \epsilon^{1/3}\big)\, , \; 0\leq k\leq N\,,\quad {\rm and}\quad  a_{k+1}-b_k\geq\frac{1}{C_2}  \epsilon \, ,\; 0\leq k\leq N-1\,.\label{feb-19}
\end{align}
\end{corollary}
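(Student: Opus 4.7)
The plan is to reduce the proof to the spectral analysis of the effective magnetic pseudo\-dif\-fer\-en\-tial operator and then to import the Landau-type spectral structure proved in \cite{CHP}. By Theorem \ref{mainTh}(i),
$$
d_H\Bigl(\sigma(H^{\epsilon,\kappa})\cap[0,N\epsilon],\ \sigma(\mathfrak{Op}^{\epsilon,\kappa}(\widetilde{\lambdabar^\epsilon}))\cap[0,N\epsilon]\Bigr)\leq C_0(\kappa\epsilon+\epsilon^{2}),
$$
so it suffices to produce, for the effective operator $\mathfrak{Op}^{\epsilon,\kappa}(\widetilde{\lambdabar^\epsilon})$, a family of $N{+}1$ bands $[\tilde a_k,\tilde b_k]$ of width $\leq C\epsilon(\kappa+\epsilon^{1/3})$ separated by true gaps of size at least $(2/C_{2})\epsilon$, and then enlarge these bands by the Hausdorff error $C_0(\kappa\epsilon+\epsilon^{2})$.

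The hypothesis $B^{\Gamma}=0$ gives $\lambda_{0}(\theta)=\lambda_{0}(-\theta)$, hence the minimum of $\lambda_{0}$ sits at $\theta_{0}=0$ and is nondegenerate. Theorem \ref{mainTh}(ii) then ensures that $\lambdabar^{\epsilon}$ itself is, up to $O(\epsilon)$ together with its derivatives on $\Sigma_{b}$, equal to $\lambda_{0}$; in particular it has a nondegenerate minimum at a point $O(\epsilon)$-close to $0$, and a positive-definite Hessian there which is $\epsilon$-close to that of $\lambda_{0}$. After localising via a smooth cut-off to the well $\Sigma_b$ (which can be done with an error that is exponentially small in $1/\epsilon$ on the low-lying spectrum, because the operator behaves semiclassically near the bottom) and performing a Taylor expansion, the symbol $\widetilde{\lambdabar^{\epsilon}}(\xi)$ becomes, modulo a higher order remainder, a positive quadratic form $Q(\xi-\theta_{0}^{\epsilon})$.

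For the quadratic part, the magnetic Weyl quantization with vector potential $A^{\epsilon,0}=\epsilon A^{0}$ is, up to an affine change of variables, a harmonic oscillator whose spectrum is the set of Landau levels $\{\epsilon B_{0}(2k+1)\sqrt{\det Q}\}_{k\geq 0}$; these are infinitely degenerate, equispaced by $2\epsilon B_{0}\sqrt{\det Q}$ and located below $N\epsilon$ for a suitable range of $k$. The anharmonic correction coming from the cubic and higher terms of $\widetilde{\lambdabar^{\epsilon}}$ is handled exactly as in \cite{CHP}: a Birkhoff-type normal form in the magnetic pseudodifferential calculus shows that each Landau level is broadened into an island of width $O(\epsilon^{4/3})=\epsilon\cdot O(\epsilon^{1/3})$ while the gaps between consecutive islands remain of size $\geq 2\epsilon B_{0}\sqrt{\det Q}-O(\epsilon^{4/3})\geq (2/C_{2})\epsilon$. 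This yields $C_{1}$ and $C_{2}$.

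The slowly varying perturbation $\kappa\epsilon B(\epsilon x)$ enters by replacing $A^{\epsilon,0}$ by $A^{\epsilon,\kappa}=\epsilon A^{0}+\kappa A(\epsilon\cdot)$; by the $BC^{\infty}$ gauge covariance properties of the magnetic calculus recalled in Section \ref{ss1.3}, the resulting operator differs from $\mathfrak{Op}^{\epsilon,0}(\widetilde{\lambdabar^{\epsilon}})$ in norm-resolvent sense by $O(\kappa\epsilon)$. This broadens each island by an extra $O(\kappa\epsilon)$ summand but leaves the gaps open whenever $\kappa\leq\kappa_{0}$ is small enough, producing the announced width bound $C_{0}\epsilon(\kappa+C_{1}\epsilon^{1/3})$. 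Infinite rank of the spectral projectors $E_{[a_{k},b_{k}]}(H^{\epsilon,\kappa})$ is finally obtained by comparing with the known infinite degeneracy of each Landau level and using the fact that the Hausdorff closeness and the gap structure prevent a finite-dimensional subspace from accounting for an entire island. The main obstacle is securing the $\epsilon^{1/3}$ rate rather than the brute-force $\epsilon^{1/2}$ that a naive perturbation of the harmonic oscillator by its cubic correction would give; this requires the normal-form construction performed in \cite{CHP} to be applicable under the weaker assumptions of the present paper, which is legitimate since the normal form is carried out after the effective symbol $\lambdabar^{\epsilon}$ has been produced by Theorem \ref{mainTh}.
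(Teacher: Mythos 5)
Your overall strategy coincides with the paper's: pass to the effective symbol $\lambdabar^\epsilon$ via Theorem~\ref{mainTh}(i), establish a Landau-type band/gap picture for $\mathfrak{Op}^{\epsilon,\kappa}(\widetilde{\lambdabar^\epsilon})$ using the machinery of \cite{CHP}, and transfer it back to $H^{\epsilon,\kappa}$ by a Hausdorff-distance argument (the paper invokes Propositions~\ref{magn-matrix-ke} and~\ref{C-FS} directly rather than quoting~\eqref{horiac2}, but this is the same content). However, your account of the \cite{CHP} step is not quite what that paper does, and one point the present paper takes care of is missing from your write-up. First, the $\epsilon^{1/3}$ (resp. $\epsilon^{1/5}$) rate does not come from a Birkhoff normal form; it comes from a cut-off in $\theta$ at an $\epsilon$-dependent scale $\delta\sim\epsilon^{1/\mu}$ and an optimisation of $\mu$ balancing the cut-off error against the lowest non-quadratic Taylor term (with $B^\Gamma=0$ making the cubic term vanish by the symmetry $\lambda_0(\theta)=\lambda_0(-\theta)$ and hence allowing a better $\mu$). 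Your description of a cut-off ``to the well $\Sigma_b$ with exponentially small error'' therefore misidentifies both the scale (it must shrink with $\epsilon$, not be the fixed set $\Sigma_b$) and the nature of the dominant error (polynomial, not exponential). Second, the paper stresses the specific new difficulty of this setting: Theorem~\ref{mainTh}(ii) only gives the comparison $|\partial^\alpha\lambdabar^\epsilon-\partial^\alpha\lambda_0|\leq C_m\epsilon$ on $\Sigma_b$, whereas in \cite{CHP} the analogous estimate was global; the fix is precisely to pick the upper bound $\delta_0$ for the \cite{CHP} cut-off parameter so that the cut-off region lies inside $\Sigma_b$. You should make that choice explicit rather than asserting that ``all the results of \cite{CHP} Section 4 apply.'' Finally, your argument for ${\rm dim}\,{\rm Ran}\,E_{[a_k,b_k]}(H^{\epsilon,\kappa})=\infty$ via ``Hausdorff closeness plus gap structure'' is too quick: Hausdorff distance carries no multiplicity information, so the infinite rank must be inherited from the corresponding statement proved in \cite{CHP} for the effective operator together with the stability of the reduction (Proposition~\ref{bernard7'} identifies the low-lying spectrum of $H^{\epsilon,\kappa}$ with that of the dressed operator on an infinite-dimensional subspace), not deduced afresh from spectral proximity.
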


\begin{corollary}\label{C2}
 Assume $B^\Gamma\neq 0\,$. Under  Hypothesis \ref{Hyp-1}, for any integer $N\geq1$, there exist positive constants  $C_0,C_1,C_2 \,$, and 
$(\epsilon_0,\kappa_0)\in(0,\widetilde{b}/N)\times(0,1)$,  such that for
any $(\epsilon,\kappa)\in (0,\epsilon_0]\times (0,\kappa_0]$ there exist $a_0<b_0<a_1<\cdots <a_N<b_N$,
with $a_0= \inf\{\sigma(H^{\epsilon,\kappa})\}$ so that:
\begin{align}\label{feb-19-bis}
&\sigma(H^{\epsilon,\kappa})\cap \left [a_0,b_N\right ]\subset \bigcup_{k=0}^N [a_k,b_k]\,,\quad{\rm dim}
\big({\rm Ran} E_{[a_k,b_k]}(H^{\epsilon,\kappa}) \big )= + \infty\,, \\
& b_k-a_k\leq C_0\epsilon \big(\kappa + C_1\, \epsilon^{1/5}\big) \mbox{ for }0\leq k\leq N\,, \mbox{ and }  a_{k+1}-b_k\geq\frac{1}{C_2}  \epsilon \,, \mbox{ for }  0\leq k\leq N-1\,.\nonumber 
\end{align}
\end{corollary}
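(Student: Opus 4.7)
The proof combines Theorem \ref{mainTh} with the spectral gap results for magnetic pseudo-differential operators developed in \cite{CHP}, in the same spirit as for Corollary \ref{C1}, but with additional complications caused by the absence of the reflection symmetry $\theta \mapsto -\theta$ of $\lambda_0$ (a symmetry which holds when $B^\Gamma=0$ but not in general).

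First, I would apply Theorem \ref{mainTh}(i), which reduces the question to the spectrum of the magnetic pseudo-differential operator $\mathfrak{Op}^{\epsilon,\kappa}(\widetilde{\lambdabar^\epsilon})$ restricted to $[0,N\epsilon]$, the error incurred in Hausdorff distance being $C_0(\kappa\epsilon+\epsilon^2)$. Using Theorem \ref{mainTh}(ii), I would then approximate $\widetilde{\lambdabar^\epsilon}$ by $\widetilde{\lambda_0}$ on the neighbourhood $\Sigma_b$ of the minimum $\theta_0$, the difference being of order $\epsilon$ in every $C^m$-seminorm. Writing the Taylor expansion at the non-degenerate minimum as $\lambda_0(\theta_0+\eta)=Q(\eta)+R(\eta)$ with $Q$ the Hessian quadratic form and $R(\eta)=O(|\eta|^3)$, the magnetic quantization of $Q$ (with total field $\epsilon B_0+\kappa\epsilon B_\epsilon$) is a magnetic harmonic oscillator whose low-lying spectrum consists of Landau-like levels separated by gaps of order $\epsilon$. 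This produces the bound $a_{k+1}-b_k\geq \epsilon/C_2$, and it pins $a_0$ to $\inf\sigma(H^{\epsilon,\kappa})$ through the fact that the lowest Landau band lies just above $0$.

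The main step is then to invoke the perturbative spectral estimates for magnetic pseudo-differential operators proved in \cite{CHP} in order to bound each band width by $C_0\epsilon(\kappa+C_1\epsilon^{1/5})$. The $\kappa\epsilon$ contribution originates from the slowly varying magnetic perturbation $\kappa\epsilon B(\epsilon x)$. The $\epsilon^{1+1/5}$ contribution encodes the combined effect of the $O(\epsilon)$ symbol remainder $\widetilde{\lambdabar^\epsilon}-\widetilde{\lambda_0}$ together with the cubic correction $R$: at the low-energy momentum scale $\sqrt{\epsilon}$ dictated by the magnetic harmonic oscillator, the cubic term $R$ gives an $O(\epsilon^{3/2})$ perturbation which cannot be fully absorbed into the Landau-level spacing, and the bookkeeping of \cite{CHP} transforms this into a band-broadening of size $\epsilon^{6/5}$. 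The infinite-dimensional nature of each spectral projector transfers from the magnetic pseudo-differential operator, whose magnetic translation covariance forces purely essential spectrum.

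The main obstacle is the precise $\epsilon$-bookkeeping of the cubic correction's contribution, which is responsible for the degraded exponent $1/5$ in place of the sharper $1/3$ of Corollary \ref{C1}; this accounting was performed in \cite{CHP} for magnetic pseudo-differential operators of the same type, and its conclusions may be invoked here directly once the reduction of the problem to the spectrum of $\mathfrak{Op}^{\epsilon,\kappa}(\widetilde{\lambdabar^\epsilon})$ has been made via Theorem \ref{mainTh}. A minor technical point is that all error terms of the form $\kappa\epsilon+\epsilon^2$ coming from Theorem \ref{mainTh}(i) are harmlessly absorbed into the announced $C_0\epsilon(\kappa+C_1\epsilon^{1/5})$ estimate, provided $\epsilon_0$ is taken sufficiently small.
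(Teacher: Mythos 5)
Your overall architecture is the right one and matches what the paper does: reduce to the magnetic pseudo-differential operator $\mathfrak{Op}^{\epsilon,\kappa}(\widetilde{\lambdabar^\epsilon})$ via Theorem~\ref{mainTh} (more precisely via Proposition~\ref{magn-matrix-ke} and Proposition~\ref{C-FS}), Taylor-expand the effective symbol near its non-degenerate minimum, and then adapt the cutoff analysis of Section~4 of \cite{CHP}. You also correctly identify the cubic remainder (absent when $B^\Gamma=0$, because then $\lambda_0(\theta)=\lambda_0(-\theta)$) as the culprit that degrades the exponent from $1/3$ to $1/5$.

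However, your derivation of the specific exponent $1/5$ is not correct, and this is the only genuinely quantitative step in the proof. You argue that the natural momentum scale is $\sqrt{\epsilon}$, that the cubic remainder contributes $O(\epsilon^{3/2})$ there, and then assert that the \cite{CHP} bookkeeping ``transforms this into'' $\epsilon^{6/5}$; but $\epsilon^{3/2}$ is \emph{smaller} than $\epsilon^{6/5}$, so no passage from the first to the second is explained or explicable. In the actual argument of \cite{CHP}, and in the present paper's adaptation, one does \emph{not} work at the Landau length scale $\sqrt{\epsilon}$. Instead one introduces a spatial cutoff of radius $\delta$ around the minimum, where $\delta$ is related to $\epsilon$ by $\delta\sim\epsilon^{1/\mu}$ (condition~(4.20) of \cite{CHP}), and $\mu$ is then optimized. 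One competing error is the cubic remainder inside the cutoff region, of relative size $\delta^3/\epsilon$; the other is the error coming from the cutoff itself, of size $\epsilon/\delta^2$. When $B^\Gamma=0$ the first remainder is $\delta^4/\epsilon$ (quartic, by reflection symmetry), and the two errors balance at $\mu=3$, giving $\epsilon^{1/3}$; when $B^\Gamma\ne0$ the admissible $\mu$ is constrained to the interval $(2,3)$, and the midpoint choice $\mu=2.5$, i.e.\ $\delta\sim\epsilon^{2/5}$, balances $\delta^3/\epsilon\sim\epsilon^{1/5}$ against $\epsilon/\delta^2\sim\epsilon^{1/5}$. This balancing argument, not a naive evaluation of $R$ at scale $\sqrt{\epsilon}$, is what yields $\epsilon^{1/5}$, and it is the single modification to \cite{CHP} (appearing in the estimate for $\mathfrak{r}_{\delta,a}$ in Proposition~4.5 there) that the paper records. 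Your proposal should make this cutoff-optimization explicit rather than asserting the outcome.

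A secondary caution: you say the conclusions of \cite{CHP} ``may be invoked here directly,'' but this understates the adaptation needed: Theorem~\ref{mainTh}(ii) only controls $\lambdabar^\epsilon-\lambda_0$ on the neighbourhood $\Sigma_b$, not on all of $\mathbb{T}_*$ as in \cite{CHP}, so the cutoff parameter $\delta$ must be kept small enough that the ball of radius $\sim\delta$ around $\theta_0$ stays inside $\Sigma_b$; the paper makes this precise in its discussion of Corollary~\ref{C1}.
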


As a final remark, let us notice that in the case when the global minimum of the band $\lambda_{0}$ is attained at several non degenerate   points, one can repeat the argument given  in Section 4 of \cite{CHP} making cut-offs around each minimum and the results remain true.

\subsection{A roadmap of the proof} \label{ss2.2}
 
\begin{enumerate}
 \item We start by constructing a global smooth section $\hat{\psi}_0:\mathbb{T}_*\rightarrow\mathscr{F}$ of norm 1 vectors in $\mathscr{F}_\theta$ that coincides with the Bloch eigenvector $\hat{\phi}_0(\theta)$ on the neighborhood $\Sigma_b$ of the global minimum $\theta_0$ of $\lambda_0$. Then the range of its associated orthogonal projection $\pi:=\mathcal{U}_\Gamma^{-1} \, \hat{\pi}\,\mathcal{U}_\Gamma$ in $\mathcal{H}$ will have an orthonormal basis of localized functions given by all the  $\Gamma$-translations of $\psi_0:=\mathscr{U}_\Gamma^{-1}\hat{\psi}_0$.  This will be presented in Section \ref{S_q-W}. An important observation is that although the projection $\pi$ does not commute with $H^0$, it still contains information about the lowest part of the spectrum of $H^0$. This is because the bounded operator $\pi H^0\pi$ has exactly one Bloch band $\widetilde{\lambda}_0(\theta)$ which coincides with $\lambda_0(\theta)$ on $\Sigma_b$ (a fact proved in Proposition \ref{mod-Bloch-func}).

\item  We notice that the operator $\pi^\perp H^0\pi^\perp$ seen as acting in  $\pi^\perp\mathcal{H}$ is bounded from below by $b$. Up to a use of the Feshbach-Schur argument (see for example \cite{GS}) and keeping the spectral parameter $E$ in the interval $[0,E_0]$ with $E_0<b/2\,$, the operator $H^0-E$ is invertible if and only if the reduced operator 
\begin{equation}\label{defS}
S(E):=\pi (H^0-E)\pi -\pi H^0 \pi^\perp [\pi^\perp (H^0-E)\pi^\perp]^{-1}\pi^\perp H^0 \pi
\end{equation}
is invertible in $\pi \mathcal{H}\,$.\\
 Since we are only interested in values of $E$ which are close to zero, we may expand the reduced resolvent around $E=0$ and obtain:
\begin{align*}
S(E)&=\pi H^0\pi -\pi H^0 \pi^\perp [\pi^\perp H^0\pi^\perp]^{-1}\pi^\perp H^0 \pi -E (\pi+\pi H^0 \pi^\perp [\pi^\perp H^0\pi^\perp]^{-2}\pi^\perp H^0 \pi)\\
&+\mathcal{O}(E_0^2)\,.
\end{align*}
We observe that the operator coefficient of  $E$, i.e. 
\begin{equation}\label{defY}
Y:=\pi+\pi H^0 \pi^\perp [\pi^\perp H^0\pi^\perp]^{-2}\pi^\perp H^0 \pi
\end{equation}
is bounded and satisfies  $Y\geq \pi$. Thus $S(E)$ is invertible in $\pi{H}$ iff $Y^{-1/2}S(E) Y^{-1/2}$ is invertible in $\pi\mathcal{H}$, and some elementary arguments (see the proof of Proposition~\ref{bernard7'}) allow to obtain an estimation of the Hausdorff distance between  the spectrum of the 'dressed' operator 
\begin{equation}\label{deftildeH}
\widetilde{H}:=Y^{-1/2}\{\pi H^0\pi -\pi H^0 \pi^\perp [\pi^\perp H^0\pi^\perp]^{-1}\pi^\perp H^0 \pi\}Y^{-1/2}
\end{equation}
restricted to the interval $[0,E_0]$ and the spectrum of $H^0$ in the same interval.  

We notice that due to our construction of  a special Wannier-like basis  for the range of $\pi$, the 'dressed' operator $\widetilde{H}$ can be identified with a matrix acting in $\ell^2(\Gamma)$. The details will be presented in Section \ref{S.4} as a consequence of a more abstract theorem.

 \item The crucial step (explained below) is to extend these objects to the magnetic case and construct a magnetic matrix acting in  $\ell^2(\Gamma)$, whose spectrum has not only gaps, but it also lies close enough to the spectrum of the full operator ${H}^{\epsilon,\kappa}$ near zero such that some gaps must also appear in the spectrum of ${H}^{\epsilon,\kappa}$. In Section \ref{S.5} we use the procedure developed in \cite{Ne-RMP,HS,CHN,CIP} in
order to define a \textit{magnetic quasi-band projection}
$\pi^{\epsilon,\kappa}$ which is the magnetic version of $\pi$ and its range is again spanned by some localized functions which are indexed by $\Gamma$. An important observation is that $(\pi^{\epsilon,\kappa})^\perp{H}^{\epsilon,\kappa}(\pi^{\epsilon,\kappa})^\perp$ is still bounded from below by $b/2$ if $\epsilon$ is small enough.

\item We can repeat the above Feshbach-Schur argument for the pair $\big(H^{\epsilon,\kappa},\pi^{\epsilon,\kappa}\big)$ with $E_0=N\epsilon$ where $N$ is a fixed large number and construct a dressed 
\textit{magnetic matrix} $\widetilde{H}^{\epsilon,\kappa}$ acting in $\ell^2(\Gamma)$, 
such that its spectrum in the interval $[0,N\epsilon]$ is at a Hausdorff distance of order $\epsilon^2$ from
the spectrum of the full operator ${H}^{\epsilon,\kappa}$. Hence if we can prove that the matrix $\widetilde{H}^{\epsilon,\kappa}$ has gaps or order $\epsilon$ in the interval $[0,N\epsilon]$, the same must be true for $H^{\epsilon,\kappa}$.

\item In Section \ref{modif-B-function} we  construct a 'quasi-band' periodic and smooth function $\widetilde{\lambdabar^\epsilon}(\theta)$ which is close to $\lambda_0$ on $\Sigma_b$ such that the Hausdorf distance between the spectrum of $\widetilde{H}^{\epsilon,\kappa}$ (acting on $\ell^2(\Gamma)$) and the spectrum of the magnetic quantization $\mathfrak{Op}^{\epsilon,\kappa}(\widetilde{\lambdabar^\epsilon})$ (acting on $L^2(\mathcal{X})$) is of order $\epsilon \kappa$. This leads to \eqref{horiac2}. 
\end{enumerate}

\section{The quasi Wannier system}
\label{S_q-W}

Let us spell out  a straightforward but  important consequence of  Hypothesis \ref{Hyp-1}:\\
\begin{lemma}\label{newlemma}
There exists $\widetilde{b}>0$  such that, for every $0<b\leq\widetilde{b}$,  we have:
\begin{itemize}
\item The set $
\Sigma_b$
is diffeomorphic to the open unit disc in $\mathbb{R}^2$, has a smooth boundary and contains $\theta_0\,$. 
\item The function $\lambda_0$ is smooth on $\Sigma_b$ and its Hessian matrix is positive. 
\item For $\theta$ outside of $\Sigma_b$ we have  $\hat{H}^0(\theta)\geq b\,$.  
\end{itemize}
\end{lemma}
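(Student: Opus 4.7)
The plan is to combine Kato's analytic perturbation theory with the Morse lemma, and then to use compactness of the dual torus to globalize.

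First, I would argue that $\lambda_0$ is smooth on a whole neighborhood of $\theta_0$. The very assumption of a non-degenerate minimum (positive-definite Hessian) tacitly requires $\lambda_0$ to be at least $C^2$ at $\theta_0$, which together with the ordering convention $\lambda_0\leq \lambda_1$ recalled in Section \ref{BF-theory} forces $\lambda_0(\theta_0)=0$ to be a simple eigenvalue of $\hat{H}^0(\theta_0)$; otherwise the continuous selection $\lambda_0=\min(\mu_+,\mu_-)$ along the two analytic branches emerging from the crossing would generically only be Lipschitz at $\theta_0$, hence not $C^2$. Once simplicity at $\theta_0$ is secured, continuity of the Bloch spectrum isolates $\lambda_0$ from $\lambda_1$ on some open neighborhood, and Kato's analytic perturbation theory applied to the type-A family $\widetilde{H}^0(\theta)$ promotes $\lambda_0$ to a real-analytic function on a neighborhood $U_0$ of $\theta_0$. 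Continuity of its Hessian then allows me to shrink $U_0$ to a smaller neighborhood $U_1\ni \theta_0$ on which the Hessian of $\lambda_0$ remains positive definite.

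Next, I would exploit compactness of $\mathbb{T}_*\setminus U_1$ together with the uniqueness of the global minimum to set
\[
\mu\,:=\,\inf_{\theta\in\mathbb{T}_*\setminus U_1}\lambda_0(\theta)\,>\,0,
\]
and choose $\widetilde{b}\in(0,\mu)$. Then for every $0<b\leq \widetilde{b}$ the sublevel set $\Sigma_b$ is entirely contained in $U_1$, which immediately delivers the smoothness of $\lambda_0$ and the positivity of its Hessian on $\Sigma_b$.

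For the topological and geometric content of the first bullet, I would invoke the Morse lemma at the non-degenerate minimum $\theta_0$: there exist a neighborhood $W\subseteq U_1$ of $\theta_0$ and a smooth diffeomorphism $\phi:W\to\phi(W)\subseteq\mathbb{R}^2$ with $\phi(\theta_0)=0$ such that $\lambda_0\circ\phi^{-1}(y)=|y|^2$. After one further shrinkage of $\widetilde{b}$ so that $\{|y|^2<\widetilde{b}\}\subset\phi(W)$, one reads off $\phi(\Sigma_b)=\{y\in\mathbb{R}^2:|y|^2<b\}$, which is a genuine open disk with a smooth circular boundary; composing with the rescaling $y\mapsto y/\sqrt{b}$ yields the desired diffeomorphism of $\Sigma_b$ onto the open unit disk. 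Finally, for $\theta\notin\Sigma_b$ the inequality $\lambda_0(\theta)\geq b$ holds by definition of $\Sigma_b$, and since $\lambda_0(\theta)=\inf\sigma(\hat{H}^0(\theta))$ this translates into the operator inequality $\hat{H}^0(\theta)\geq b$.

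The main obstacle is the first step: one has to interpret ``non-degenerate minimum'' as entailing $C^2$ regularity of $\lambda_0$ at $\theta_0$, and then rule out a crossing with $\lambda_1$ at $\theta_0$ in order to unlock Kato's machinery and obtain smoothness on a full neighborhood. Everything past that point is fairly routine Morse theory combined with compactness of $\mathbb{T}_*$.
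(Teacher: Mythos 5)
The paper itself offers no proof for this lemma, introducing it with the words ``a straightforward but important consequence of Hypothesis~\ref{Hyp-1}'' and moving directly on. Your argument via Kato's type-A perturbation theory, the Morse lemma at $\theta_0$, and a compactness bound on $\mathbb{T}_*\setminus U_1$ is precisely the standard route one would spell out, and the Morse lemma step, the choice $\widetilde{b}<\mu$, and the reduction of the third bullet to $\lambda_0=\inf\sigma(\hat H^0(\theta))$ are all correct.

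The one step that is not rigorous is your derivation of simplicity of $\lambda_0(\theta_0)$ from ``non-degenerate minimum''. You argue that a degenerate lowest eigenvalue at $\theta_0$ would \emph{generically} give only Lipschitz behaviour of $\lambda_0$ and hence rule out $C^2$ regularity; but ``generically'' is doing real work here, and the implication is false as stated. For a $2\times 2$ toy block such as $\mathrm{diag}\bigl(|\theta-\theta_0|^2,\,2|\theta-\theta_0|^2\bigr)$ the two branches cross at $\theta_0$, both are real-analytic, and the lower branch still has a genuine non-degenerate minimum. More seriously, one can cook up crossings where $\lambda_0$ is $C^2$ but not $C^\infty$ at $\theta_0$ (e.g.\ $\lambda_0(\theta)=|\theta-\theta_0|^2+|\theta-\theta_0|^3$), so $C^2$ with positive Hessian does not in general yield the smoothness on $\Sigma_b$ that the lemma asserts. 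In other words, the ``unlocking of Kato's machinery'' you rely on cannot be deduced from the Hessian condition alone; the correct reading of Hypothesis~\ref{Hyp-1}, forced anyway by the construction in Section~\ref{S_q-W} (which immediately uses the local smooth section $\hat\phi_0|_{\overline{\Sigma}_b}$), is that $\lambda_0$ is implicitly assumed to be simple, hence smooth, in a neighbourhood of $\theta_0$. Once you take that as part of the hypothesis rather than try to derive it, the remainder of your proof is sound.
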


Let us fix some $b\in(0,\widetilde{b})$ and consider the local smooth section $\left.\hat{\phi}_0\right|_{\overline{\Sigma}_b}:\overline{\Sigma}_b\rightarrow\left.\mathscr{F}\right|_{\overline{\Sigma}_b}$.  

\begin{proposition}
There exists a global smooth section of norm 1 vectors
\beq
\hat{\psi}_0:\mathbb{T}^*\rightarrow\mathscr{F}\,,
\eeq
such that $\hat{\psi}_0(\theta)=\hat{\phi}_0(\theta)$ for any $\theta\in\Sigma_b\,$.
\end{proposition}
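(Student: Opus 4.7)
The plan is to extend the local smooth section $\hat{\phi}_0$ to a global smooth unit section by gluing it against a fixed global reference unit section via a smooth cutoff, exploiting the infinite-dimensionality of the fibers $\mathscr{F}_\theta$ to rule out cancellation in the overlap region.

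First, I would use the slack provided by Lemma \ref{newlemma} to perform the gluing inside an open collar. Fix any $b'\in(b,\widetilde{b})$; then $\lambda_0$ stays smooth, simple, and isolated from the rest of the spectrum on the neighbourhood $\Sigma_{b'}$, so the associated Riesz projection is smooth there, and since $\Sigma_{b'}$ is diffeomorphic to a disk (hence simply connected) a coherent smooth choice of phase yields a smooth unit section $\hat{\phi}_0:\Sigma_{b'}\to\mathscr{F}$ extending the given one. Next, I would build a global smooth unit reference section $\hat{\eta}_0:\mathbb{T}_*\to\mathscr{F}$ by choosing $\chi_0\in C^\infty_c(\mathbb{R}^2)$ with $\supp\chi_0$ compactly contained in the interior of $E_\Gamma$ and $\|\chi_0\|_{L^2(\mathbb{R}^2)}=1$, and setting
$$\hat{\eta}_0(\theta)(x):=\sum_{\gamma\in\Gamma}e^{i\langle\theta,\gamma\rangle}\chi_0(x-\gamma).$$
This section is smooth in $\theta$ and $\Gamma_*$-periodic, and the disjointness of the supports of the translates $\chi_0(\cdot-\gamma)$ gives $\|\hat{\eta}_0(\theta)\|_{\mathscr{F}_\theta}=\|\chi_0\|_{L^2}=1$ for every $\theta\in\mathbb{T}_*$.

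I would then pick a smooth cutoff $\varphi:\mathbb{T}_*\to[0,1]$ satisfying $\varphi\equiv 1$ on $\overline{\Sigma}_b$ and $\supp\varphi\subset\Sigma_{b'}$, and form
$$g(\theta):=\varphi(\theta)\,\hat{\phi}_0(\theta)+(1-\varphi(\theta))\,\hat{\eta}(\theta),$$
with the convention that the first term vanishes outside $\Sigma_{b'}$; here $\hat{\eta}$ will be chosen momentarily from a finite-parameter family of unit reference sections. Provided $g$ is nowhere zero, the normalization $\hat{\psi}_0(\theta):=g(\theta)/\|g(\theta)\|_{\mathscr{F}_\theta}$ yields a smooth global unit section with $\hat{\psi}_0=\hat{\phi}_0$ on $\Sigma_b$, as required.

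The main obstacle is the nowhere-vanishing of $g$. Since both $\hat{\phi}_0(\theta)$ and $\hat{\eta}(\theta)$ are unit vectors, $g(\theta)=0$ forces simultaneously $\varphi(\theta)=1/2$, which confines $\theta$ to a smooth compact $1$-dimensional level curve $C\subset\Sigma_{b'}\setminus\overline{\Sigma}_b$, and the pointwise anti-parallelism $\hat{\phi}_0(\theta)=-\hat{\eta}(\theta)$ in $\mathscr{F}_\theta$. To defeat this I would pick a second $\chi_1\in C^\infty_c(E_\Gamma)$ with $\langle\chi_0,\chi_1\rangle_{L^2}=0$ and $\|\chi_1\|_{L^2}=1$, and build $\hat{\eta}_1$ by the same formula so that $\hat{\eta}_1\perp\hat{\eta}_0$ fiberwise. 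Replacing $\hat{\eta}$ by $z_0\hat{\eta}_0+z_1\hat{\eta}_1$ with $(z_0,z_1)\in S^3\subset\mathbb{C}^2$, for each $\theta\in C$ the anti-parallel condition either determines $(z_0,z_1)\in S^3$ uniquely (precisely when $\hat{\phi}_0(\theta)$ lies in the two-dimensional complex span of $\hat{\eta}_0(\theta),\hat{\eta}_1(\theta)$) or is impossible otherwise; in either case the bad subset of $S^3$ traced out by $\theta\in C$ has dimension at most one, so a generic $(z_0,z_1)\in S^3$ avoids it and produces the desired nowhere-vanishing $g$.
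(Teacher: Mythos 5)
Your proof is correct but follows a genuinely different route from the paper's. The paper's proof picks two orthogonal reference Floquet sections $\hat f_1,\hat f_2$ built from disjointly-supported bump functions, observes that $\hat\phi(\theta_0)$ cannot be nearly parallel to both, selects the one (say $\hat f_1$) with $|\langle\hat\phi(\theta_0),\hat f_1(\theta_0)\rangle|\le 1/\sqrt2$, propagates this bound by continuity to a small ball $B_r(\theta_0)\subset\overline\Sigma_b$, and glues there; the lower bound $\|\hat h(\cdot,\theta)\|^2\ge 1/8$ then follows from the cosine formula for $\|\,g\hat\phi+(1-g)\hat f_1\,\|^2$. This is short, but as written it only produces a $\hat\psi_0$ that agrees with $\hat\phi_0$ on $B_{r/2}(\theta_0)$, not on the full $\Sigma_b$; the authors implicitly re-shrink $b$ afterwards (which is harmless for their purposes). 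Your argument instead extends $\hat\phi_0$ slightly to $\Sigma_{b'}$ via triviality of the rank-one bundle over a disk, glues on a genuine collar $\Sigma_{b'}\setminus\overline\Sigma_b$ where $\varphi\equiv 1$ on $\overline\Sigma_b$, and defeats the vanishing of $g$ by a transversality count: the vanishing forces $\varphi=1/2$ and anti-parallelism, the bad parameters trace out a Lipschitz image of a compact set of dimension $\le 1$ inside $S^3$ (dimension $3$), so a generic $(z_0,z_1)$ avoids it. This buys agreement on the full $\Sigma_b$ exactly as the proposition states, at the cost of two reference sections and a Sard-type argument rather than the paper's one-line pigeonhole at $\theta_0$. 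Two small points worth noting explicitly: (i) you should say that the cutoff $\varphi$ can be chosen so that $1/2$ is a regular value (or simply observe that even without this the bad set is the image of a smooth map on a compact subset of the two-dimensional collar, hence still has dimension $\le 2<3$); (ii) the extension of the given $\hat\phi_0$ from $\overline\Sigma_b$ to $\Sigma_{b'}$ as a unit section with the same values on $\overline\Sigma_b$ requires matching the given phase, which works because $\Sigma_{b'}$ is simply connected and $\overline\Sigma_b$ connected, so the phase discrepancy is a smooth $\mathbb T$-valued function that extends. Neither issue is a gap, just a detail to spell out.
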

\begin{proof}
Let $f_j\in C_0^\infty(\mathcal{X})$, $j\in\{1,2\}$, with unit norms in $L^2(\mathcal{X})$ such that both of them have support in $|x|\leq 1/10$ and moreover, $f_1(x)f_2(x)=0$ for all $x$. Let us define
$$\hat{f_j}(x,\theta)=\sum_{\gamma\in \Gamma} e^{i<\theta, \gamma>} f_j(x-\gamma).$$
These functions are smooth in $x$ and belong to $\mathcal{F}_\theta$. We can check that $\hat{f_1}(x,\theta)\hat{f_2}(x,\theta)=0$ for all $x\in E$, thus their scalar product in $\mathcal{F}_\theta$ equals zero. Moreover, both have norm one at fixed $\theta$.

When $\theta$ is close to $\theta_0$, the eigenvector $\hat{\phi}(x,\theta)$ can be chosen to be smooth as a function of $x$ due to elliptic regularity. It is also smooth as a function of $\theta$ on a small neighborhood of $\theta_0$. Now the vector $\hat{\phi}(\cdot,\theta_0)$ is certainly not parallel with both $\hat{f_j}(\cdot ,\theta_0)$ and there must exist a $j$ (assume without loss of generality that $j=1$) such that 
$$\vert \langle \hat{\phi}(\theta_0),\hat{f_1}(\cdot ,\theta_0)\rangle\vert \leq 1/\sqrt{2}\,.$$
 Due to continuity in $\theta$, there exists a small ball $B_r(\theta_0)\subset \overline{\Sigma}_b$ where we have 
 
 $$\vert \langle \hat{\phi}(\theta),\hat{f_1}(\cdot ,\theta)\rangle\vert \leq 3/4\,,\quad \forall \theta\in B_r(\theta_0)\,.$$

Let $0\leq g(\theta)\leq 1$ with support in $B_r(\theta_0)$, equal to one on $B_{r/2}(\theta_0)$. Define $$\hat{h}(x,\theta):= g(\theta) \hat{\phi}(x,\theta) +(1-g(\theta))\hat{f}(x,\theta)\,.$$ Then $\Vert \hat{h}(\cdot,\theta)\Vert^2\geq 1/8$ hence $\hat{\psi}_0(x,\theta):=\hat{h}(x,\theta)/\Vert \hat{h}(\cdot,\theta)\Vert\in \mathcal{F}_\theta$ is a smooth extension of $\hat{\phi}$ in both $x$ and $\theta$. 
\end{proof}

We are now ready to define the principal {\it quasi Wannier function} $\psi_0$ by the formula
\beq \label{pseudo-W-f} 
\psi_{0}:=\mathcal{U}_\Gamma^{-1}\hat{\psi}_{0}\in L^2(\X)\,,
\eeq
where $\mathcal U_\Gamma$ has been introduced in \eqref{UGamma}.
The above function has norm $1$ in $\mathcal{H}$, it is smooth on $\X$ and has fast decay together with all its derivatives. Also, the corresponding family
\beq\label{R-q-W-basis}
\psi_\gamma:=\tau_{\gamma}\psi_0\in
L^2(\mathcal{X})\,,\qquad \gamma\in\Gamma\,,
\eeq
forms an orthonormal system in $L^2(\mathcal{X})$ that we shall call the \textit{quasi Wannier system associated with  the energy window $[0,b]$}. We shall also  denote by
$\pi\in\mathcal L(\mathcal{H})$ the orthogonal projection 
on the closed linear subspace generated by the quasi Wannier system: 
\beq\label{horiac5}
\mathcal{H}_0:=\overline{{\rm Span}\{\psi_\gamma:\; \gamma\in \Gamma\}}\subset L^2(\mathcal{X}).
\eeq
We shall use the notation:
\beq\label{hatpi}
\hat{\pi}(\theta):=|\hat{\psi}_0(\theta)\rangle\langle\hat{\psi}_0(\theta)|.
\eeq
The following statement can be proven by the same arguments as Proposition 3.12 of \cite{CHP}. 

\begin{proposition}\label{free-ps-W}
The vector $\psi_0$ defined in \eqref{pseudo-W-f} belongs to the Schwartz space $\mathcal{S}(\mathcal{X})$. Moreover, if we consider the Weyl symbol  $p_0$ in $\mathscr{S}(\Xi)$  of the orthogonal
projection $|\psi_0\rangle\langle\psi_0|$
$$
p_0(x,\xi)=(2\pi)^{-1}\int_\mathcal{X}e^{i<\xi,v>}\psi_0(x+(v/2))
\overline{\psi_0(x-(v/2))}\,dv\,,
$$
then the series $\underset{\gamma\in\Gamma}{\sum}p_0(x-\gamma,\xi)$
converges pointwise with all its
derivatives to a $\Gamma$-periodic symbol $p\in S^{-\infty}(\Xi)$  and we have 
\beq \label{defpi}
 \pi =\mathfrak{Op}(p)\,,\,  K_p(x,y):=\sum_{\gamma\in \Gamma}\psi_0(x-\gamma)\overline{\psi_0(y-\gamma)}\,, \quad 
\mathscr{U}_\Gamma\pi\mathscr{U}_\Gamma^{-1}=\int^\oplus_
{\mathbb{T}_*} |\hat{\psi}_0(\theta)\rangle \langle \hat{\psi}_0(\theta)|d\theta\,,
\eeq
where $K_p$ denotes the distribution kernel of $\pi$.
\end{proposition}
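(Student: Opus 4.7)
The plan is to argue in four steps, paralleling Proposition 3.12 of \cite{CHP}, whose structure transfers here essentially unchanged once the Schwartz character of $\psi_0$ is established.

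First, I would establish $\psi_0 \in \mathscr{S}(\mathcal{X})$. By construction $\hat{\psi}_0$ is a smooth section: smoothness in $\theta$ is built into the gluing procedure used to define $\hat{\psi}_0$, and smoothness in $x$ follows from elliptic regularity on $\Sigma_b$ (where $\hat{\psi}_0 = \hat{\phi}_0$ is a Bloch eigenfunction of $\hat{H}^0(\theta)$) together with the explicit smooth form of the extension outside $\Sigma_b$. The Floquet inversion \eqref{expl-FBZ} then expresses $\psi_0$ as a smooth $\theta$-integral; joint $C^\infty$ regularity in $(x,\theta)$ together with compactness of $\mathbb{T}_*$ and repeated integration by parts in $\theta$ against $e^{i\langle\theta,\gamma\rangle}$ yield rapid decay of $\psi_0$ and of all its derivatives at infinity, hence $\psi_0 \in \mathscr{S}(\mathcal{X})$.

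Second, the integral kernel of $|\psi_0\rangle\langle\psi_0|$ is $\psi_0(x)\overline{\psi_0(y)}$, which is Schwartz on $\mathcal{X}\times\mathcal{X}$. Its Weyl symbol $p_0$ is obtained by the partial Fourier transform in the off-diagonal variable $v = x-y$ after the change of variables $(x,y) \mapsto ((x+y)/2, v)$, an operation that preserves $\mathscr{S}$; thus $p_0 \in \mathscr{S}(\Xi)$. For the periodized series, Schwartz decay of $p_0$ gives estimates $|\partial_x^\alpha\partial_\xi^\beta p_0(x,\xi)| \leq C_{\alpha,\beta,N}\langle x\rangle^{-N}\langle\xi\rangle^{-N}$ for every $N$, so that $\sum_{\gamma\in\Gamma} p_0(x-\gamma,\xi)$ and all its term-by-term derivatives converge absolutely and uniformly on compact subsets, producing a $\Gamma$-periodic $p$ with uniform bounds $|\partial_x^\alpha\partial_\xi^\beta p(x,\xi)| \leq C_{\alpha,\beta,N}\langle\xi\rangle^{-N}$, which is the defining property of $S^{-\infty}(\Xi)$ in the $\xi$-variable for a periodic $x$-symbol.

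Third, the operator identity and Bloch decomposition follow from orthonormality of the quasi-Wannier system. Since $\{\psi_\gamma\}_{\gamma\in\Gamma}$ is an orthonormal basis of $\mathcal{H}_0$, one has $\pi = \sum_\gamma |\psi_\gamma\rangle\langle\psi_\gamma|$ in the strong sense, whose distribution kernel is exactly $K_p(x,y)$. The Weyl symbol of each rank-one summand $|\psi_\gamma\rangle\langle\psi_\gamma|$ is the translate $p_0(\cdot - \gamma, \cdot)$ by translation covariance of the Weyl quantization in the position variable, so summation gives $\pi = \mathfrak{Op}(p)$. Finally, a direct computation using \eqref{UGamma} yields $[\mathcal{U}_\Gamma\psi_\gamma](x,\theta) = e^{-i\langle\theta,\gamma\rangle}\hat{\psi}_0(x,\theta)$; substituting into $\pi = \sum_\gamma|\psi_\gamma\rangle\langle\psi_\gamma|$ and using Parseval on $\Gamma$ fibrewise produces the direct-integral formula $\mathscr{U}_\Gamma\pi\mathscr{U}_\Gamma^{-1} = \int^\oplus_{\mathbb{T}_*} |\hat{\psi}_0(\theta)\rangle\langle\hat{\psi}_0(\theta)|\, d\theta$.

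The only genuine obstacle is bookkeeping: one must verify that term-by-term differentiation and summation over $\Gamma$ respect the Schwartz-type estimates uniformly, which reduces to standard bounds on lattice sums of the form $\sum_\gamma \langle x-\gamma\rangle^{-N}$, and to justify the strong convergence of $\sum_\gamma |\psi_\gamma\rangle\langle\psi_\gamma|$ to $\pi$. Both are routine consequences of the rapid decay of $\psi_0$ and coincide with the arguments carried out in detail in \cite{CHP}.
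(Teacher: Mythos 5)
Your proof is correct and reconstructs the argument that the paper defers to Proposition 3.12 of \cite{CHP}: Schwartz decay of $\psi_0$ from joint smoothness of $\hat\psi_0$ and integration by parts against $e^{i\langle\theta,\gamma\rangle}$, Schwartz regularity of $p_0$ as the Weyl symbol of a rank-one projection with Schwartz kernel, the $\Gamma$-periodization giving $p\in S^{-\infty}(\Xi)$ via lattice sums of $\langle x-\gamma\rangle^{-N}$, and the direct-integral formula from $[\mathcal{U}_\Gamma\psi_\gamma](\cdot,\theta)=e^{-i\langle\theta,\gamma\rangle}\hat\psi_0(\cdot,\theta)$ together with Parseval on $\Gamma$. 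This is the same route; no gap.
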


\begin{proposition}\label{bd-Ham0-band}
With $H^0$ introduced in \eqref{H0} and $\pi$ in \eqref{defpi}, 
 we have  $\pi\mathcal{H}\subset\mathcal{D}(H^0)$, 
$H^0\pi\in\mathcal L(\mathcal{H})$ and $\pi H^0$ has a bounded closure.
\end{proposition}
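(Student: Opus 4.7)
The plan is to exploit the fact that $\pi$ commutes with $\Gamma$-translations (because $\{\psi_\gamma\}_{\gamma\in\Gamma}$ is permuted by them), hence admits the Bloch--Floquet fibre decomposition $\mathscr{U}_\Gamma\pi\mathscr{U}_\Gamma^{-1}=\int^\oplus_{\mathbb{T}_*}\hat{\pi}(\theta)\,d\theta$ with $\hat{\pi}(\theta)=|\hat\psi_0(\theta)\rangle\langle\hat\psi_0(\theta)|$ already recorded in \eqref{defpi}. Since $H^0$ enjoys the same invariance and is fibred as in \eqref{fibop}, both statements reduce to suitable uniform-in-$\theta$ bounds at the level of fibres, once one checks that $\hat\psi_0(\theta)\in\mathscr{H}^2_\theta$ for every $\theta\in\mathbb{T}_*$.

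First I would observe that the fibre $\hat\psi_0(\theta)$ is jointly smooth in $(x,\theta)$: on $\Sigma_b$ it coincides with the Bloch eigenvector $\hat\phi_0(\theta)$, which is smooth in $x$ by elliptic regularity and smooth in $\theta$ because $\lambda_0$ is non-degenerate there; outside $\Sigma_b$ it is the normalized interpolation built in the previous proposition from smooth ingredients with a denominator bounded away from zero. In particular $\hat\psi_0(\theta)\in\mathscr{H}^2_\theta$ for every $\theta$, so $\hat{H}^0(\theta)\hat\psi_0(\theta)\in\mathscr{F}_\theta$ is well defined, and joint smoothness together with the smoothness of the $\Gamma$-periodic coefficients $V_\Gamma$ and $A^\Gamma$ ensures that the map $\theta\mapsto \hat{H}^0(\theta)\hat\psi_0(\theta)$ is norm-continuous on the compact torus $\mathbb{T}_*$. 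Setting $M:=\sup_{\theta\in\mathbb{T}_*}\|\hat{H}^0(\theta)\hat\psi_0(\theta)\|_{\mathscr{F}_\theta}<\infty$ and using $\|\hat\psi_0(\theta)\|=1$, for every $\hat f\in\mathscr{F}$ one obtains
\[
\|\hat{H}^0(\theta)\hat{\pi}(\theta)\hat f(\theta)\|_{\mathscr{F}_\theta}=|\langle\hat\psi_0(\theta),\hat f(\theta)\rangle|\cdot\|\hat{H}^0(\theta)\hat\psi_0(\theta)\|\leq M\,\|\hat f(\theta)\|\,.
\]
Integrating over $\theta$ and applying $\mathscr{U}_\Gamma^{-1}$ then yields $\pi\mathcal{H}\subset\mathcal{D}(H^0)$ together with $H^0\pi\in\mathcal{L}(\mathcal{H})$ and $\|H^0\pi\|\leq M$.

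For the bounded closure of $\pi H^0$ I would argue by duality. The operator $\pi H^0$ is defined on the dense domain $\mathcal{D}(H^0)$, and for every $u\in\mathcal{D}(H^0)$ and every $v\in\mathcal{H}$ the previous step guarantees $\pi v\in\mathcal{D}(H^0)$, so the self-adjointness of $H^0$ and of $\pi$ give
\[
\langle \pi H^0 u,v\rangle=\langle H^0 u,\pi v\rangle=\langle u,H^0\pi v\rangle.
\]
Hence $(\pi H^0)^*\supseteq H^0\pi$, the adjoint is everywhere defined and bounded, and consequently $\pi H^0$ is closable with closure $\overline{\pi H^0}=(H^0\pi)^*\in\mathcal{L}(\mathcal{H})$.

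The only nontrivial input is the uniform-in-$\theta$ bound on $\|\hat{H}^0(\theta)\hat\psi_0(\theta)\|$, which is precisely where the global $(x,\theta)$-smoothness of the quasi-Wannier section $\hat\psi_0$ constructed in the previous proposition plays a decisive role; compactness of $\mathbb{T}_*$ then upgrades pointwise continuity into the required uniform operator bound.
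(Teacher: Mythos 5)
Your proof is correct, and it is a genuinely different (though, in the end, Fourier-dual) route from the one taken in the paper. The paper argues entirely in position space: since $\psi_0\in\mathscr{S}(\X)$ and $H^0$ is a differential operator with polynomially bounded coefficients, $H^0\psi_0\in\mathscr{S}(\X)$, and the rapid decay of $\psi_0$ (and hence of $H^0\psi_\gamma$) gives the boundedness of the matrix $\bigl(\langle H^0\psi_\alpha,H^0\psi_\beta\rangle\bigr)_{\alpha,\beta\in\Gamma}$ on $\ell^2(\Gamma)$, which is what ultimately yields $\mathcal{H}_0\subset\mathcal{D}(H^0)$ and $H^0\pi\in\mathcal{L}(\mathcal{H})$. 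You instead pass to the Bloch--Floquet fibres and exploit joint $(x,\theta)$-smoothness of $\hat\psi_0$ plus compactness of $\mathbb{T}_*$ to get a uniform bound $M=\sup_\theta\|\hat H^0(\theta)\hat\psi_0(\theta)\|$; together with the rank-one structure of $\hat\pi(\theta)$ this gives $\|H^0\pi\|\leq M$. The two are equivalent in substance (the function $\theta\mapsto\|\hat H^0(\theta)\hat\psi_0(\theta)\|^2$ is the Fourier series of $\gamma\mapsto\langle H^0\psi_0,H^0\psi_\gamma\rangle$), but yours makes the mechanism --- compactness of the dual torus plus smoothness of the quasi-Wannier fibre section --- more explicit, at the small cost of having to verify that $\hat\psi_0(\theta)\in\mathscr{H}^2_\theta$ and that the $\theta$-dependence of $\hat H^0(\theta)\hat\psi_0(\theta)$ is continuous in norm (which you correctly reduce to the smoothness of the construction of $\hat\psi_0$). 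The closing duality argument for the bounded closure of $\pi H^0$ coincides with the paper's.
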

\begin{proof}
Due to the properties of $\psi_0$ listed after its definition in \eqref{pseudo-W-f} and due to the fact that $H^0$ is a differential operator with polynomially bounded coefficients, the quasi Wannier function $\psi_0$ belongs to $\mathcal{D}(H^0)$. Moreover, due to the rapid decay of the quasi Wannier function and its derivatives, it follows that $\mathcal{H}_0$ (defined in \eqref{horiac5})
belongs in fact to $\mathcal{D}(H^0)$. We conclude that $\mathcal{H}_0=\pi\mathcal{H}\subset\mathcal{D}(H^0)$ and thus $H^0\pi$ is a well defined bounded linear operator on $\mathcal{H}$. Moreover, the operator $\pi H^0:\mathcal{D}(H^0)\rightarrow\mathcal{H}$ is a restriction of the adjoint $\big(H^0\pi\big)^*$ and thus has a bounded closure in $\mathcal{H}\,$.
\end{proof}

\section{An abstract reduction argument}\label{S.4}
We consider a more abstract setting and give the details of 
the arguments sketched in the third item of Subsection \ref{ss2.2}  that gives a procedure to apply the Feshbah-Schur method in a more complicate situation when the commutator $\big[H,\pi\big]$ is no longer controlled by a small parameter (as was the situation in our previous paper \cite{CHP}).
\begin{hypothesis}\label{Prop-FS}
We consider a triple $(H,\Pi,\beta)$ where $H$ is  a positive
self-adjoint operator,   $\Pi$ is an orthogonal projection such that 
$H\Pi$ is bounded (as well as $\Pi H$), $\beta >0$ and
\begin{equation}\label{eq:lb}
\Pi^\bot H\Pi^\bot\geq 2\beta \Pi^\bot\,.
\end{equation}
\end{hypothesis}
This implies that $\Pi^\bot (H-E)\Pi^\bot$ is invertible  in 
$\Pi^\bot\mathcal{H}$  for 
$E\in[0,2\beta)$ and we denote by $R_\bot(E)\in\mathcal L(\Pi^\bot\mathcal{H})$ its inverse. The spectral theorem gives:
\beq\label{E0}
  \sup_{E\in [0,\beta ]}\Vert R_\bot(E)\Vert \leq \beta^{-1}.
\eeq
Starting from
\beq\label{dress}
Y:=\Pi+\Pi H\Pi^\bot R_\bot(0)^2\Pi^\bot
H\Pi\,,
\eeq
and noting  that
$$
  Y\geq \Pi\,,
$$
we introduce
\beq\label{horiac6}
\widetilde{H}:= Y^{-1/2}\left[\Pi H\Pi-\Pi H\Pi^\bot R_\bot(0)\Pi^\bot
H\Pi\right]Y^{-1/2}\in\mathcal L(\Pi\mathcal{H}).
\eeq
\begin{proposition}\label{bernard7'}
For any $\beta' < \beta$ we have:
\beq
d_H\{\sigma(H)\cap [0,\beta'],\sigma(\widetilde{H})\cap [0,\beta']\}\leq 
\, \| H\Pi \|^2\ (\beta')^2 \beta^{-3}\,,
\eeq
where $\widetilde H$ is defined in \eqref{horiac6}.
\end{proposition}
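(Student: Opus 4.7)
I would use the Feshbach--Schur (Schur complement) reduction. The hypothesis \eqref{eq:lb} and the bound \eqref{E0} guarantee that $\Pi^\perp(H-E)\Pi^\perp$ is invertible on $\Pi^\perp\mathcal{H}$ for every $E\in[0,\beta']$, with inverse $R_\perp(E)$ of norm at most $\beta^{-1}$. The Feshbach--Schur principle then identifies $\sigma(H)\cap[0,\beta']$ with the set of $E\in[0,\beta']$ making the Schur complement
\[
S(E):=\Pi(H-E)\Pi-\Pi H\Pi^\perp R_\perp(E)\Pi^\perp H\Pi
\]
singular on $\Pi\mathcal{H}$. This is the same block reduction used in \cite{CHP}, but here the strategy has to survive the fact that $[H,\Pi]$ is not small.

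Next I would expand $R_\perp(E)$ around $E=0$ by the second resolvent identity,
\[
R_\perp(E)=R_\perp(0)+E\,R_\perp(0)^2+E^2\,R_\perp(0)^2R_\perp(E),
\]
and substitute into $S(E)$ to obtain the exact identity
\[
S(E)=A-EY-E^2R_E,\qquad A:=\Pi H\Pi-\Pi H\Pi^\perp R_\perp(0)\Pi^\perp H\Pi,
\]
with $Y$ as in \eqref{dress} and $R_E:=\Pi H\Pi^\perp R_\perp(0)^2R_\perp(E)\Pi^\perp H\Pi$, self-adjoint (the resolvent factors in $\Pi^\perp\mathcal{H}$ commute as functions of $\Pi^\perp H\Pi^\perp$) with $\|R_E\|\le\|H\Pi\|^2\beta^{-3}$. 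Because $Y\ge\Pi$, the operators $Y^{\pm1/2}$ are well defined on $\Pi\mathcal{H}$ with $\|Y^{-1/2}\|\le1$, and one factors
\[
S(E)=Y^{1/2}\bigl[\widetilde{H}-E\Pi-E^2K(E)\bigr]Y^{1/2},\qquad K(E):=Y^{-1/2}R_EY^{-1/2},
\]
with $K(E)$ self-adjoint and $\|K(E)\|\le\|H\Pi\|^2\beta^{-3}$. Combined with Feshbach--Schur this yields the key characterization
\[
E\in\sigma(H)\ \Longleftrightarrow\ E\in\sigma\bigl(\widetilde{H}-E^2K(E)\bigr),\qquad E\in[0,\beta'].
\]

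To conclude the Hausdorff bound I would compare $\widetilde{H}$ with the self-adjoint perturbation $\widetilde{H}-E^2K(E)$, whose operator-norm deviation is at most $(\beta')^2\|H\Pi\|^2\beta^{-3}$ on $[0,\beta']$. Forward inclusion: for $E\in\sigma(H)\cap[0,\beta']$, Weyl's criterion applied to the characterization above produces a unit sequence $v_n\in\Pi\mathcal{H}$ with $\|(\widetilde{H}-E)v_n\|\le E^2\|K(E)\|+o(1)$, so $\dist(E,\sigma(\widetilde{H}))\le(\beta')^2\|H\Pi\|^2\beta^{-3}$. Reverse inclusion: for $\mu\in\sigma(\widetilde{H})\cap[0,\beta']$ with Weyl sequence $v_n\in\Pi\mathcal{H}$, I would form the Feshbach-extended vectors
\[
u_n:=Y^{-1/2}v_n-R_\perp(\mu)\Pi^\perp H\,Y^{-1/2}v_n\in\mathcal{H},
\]
for which $(H-\mu)u_n$ has $\Pi^\perp$-component $0$ and $\Pi$-component $S(\mu)Y^{-1/2}v_n=Y^{1/2}\bigl[(\widetilde{H}-\mu)v_n-\mu^2K(\mu)v_n\bigr]$; estimating this against $\|u_n\|\ge\|Y^{-1/2}v_n\|\ge\|Y^{1/2}\|^{-1}$ yields $\dist(\mu,\sigma(H))\le(\beta')^2\|H\Pi\|^2\beta^{-3}$ up to constants arising from $\|Y\|$.

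The main obstacle is the reverse inclusion: because $E\in\sigma(\widetilde{H}-E^2K(E))$ is an implicit condition in $E$, naive perturbation theory cannot simply map a spectral value of $\widetilde{H}$ into one of $H$ by norm-comparing two fixed operators. The Feshbach-extended Weyl-sequence construction circumvents this by producing the approximate eigenvector of $H$ in one stroke, but one must carefully balance the $Y^{\pm1/2}$ bounds appearing when estimating both $\|u_n\|$ from below and $\|(H-\mu)u_n\|$ from above, to recover the clean power $\|H\Pi\|^2\beta^{-3}$ in the statement.
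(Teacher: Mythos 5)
Your proposal matches the paper's proof in all structural respects: Feshbach--Schur reduction, the second-order resolvent expansion producing the remainder (your $E^2 R_E$ is exactly the paper's $\Pi H X(E) H\Pi$ with $X(E)=E^2R_\bot(0)^2R_\bot(E)$, bundled differently), and the $Y^{\pm1/2}$-dressing that converts $S(E)$ into $\widetilde H-E$ plus a small remainder. Where you diverge is the concluding mechanism: the paper runs a Neumann-series argument on operator products in both directions, whereas you propose Weyl sequences. The forward inclusion (spectrum of $H$ lying near spectrum of $\widetilde H$) is unproblematic with either method, since $\|K(E)\|\le\|R_E\|\le\|H\Pi\|^2\beta^{-3}$ uses only $\|Y^{-1/2}\|\le1$.

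The reverse inclusion, as you have written it, does \emph{not} deliver the stated constant: your estimate retains a residual factor of $\|Y\|=\|Y^{1/2}\|^{2}$, which you explicitly flag but do not eliminate, and which can be as large as $1+\|H\Pi\|^2\beta^{-2}$. This is the genuine gap. It is fixable within your framework via the identity $Y^{1/2}K(\mu)Y^{1/2}=R_\mu$: rewriting
\begin{equation*}
(H-\mu)u_n\,=\,Y^{1/2}(\widetilde H-\mu)v_n-\mu^2\, Y^{1/2}K(\mu)\,v_n\,=\,Y^{1/2}(\widetilde H-\mu)v_n-\mu^2 R_\mu\,Y^{-1/2}v_n
\end{equation*}
shows that the troublesome term is bounded by $\mu^2\|R_\mu\|\,\|Y^{-1/2}v_n\|$, which cancels exactly against the lower bound $\|u_n\|\ge\|Y^{-1/2}v_n\|$ in the denominator; the first term is killed along the Weyl sequence since $\|(\widetilde H-\mu)v_n\|\to0$ while $\|Y^{-1/2}v_n\|\ge\|Y^{1/2}\|^{-1}>0$. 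The limit gives $\dist(\mu,\sigma(H))\le\mu^2\|R_\mu\|\le(\beta')^2\|H\Pi\|^2\beta^{-3}$, as required. The paper sidesteps this delicacy entirely: from $Y^{-1/2}S(E)Y^{-1/2}=\widetilde H-E+Y^{-1/2}\Pi HX(E)H\Pi Y^{-1/2}$ it observes that $\widetilde H-E$ is invertible iff $S(E)-\Pi HX(E)H\Pi$ is, factors the latter as $\bigl\{\bb1-\Pi HX(E)H\Pi(H-E)^{-1}\Pi\bigr\}S(E)$ using the Feshbach--Schur identity $S(E)^{-1}=\Pi(H-E)^{-1}\Pi$, and runs a Neumann series with the bound $(\beta')^2\|H\Pi\|^2\beta^{-3}/\dist(E,\sigma(H))$. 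No norm bound on $Y$ ever appears, because $Y^{\pm1/2}$ enter only through their bounded invertibility, never quantitatively.
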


\begin{proof}[Proof]
Using the Feshbach-Schur reduction
 we get that if $E\in[0,\beta]$ then 
the
operator $H-E$ is invertible
\textit{if and only if} the operator
\beq\label{def-S}
S(E):=\Pi \big(H -E\big)\Pi \,-\,
\Pi  H  R_\bot(E)H \Pi
\eeq
is invertible in $\Pi \mathcal{H}$. In this case we have the identity:
\beq\label{horiac7}
\Pi(H-E)^{-1}\Pi\ =S(E)^{-1}.
\eeq
Using the resolvent
equation, we have, if $E\in [0,\beta')$,
\begin{align*}
&R_\bot(E)=R_\bot(0)+R_\bot(0)^2+
X(E)\,,\quad X(E):=E^2R_\bot(0)^2R_\bot(E)\,,\\
  &\|X(E)\|\leq E^2\beta^{-3}\leq \beta'^2\beta^{-3}\,.
\end{align*}
Thus we can write $S(E)$ as:
\begin{align*}
S(E)&=\Pi H\Pi-\Pi  H  R_\bot(0)H \Pi -E(Y-\Pi)-E\Pi+\Pi HX(E)H\Pi\\
&=\Pi\left[H-H  R_\bot(0)H\right]\Pi-EY+\Pi HX(E)H\Pi.
\end{align*}
Since $Y$ in \eqref{dress} is bounded from below by $1$, we can
define the positive operator $Y^{-1/2}$ in $\mathcal L(\Pi\mathcal{H})$. 
Then (use the notation \eqref{horiac6} in the above identity) $S(E)$ is 
invertible \textit{if and only if}
\begin{align}\label{horiac8}
Y^{-1/2}S(E)Y^{-1/2}=
\widetilde{H}-E+Y^{-1/2}\Pi HX(E)H\Pi Y^{-1/2}
\end{align}
is invertible in $\Pi \mathcal{H}$ with bounded inverse.
We are now prepared to prove Proposition  \ref{bernard7'} and we  do it in two 
steps.

{\bf Step 1.} Assume that $E\in   [0,\beta']$ is in the resolvent set of 
$\widetilde{H}$.
Then \eqref{horiac8} implies:
$$S(E)=Y^{1/2}\{\bb1 +Y^{-1/2}\Pi HX(E)H\Pi 
Y^{-1/2}(\widetilde{H}-E)^{-1})\}(\widetilde{H}-E)Y^{1/2}.$$
We have:
$$\|Y^{-1/2}\Pi HX(E)H\Pi Y^{-1/2}(\widetilde{H}-E)^{-1}\|\leq 
\frac{\beta'^2\|H\Pi\|^2\beta^{-3}}{{\rm 
dist}(E,\sigma(\widetilde{H}))},\mbox{ if }  E\in [0,\beta'].$$
Thus, if ${\rm 
dist}(E,\sigma(\widetilde{H}))>\beta'^2\|H\Pi\|^2\beta^{-3}\,$,  $S(E)$ 
is invertible, hence $E$ is in the resolvent set of $H$. In other words, 
no element of $\sigma(H)\cap [0,\beta']$ can be situated at a distance 
which is larger than $\beta'^2\|H\Pi\|^2\beta^{-3}$ from  
$\sigma(\widetilde{H})\cap [0,\beta')$.

{\bf Step 2.}  Assume that $E\in [0,\beta')$ is in the resolvent set of $H$ hence 
\eqref{horiac7} holds. From \eqref{horiac8} we see that 
$\widetilde{H}-E$ is invertible when the operator:
$$S(E)-\Pi HX(E)H\Pi=\{\bb1-\Pi HX(E)H\Pi (H-E)^{-1}\Pi\}S(E)$$
is invertible. \\
 As before we have:
$$\|\,\Pi HX(E)H\Pi (H-E)^{-1}\Pi\,\|\leq 
\frac{\beta'^2\|H\Pi\|^2\beta^{-3}}{{\rm dist}(E,\sigma(H))},\quad E\in 
[0,\beta').$$
Hence $\widetilde{H}-E$ is invertible when ${\rm 
dist}(E,\sigma(H))>\beta'^2\|H\Pi\|^2\beta^{-3}$. Equivalently, no 
element of $\sigma(\widetilde{H})\cap [0,\beta')$ can be situated at a 
distance larger than $\beta'^2\|H\Pi\|^2\beta^{-3}$ from  $\sigma(H)\cap 
[0,\beta']$.

\end{proof}

\begin{corollary}\label{bernard1}
Let $0<D_1<D_2<\beta' <\beta$  and assume $(D_1,D_2) \cap 
\sigma(\widetilde{H})=\emptyset$.
Then, if $$\| H\Pi \|^2\ (\beta')^2 \beta^{-3} < \frac 12 (D_2-D_1)
$$
we have
$$
(D_1 + \| H\Pi \|^2\ (\beta')^2 \beta^{-3} \,, D_2- \| H\Pi \|^2\ 
(\beta')^2 \beta^{-3}) \cap \sigma (H) = \emptyset \,.
$$
\end{corollary}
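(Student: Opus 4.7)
The plan is to deduce Corollary \ref{bernard1} as an almost immediate consequence of the Hausdorff estimate in Proposition \ref{bernard7'}, by a short contrapositive argument. Abbreviate $\delta := \|H\Pi\|^2 (\beta')^2 \beta^{-3}$. Since $\beta' < \beta$, the proposition applies and gives
\[
d_H\bigl(\sigma(H)\cap[0,\beta'],\, \sigma(\widetilde{H})\cap[0,\beta']\bigr)\leq \delta.
\]
In particular, by the definition of the Hausdorff distance, every point of $\sigma(H)\cap[0,\beta']$ lies within distance $\delta$ of some point of $\sigma(\widetilde{H})\cap[0,\beta']$.

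Next, I would argue by contradiction: suppose there exists $E\in(D_1+\delta,\,D_2-\delta)\cap\sigma(H)$. The standing assumption $\delta<\tfrac12(D_2-D_1)$ guarantees that this interval is non-empty, so the hypothesis is consistent. From $0<D_1<E<D_2<\beta'$ we see $E\in[0,\beta']$, hence the Hausdorff estimate provides some $\mu\in\sigma(\widetilde{H})\cap[0,\beta']$ with $|E-\mu|\leq\delta$.

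Now invoke the gap assumption $(D_1,D_2)\cap\sigma(\widetilde{H})=\emptyset$: either $\mu\leq D_1$ or $\mu\geq D_2$. In the first case $E-\mu > (D_1+\delta)-D_1 = \delta$, and in the second $\mu-E > D_2-(D_2-\delta) = \delta$; both inequalities contradict $|E-\mu|\leq\delta$. Hence no such $E$ can lie in $\sigma(H)$, which is precisely the conclusion.

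I do not anticipate any real obstacle: the corollary is the standard "gap is preserved modulo the one-sided displacement $\delta$" consequence of a Hausdorff distance bound, and the only point worth flagging in the write-up is the verification $E\in[0,\beta']$ so that Proposition \ref{bernard7'} is genuinely applicable to $E$; this is immediate from $D_2<\beta'$ and $D_1>0$. No further input from the magnetic pseudodifferential machinery or from the construction of $\widetilde{H}$ is needed.
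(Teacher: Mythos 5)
Your proof is correct and is precisely the one the paper intends: Corollary \ref{bernard1} is stated without proof in the paper as an immediate consequence of the Hausdorff estimate of Proposition \ref{bernard7'}, and your contrapositive argument (any $E\in\sigma(H)\cap(D_1+\delta,D_2-\delta)$ would have to sit within $\delta$ of a point of $\sigma(\widetilde H)\cap[0,\beta']$, which the gap $(D_1,D_2)$ and the displacement bounds forbid) fills in exactly that routine step. The only minor remark worth keeping in mind, though it does not affect correctness, is that the existence of the nearest point $\mu$ uses that $\sigma(\widetilde H)\cap[0,\beta']$ is compact (closed and bounded), and the hypothesis $\delta<\tfrac12(D_2-D_1)$ is there only to make the conclusion non-vacuous, not to make the contradiction argument run.
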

It will be interesting to apply this corollary for a family indexed by 
$\eta \in [0,\epsilon_1]$ of triples $(H(\eta), \Pi (\eta), \beta)$ satisfying 
\eqref{eq:lb}. This leads to:
\begin{corollary}\label{bernard2}
Let $0<D_1<D_2<\beta' <\beta$  and assume $(D_1,D_2) \cap 
\sigma(\widetilde{H} (\eta))=\emptyset\,$, for all $\eta \in [0,\epsilon_1]\,$.
Then, if $$D:=\sup_{\eta \in [0,\epsilon_1]} \| H (\eta) \Pi (\eta)  \|^2\ (\beta')^2 
\beta^{-3} < \frac 12 (D_2-D_1)\,,
$$
we have
$$
(D_1 +D\,, D_2-D ) \cap \sigma (H(\eta)) = 
\emptyset \,.
$$
\end{corollary}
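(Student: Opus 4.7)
The plan is to derive this as a direct uniform consequence of Corollary \ref{bernard1}, applied pointwise in $\eta$, and then shrink the forbidden interval to the worst-case width governed by $D$.

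First I would check that for each fixed $\eta\in[0,\epsilon_1]$ the triple $(H(\eta),\Pi(\eta),\beta)$ satisfies Hypothesis \ref{Prop-FS} (this is built into the statement, since we are told \eqref{eq:lb} holds with the same $\beta$ uniformly in $\eta$, and that $H(\eta)\Pi(\eta)$ is bounded). Setting
$$\delta(\eta) := \|H(\eta)\Pi(\eta)\|^{2}\,(\beta')^{2}\beta^{-3},$$
the defining supremum gives $\delta(\eta)\leq D$ for every $\eta$. The assumption $D<\tfrac12(D_2-D_1)$ therefore forces $\delta(\eta)<\tfrac12(D_2-D_1)$ as well, and the other hypothesis $(D_1,D_2)\cap\sigma(\widetilde{H}(\eta))=\emptyset$ is exactly what we are given. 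Thus Corollary \ref{bernard1} applies at each $\eta$ and yields
$$\bigl(D_1+\delta(\eta),\,D_2-\delta(\eta)\bigr)\cap\sigma\bigl(H(\eta)\bigr)=\emptyset.$$

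Next I would use monotonicity of the open interval in the shrinking parameter: since $\delta(\eta)\leq D$, we have the inclusion
$$\bigl(D_1+D,\,D_2-D\bigr)\ \subset\ \bigl(D_1+\delta(\eta),\,D_2-\delta(\eta)\bigr),$$
so intersecting with $\sigma(H(\eta))$ preserves emptiness. This gives $(D_1+D,\,D_2-D)\cap\sigma(H(\eta))=\emptyset$ for every $\eta\in[0,\epsilon_1]$, which is the desired conclusion.

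There is essentially no obstacle here: all the real content sits in Proposition \ref{bernard7'} (and its immediate reformulation Corollary \ref{bernard1}), where the Feshbach--Schur reduction is used to compare $\sigma(H)\cap[0,\beta']$ with $\sigma(\widetilde{H})\cap[0,\beta']$ through the quantity $\|H\Pi\|^{2}(\beta')^{2}\beta^{-3}$. The present corollary only repackages that bound uniformly across a parameter family; the only facts one has to verify are that the error $\delta(\eta)$ admits a uniform majorant (which is the definition of $D$) and that this majorant remains strictly smaller than half the initial gap width, which is precisely the smallness assumption imposed on $D$.
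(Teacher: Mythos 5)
Your argument is exactly what the paper means by its one-line remark that the statement is a direct consequence of Corollary~\ref{bernard1}: apply that corollary at each fixed $\eta$ and use that $\delta(\eta)\leq D$ to pass to the uniform interval. The reasoning is correct and matches the paper's intended proof.
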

\begin{proof}
It is a direct consequence of Corollary \ref{bernard1}\,. 
\end{proof}

Finally, the next proposition gives sufficient conditions for the appearance of gaps of order $\eta$ in the spectrum of $H(\eta)$. 

\begin{proposition}\label{C-FS}
Suppose that that there exist two numerical constants $0<C_1<C_2$ such that the interval $(C_1,C_2)$ belongs to the resolvent set of $\eta^{-1}\widetilde{H}(\eta)$ for all $0< \eta \leq \min(\epsilon_1,\frac{\beta}{C_2+1})$. 

Let $C_0:=(C_2+1)^2 (\sup_{\eta\in 
[0,\epsilon_1]}\|H (\eta)\Pi (\eta)\|)^2 
\beta^{-3}$ and $\epsilon_0 := \min 
(\epsilon_1,\frac{\beta}{C_2+1}, \frac 12 (C_2-C_1)/C_0)\,.$ 
Then, 
$$(C_1\eta 
+C_0\eta^2, C_2\eta-C_0\eta^2)\cap \sigma(H(\eta))=\emptyset\,, \quad \forall\eta\in (0,\epsilon_0)\,.$$
\end{proposition}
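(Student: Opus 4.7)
The proof is essentially bookkeeping on top of Corollary \ref{bernard1}, so my plan is to apply that corollary at each fixed $\eta\in(0,\epsilon_0)$ with a carefully chosen $\beta'$ depending on $\eta$. First, I would translate the hypothesis: saying that $(C_1,C_2)$ lies in the resolvent set of $\eta^{-1}\widetilde{H}(\eta)$ is equivalent to saying that $(C_1\eta,C_2\eta)\cap\sigma(\widetilde{H}(\eta))=\emptyset$, since $\sigma(\eta^{-1}\widetilde{H}(\eta))=\eta^{-1}\sigma(\widetilde{H}(\eta))$.

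The key choice is $\beta':=(C_2+1)\eta$. Setting also $D_1:=C_1\eta$ and $D_2:=C_2\eta$, the chain $0<D_1<D_2<\beta'<\beta$ required by Corollary \ref{bernard1} follows from $0<C_1<C_2<C_2+1$ together with $(C_2+1)\eta<\beta$, and the latter is exactly the second term in the definition of $\epsilon_0$. Next, I would bound the error quantity appearing in that corollary:
\beq
\|H(\eta)\Pi(\eta)\|^2\,(\beta')^2\,\beta^{-3}\ =\ (C_2+1)^2\,\|H(\eta)\Pi(\eta)\|^2\,\beta^{-3}\,\eta^2\ \leq\ C_0\,\eta^2\,,
\eeq
directly from the definition of $C_0$. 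The smallness condition $\|H(\eta)\Pi(\eta)\|^2(\beta')^2\beta^{-3}<\frac{1}{2}(D_2-D_1)$ then reduces to $C_0\eta^2<\frac{1}{2}(C_2-C_1)\eta$, i.e.\ $\eta<(C_2-C_1)/(2C_0)$, which is encoded as the third term in the definition of $\epsilon_0$.

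With every hypothesis of Corollary \ref{bernard1} satisfied, I would conclude that
\beq
\bigl(D_1+\|H(\eta)\Pi(\eta)\|^2(\beta')^2\beta^{-3},\ D_2-\|H(\eta)\Pi(\eta)\|^2(\beta')^2\beta^{-3}\bigr)\,\cap\,\sigma(H(\eta))\ =\ \emptyset\,,
\eeq
and since this open interval contains $(C_1\eta+C_0\eta^2,\,C_2\eta-C_0\eta^2)$ by virtue of the upper bound $\|H(\eta)\Pi(\eta)\|^2(\beta')^2\beta^{-3}\leq C_0\eta^2$, the same disjointness holds for this smaller interval, which is the claim.

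There is no serious obstacle here; the only nontrivial decision is to take $\beta'$ proportional to $\eta$ rather than to $1$: this is precisely what converts the $O((\beta')^2)$ error produced by Corollary \ref{bernard1} into the $O(\eta^2)$ correction in the conclusion, while simultaneously keeping $\beta'$ safely below $\beta$ for all $\eta<\epsilon_0$.
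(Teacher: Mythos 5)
Your proof is correct and follows essentially the same route as the paper: the paper's one-line proof makes the identical choices $D_1=C_1\eta$, $D_2=C_2\eta$, $\beta'=(C_2+1)\eta$ and then invokes Corollary \ref{bernard2} (the family version), while you apply Corollary \ref{bernard1} pointwise in $\eta$ and absorb the $\eta$-dependence of $\|H(\eta)\Pi(\eta)\|$ into the supremum that already appears in the definition of $C_0$; the two formulations are interchangeable here.
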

\begin{proof}
Let $D_1 = C_1 \eta\,$, $D_2=C_2 \eta\,$, $\beta' = (C_2+1)\eta\,$. We then apply Corollary \ref{bernard2}.
\end{proof}

\section{The magnetic pseudodifferential calculus}\label{ss1.3}
Let us recall from \cite{CHP} the following two notations:
\begin{equation}\label{defLambdaOmega}
{\Lambda}^A(x,y):=e^{-i\int_{[x,y]}A},\qquad\Omega^B(x,y,z):=e^{-i\int_{
<x,y,z>} \hspace*{-3pt}B}\,,
\end{equation}
where $[x,y]$ denotes the oriented interval from $x$ to $y$ and $<x,y,z>$ denotes the oriented triangle with vertices $x,y,z$, with the integrals being the usual invariant integrals of $k$-forms for $k=1,2$. \\
We shall also use the following shorthand notation:
\beq\label{sh-not-magn-2}
\Lambda^\epsilon\equiv\Lambda^{\epsilon A^0},\
\Lambda^{\epsilon,\kappa}\equiv\Lambda^{A^{\epsilon,\kappa}},\
\Omega^\epsilon\equiv\Omega^{\epsilon B_0},\
\Omega^{\epsilon,\kappa}\equiv\Omega^{B^{\epsilon,\kappa}},\ 
\widetilde{\Lambda}^{\epsilon,\kappa}\equiv\Lambda^{\kappa A^\epsilon},\ 
\widetilde{\Omega}^{\epsilon,\kappa}\equiv\Omega^{\kappa\epsilon B_\epsilon}\,.
\eeq

We notice for future use  that 
\beq\label{D-Omega}
{\Lambda}^A(x,z){\Lambda}^A(z,y){\Lambda}
^A(y,x)\,=\,\Omega^B(x,z,y)\,,
\eeq
and that there exists $C >0$ such that,  for any magnetic field $B$ of class
$BC^\infty(\X)\,$, 
\beq\label{Est-Omega}
\left|\Omega^B(x,y,z)-1\right|\leq C\, \|B\|_\infty
|(y-x)\wedge(z-x)|\,.
\eeq
We notice also that for our choice of gauge for $A_0$ we have the relations:
\beq
\Lambda^\epsilon(x,y)=\exp\left\{i(B_0/2)\,x\wedge y\right\}
\eeq
\beq\label{defOmega}
\Omega^\epsilon(x,y,z)=\exp\left\{iB_0\, (x-y)\wedge (z-y)\right\}.
\eeq

Let us recall the gauge covariance property which states that whatever magnetic potential $A$ we choose for a given magnetic field $B(x):=B_{12}(x)=-B_{21}(x)$ we have:
\begin{align}\label{26-02-2}
\{-i\nabla-A(x)\}^2\Lambda^A(x,z)=\Lambda^A(x,z)\left\{-i\nabla\,-\,a(x,z)
\right\}^2,
\end{align}
where
\beq\label{axz}
a_j(x,z)\,=\,\underset{k}{\sum}(x-z)_k\int_0^1
B_{jk}\big(z+s(x-z)\big)s\,ds\, \mbox{ for } j=1,2\,.
\eeq

Let $\X^*\cong\mathbb{R}^2$ be the dual of our configuration space $\X\equiv\mathbb{R}^2$ and denote by $<\cdot\,,\,\cdot>$ the duality map $\X^*\times\X\rightarrow\mathbb{R}$ between these two spaces. Let $\Xi:=\X\times\X^*$ be the associated phase space with the canonical symplectic structure 
\beq
\sigma\big((x,\xi),(y,\eta)\big):=<\xi,y>-<\eta,x>,\quad\forall\big((x,\xi),(y,\eta)\big)\in\Xi\times\Xi\,.
\eeq

We will use the following class of H\"{o}rmander type symbols. 
\begin{definition}\label{Def-symb}
For any
$s\in\mathbb{R}$ we denote by
$$
S^s(\Xi):=\{F\in
C^\infty(\Xi)\mid\nu^{s}_{n,m}(F)<+\infty\,,\forall(n,m)\in\mathbb{N}
\times\mathbb{N}\}\,,
$$
where
$$
\nu^{s}_{n,m}(f):=\underset{(x,\xi)\in\Xi}{\sup}\ \underset{|\alpha|\leq n}{\sum}\ 
\underset{|\beta|\leq m}{\sum}\left|\langle \xi\rangle
^{-s+m}
\big(\partial^\alpha_x\partial^\beta_\xi f\big)(x,\xi)\right|\,,
$$ 
and
\beq\label{M-inv}
S^-(\Xi):=\underset{s<0}{\bigcup}S^s(\Xi)\quad \mbox{
 and }\quad  S^{-\infty}(\Xi):=\underset{s\in\mathbb{R}}{\bigcap}S^s(\Xi)\,.
 \eeq
\end{definition}
\begin{definition}
A symbol $F$ in $S^s(\Xi)$ is called {\it
elliptic}  
if there exist two positive constants $R$ and $C$ such that $$|F(x,\xi)|\geq
C \, \langle \xi\rangle ^s\,,$$ 
for any
$(x,\xi)\in\Xi$ with $|\xi|\geq R\,$.
\end{definition}
\begin{definition}
We consider a vector potential $A$ with components of class
$C^\infty_{\text{\sf pol}}(\X)\,$. For any $F\in\mathscr{S}(\Xi)$ we can define (\cite{MP1})  the following linear operator:  
\begin{align}\label{octhc1}
& u\mapsto \big(\mathfrak{Op}^A(F)u\big)(x)\ :=\
(2\pi)^{-2}\int_\X\int_{\X^*}e^{i\langle
\xi,x-y\rangle}e^{-i\int_{[x,y]}A}\, F\left(\frac{x+y}
{2},\xi\right )u(y)\,d\xi\,dy\,,\\ 
& \forall u\in \mathscr{S}(\X).\nonumber
\end{align}
\end{definition}

This operator is continuous on $ \mathscr{S}(\X)$
 and has a natural extension by duality to
$\mathscr{S}^\prime(\X)$. For $A=0$ we obtain the usual Weyl calculus.  In \cite{MP1} it is proven that
\begin{proposition}\label{MP} 
 For any vector potential $A$ with components of class $C^\infty_{\text{\tt pol}}(\X)$ the quantization $\mathfrak{Op}^A$ defines a linear and topological isomorphism between $\mathscr{S}^\prime(\Xi)$ and the linear topological space of continuous operators from $\mathscr{S}(\X)$ to $\mathscr{S}^\prime(\X)$.
\end{proposition}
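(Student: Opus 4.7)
The idea is to factor $\mathfrak{Op}^A$ as a composition of three topological isomorphisms: the standard Weyl quantization (the case $A=0$), pointwise multiplication of the resulting distribution kernel by the parallel transport factor $\Lambda^A$, and the Schwartz kernel theorem identifying $\mathscr{S}'(\X\times\X)$ with $\mathcal{L}(\mathscr{S}(\X),\mathscr{S}'(\X))$. Running these three arrows and their inverses gives exactly the statement.

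First, I would recall the Schwartz kernel theorem: the assignment $K\mapsto\bigl(u\mapsto\langle K,\cdot\otimes u\rangle\bigr)$ is a topological isomorphism from $\mathscr{S}'(\X\times\X)$ onto the space of continuous linear maps $\mathscr{S}(\X)\to\mathscr{S}'(\X)$. Next, the Weyl correspondence $F\mapsto K_F$, where
\begin{equation*}
K_F(x,y)=(2\pi)^{-2}\int_{\X^*}e^{i\langle\xi,x-y\rangle}F\!\left(\tfrac{x+y}{2},\xi\right)d\xi,
\end{equation*}
is, at the level of $\mathscr{S}'$, the composition of the linear change of variables $(x,y)\mapsto\bigl((x+y)/2,x-y\bigr)$ and a partial inverse Fourier transform in the second slot; both are obvious topological isomorphisms of $\mathscr{S}'$ onto itself. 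This handles the case $A=0$.

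Now, directly from \eqref{octhc1} one reads off that the distribution kernel of $\mathfrak{Op}^A(F)$ is $\Lambda^A(x,y)K_F(x,y)$. Therefore everything reduces to showing that pointwise multiplication by $\Lambda^A$ is a topological isomorphism of $\mathscr{S}'(\X\times\X)$ onto itself, with inverse multiplication by $\overline{\Lambda^A}=\Lambda^{-A}$. Since the components of $A$ lie in $C^\infty_{\text{\tt pol}}(\X)$, the phase function
\begin{equation*}
\phi_A(x,y):=\int_{[x,y]}A=\int_0^1\bigl\langle A\bigl((1-t)x+ty\bigr),y-x\bigr\rangle\,dt
\end{equation*}
belongs to $C^\infty_{\text{\tt pol}}(\X\times\X)$. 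Applying Fa\`a di Bruno to $\Lambda^A=e^{-i\phi_A}$ and using $|\Lambda^A|=1$, every partial derivative of $\Lambda^A$ is a finite sum of products of derivatives of $\phi_A$ times $\Lambda^A$, hence is still polynomially bounded. Consequently multiplication by $\Lambda^A$ continuously maps $\mathscr{S}(\X\times\X)$ into itself, and by duality it does the same for $\mathscr{S}'(\X\times\X)$; the same holds with $A$ replaced by $-A$, which provides the inverse.

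Composing the three topological isomorphisms (Weyl symbol $\leftrightarrow$ Weyl kernel, multiplication by $\Lambda^{\pm A}$ on kernels, Schwartz kernel theorem) yields the claim. The only nonroutine point is the polynomial boundedness of all derivatives of $\Lambda^A$, which is handled by the Fa\`a di Bruno argument above; no oscillatory integral needs to be made sense of, because the explicit formula \eqref{octhc1} is used only on the dense subspace $\mathscr{S}(\Xi)$ and is extended to $\mathscr{S}'(\Xi)$ solely through the kernel picture.
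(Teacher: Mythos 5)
The paper does not supply its own proof of this proposition; it is quoted from the reference \cite{MP1}, with the only in-text support being the kernel identity \eqref{magn-quant}, $\mathfrak{Op}^A(F)=\mathcal{I}\text{\sf nt}(\Lambda^A\,\mathfrak{W}F)$. Your argument builds on exactly that identity and is correct: factoring $\mathfrak{Op}^A$ through the $A=0$ Weyl correspondence (partial inverse Fourier transform in $\xi$ followed by the midpoint-difference change of variables, both $\mathscr{S}'$-automorphisms), multiplication by $\Lambda^A$ on kernels, and the Schwartz kernel theorem is the natural route, and the Fa\`a di Bruno argument settles the one nontrivial point, namely that $\Lambda^A$ together with all its derivatives is polynomially bounded, so that multiplication by $\Lambda^{\pm A}$ are mutually inverse continuous maps on $\mathscr{S}(\X\times\X)$ and, by transposition, on $\mathscr{S}'(\X\times\X)$. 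Your closing caveat is also the right one: the oscillatory formula \eqref{octhc1} is only used on $\mathscr{S}(\Xi)$ to identify the kernel as $\Lambda^A K_F$, after which everything is transported through the distributional kernel picture, so no regularization of the integral at the level of $\mathscr{S}'$ is needed.
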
 

We shall denote by $\mathfrak{S}^A$ its inverse, i.e. the application that associates to a given operator the tempered distribution which is its symbol for the magnetic pseudodifferential calculus defined by the vector potential $A$:
\begin{equation}\label{defsymb}
\mathfrak{S}^A \circ \mathfrak{Op}^A = I\,.
\end{equation}

In the same spirit as the Calderon-Vaillancourt theorem for classical pseudo-differential operators, Theorem 3.1 in \cite{IMP1} states that any symbol $F\in S^0_0(\Xi)$ defines a
bounded operator $\mathfrak{Op}^A(F)$ in $L^2(\X)$ with an upper bound of  the operator norm
given by some  seminorm of $F$ as H\"{o}rmander type symbol.  We
denote by $\|F\|_B$ the operator norm of $\mathfrak{Op}^A(F)$ in $\mathcal L
(L^2(\X))$ :
\begin{equation}\label{defFb}
\|F\|_B:= ||\mathfrak{Op}^A(F)||_{\mathcal L (L^2(\X))}\,.
\end{equation}
This norm  only depends on the
magnetic field $B$ and not on the choice of the vector potential (different
choices being unitary equivalent).

We notice that
\beq\label{symbol-h}
H^0\,=\,\mathfrak{Op}^{A^\Gamma}(h),\qquad
 h(x,\xi):=\xi^2+V_\Gamma(x)\,. 
 \eeq
 and for the perturbed Hamiltonians we will use the following shorthand notations
 \beq
H^{\epsilon,\kappa}=\mathfrak{Op}^{A^{\epsilon,\kappa}}(h)=:\mathfrak{Op}^{\epsilon,\kappa}(h),\qquad
H^{\epsilon}=\mathfrak{Op}^{A^{\epsilon}}(h)=:\mathfrak{Op}^{\epsilon}(h).
 \eeq

We also recall from \cite{MP1} that for any two test functions $f$ and $g$ in $\mathscr{S}(\Xi)$ the product of the linear
operators $\mathfrak{Op}^A(f)$ and $\mathfrak{Op}^A(g)$ induces a {\it twisted
Moyal product}, also called magnetic Moyal product, such that
$$
\mathfrak{Op}^A(f)\, \mathfrak{Op}^A(g)=\mathfrak{Op}^A(f \, \,\sharp^B\,
\,g)\,.
$$
This product depends only on the magnetic field $B$ and is  given by an explicit oscillating
integral. The following relation proven in \cite{MP1} (Lemme 4.14):
\beq
\int_{\Xi}dX\big(f\sharp^Bg\big)(X)\,=\,\int_{\Xi}dXf(X)g(X)
\eeq
allows to extend the magnetic Moyal product to a class of tempered distributions that leave  $\mathscr{S}(\Xi)$ invariant by magnetic Moyal composition. In \cite{MP1} it is proven that this class forms a *-algebra that we call the magnetic Moyal algebra and denote by $\mathfrak M^B$. Moreover this class  contains all the H\"{o}rmander type symbols (see \cite{MP1,IMP1}) and the usual {\it composition of symbols theorem} is still valid
(Theorem 2.2 in \cite{IMP1}).

For any invertible symbol $F$ in $\mathfrak M^B$, we denote by $F^-_B$ its inverse. It is shown in
Subsection~2.1 of \cite{MPR1} that, for  any $m>0$ and  for $a>0$ large enough (depending on $m$)
the symbol
$\mathfrak{s}_m(x,\xi):=<\xi>^m+a$, has an inverse in $\mathfrak M^B$. We shall use the shorthand notation 
$\mathfrak{s}^B_{-m}$ instead of $\big(\mathfrak{s}_m\big)^-_B$ and extend it
to any $m\in\mathbb{R}$ (thus for $m>0$ we have
$\mathfrak{s}^B_{m}\equiv\mathfrak{s}_{m}$).
\\
The following results have been established in \cite{IMP2} (Propositions 6.2 and 6.3):
\begin{proposition}\label{IMP}~
\begin{enumerate}
\item 
If $F\in S^0(\Xi)$ is invertible in $\mathfrak M^B$, then the inverse
$F^-_B$ also belongs to $S^0(\Xi)\,$.
\item  For $m<0$, if $f\in S^{m}(\Xi)$ is
such that $1+f$ is invertible in $\mathfrak M^B$, then $(1+f)^-_B-1\in
S^m(\Xi)\,$.
\item 
 For $m>0$, if $G\in
S^m(\Xi)$ is invertible in $\mathfrak M^B$, with $\mathfrak{Op}^A\big(\mathfrak
s_{m}\, \sharp^BG^{-}_B\big)\in\mathcal{L}\big(L^2(\X)\big)\,$, then $G^-_B\in
S^{-m}(\Xi)\,$.
\end{enumerate}
\end{proposition}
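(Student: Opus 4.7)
The plan is to prove the three statements separately, using two pillars of the magnetic pseudodifferential calculus developed in \cite{MP1,IMP1}: (a) a magnetic Beals-type criterion characterising $S^0(\Xi)$ as the class of symbols $F$ for which $\mathfrak{Op}^A(F)$ and all its iterated commutators with the position operators $X_j$ and the magnetic momentum operators $\Pi^A_j := -i\partial_j - A_j$ are bounded on $L^2(\X)$, and (b) the Moyal composition theorem $S^{m_1}(\Xi) \sharp^B S^{m_2}(\Xi) \subset S^{m_1+m_2}(\Xi)$. I will write $\delta_j$ for either of the derivations $\operatorname{ad}_{X_j}$ or $\operatorname{ad}_{\Pi^A_j}$, both of which are derivations of $\sharp^B$.

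For part (i), since $F \in S^0(\Xi)$ the Calder\'on--Vaillancourt estimate of \cite{IMP1} yields $\mathfrak{Op}^A(F) \in \mathcal{L}(L^2(\X))$; combined with invertibility in $\mathfrak{M}^B$ and the bijectivity of $\mathfrak{Op}^A$ in Proposition \ref{MP}, this forces $\mathfrak{Op}^A(F^-_B) \in \mathcal{L}(L^2(\X))$. Differentiating the identity $F^-_B \sharp^B F = 1$ by $\delta_j$ gives
\begin{equation*}
\delta_j F^-_B = - F^-_B \sharp^B (\delta_j F) \sharp^B F^-_B ,
\end{equation*}
and an induction on the order of the iterated commutator, at each step quantising and using the $L^2$-boundedness already established for the lower-order commutators of $F^-_B$ together with the $S^0$-Beals data of $F$, verifies the Beals criterion for $F^-_B$. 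Part (ii) is immediate from (i): one has $1 + f \in S^0(\Xi)$, hence $(1+f)^-_B \in S^0(\Xi)$, and the algebraic identity $(1+f)^-_B - 1 = -f \sharp^B (1+f)^-_B$ combined with $S^m \sharp^B S^0 \subset S^m$ yields $(1+f)^-_B - 1 \in S^m(\Xi)$.

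For part (iii), set $h := \mathfrak{s}_m \sharp^B G^-_B$; since $\mathfrak{s}^B_{-m} \sharp^B \mathfrak{s}_m = 1$ we get $G^-_B = \mathfrak{s}^B_{-m} \sharp^B h$, so once $h \in S^0(\Xi)$ is established the composition theorem gives $G^-_B \in S^{-m}(\Xi)$. The hypothesis supplies the base step $\mathfrak{Op}^A(h) \in \mathcal{L}(L^2(\X))$; for the iterated commutators I would differentiate the relation $h \sharp^B G = \mathfrak{s}_m$ to obtain
\begin{equation*}
\delta_j h = \bigl( \delta_j \mathfrak{s}_m - h \sharp^B \delta_j G \bigr) \sharp^B G^-_B = \bigl( \delta_j \mathfrak{s}_m - h \sharp^B \delta_j G \bigr) \sharp^B \mathfrak{s}^B_{-m} \sharp^B h .
\end{equation*}
Since $\delta_j \mathfrak{s}_m \in S^{m-1}$ and $\delta_j G \in S^{m-1}$, after composition with $\mathfrak{s}^B_{-m} \in S^{-m}$ one obtains contributing factors of order $\leq -1$ whose quantisations are $L^2$-bounded. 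Iterating this formula, every higher iterated commutator of $h$ is expressed as a finite sum of $\sharp^B$-products involving $h$ itself, bounded Moyal factors built from $\mathfrak{s}_m$ and $G$, and powers of $\mathfrak{s}^B_{-m}$; quantising and using the inductive hypothesis delivers $L^2$-boundedness at every order, so the Beals criterion yields $h \in S^0(\Xi)$.

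The main obstacle will be part (iii), because one has no H\"ormander-type regularity for $G^-_B$ a priori, only the boundedness of a single dressed quantisation $\mathfrak{Op}^A(\mathfrak{s}_m \sharp^B G^-_B)$. The delicate structural point is that the bootstrap identity above must close in a self-sustaining way: at each induction step the only unknown quantity on the right-hand side is $h$ itself, already controlled at the previous step. This is precisely why one dresses $G^-_B$ by $\mathfrak{s}_m$ rather than working with $G^-_B$ directly, and why the auxiliary symbol $\mathfrak{s}_m$, of exactly the same order as $G$, is indispensable in the argument.
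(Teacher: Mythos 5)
The paper does not prove Proposition~\ref{IMP}; it quotes it verbatim from \cite{IMP2} (Propositions~6.2 and 6.3), whose very title --- ``Commutator criteria for magnetic pseudodifferential operators'' --- confirms that a magnetic Beals-type characterisation of $S^0(\Xi)$ is indeed the engine of the argument. So your overall strategy (Beals criterion plus a bootstrap on iterated commutators, combined with the $S^{m_1}\sharp^B S^{m_2}\subset S^{m_1+m_2}$ composition theorem) is the right one and is in the same spirit as the cited reference; there is no competing proof in the present paper to compare against.

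There is, however, one step in your part~(i) that does not follow as written. You assert that Calder\'on--Vaillancourt plus Moyal invertibility of $F\in S^0(\Xi)$ already \emph{forces} $\mathfrak{Op}^A(F^-_B)\in\mathcal{L}(L^2(\X))$. This is the base case of your whole Beals induction, and it is not automatic: invertibility in $\mathfrak{M}^B$ gives a two-sided inverse as a continuous map on $\mathscr{S}(\X)$ (and, by duality, on $\mathscr{S}'(\X)$), but a Schwartz-class inverse need not extend to a bounded operator on $L^2$ merely because the original operator does. One needs an extra input --- either a spectral-invariance result for the $C^*$-closure of the magnetic Moyal algebra, or a lower bound for $\mathfrak{Op}^A(F)$ on $L^2$, or an independent proof that $\mathfrak{Op}^A(F^-_B)$ is bounded --- before the commutator bootstrap can start. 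It is telling that in part~(iii) the statement itself adds the boundedness of $\mathfrak{Op}^A(\mathfrak{s}_m\sharp^B G^-_B)$ as a hypothesis; an analogous justification is needed, not assumed, in part~(i). Your bootstrap for part~(iii) is otherwise well conceived (dressing by $\mathfrak{s}_m$ so that the unknown on the right-hand side is always $h$ itself), though you should also make explicit that the commutators $\delta_j\,\mathfrak{s}^B_{-m}$ are under control, since these enter the higher-order Leibniz terms of the induction.
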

From the explicit form of the magnetic Moyal product (see in \cite{MP1}) we easily notice the following fact.
\begin{proposition}\label{Gamma*per}
Let us consider the lattice $\Gamma_*\subset\X^*$. If $f$ and $g$ are $\Gamma_*$-periodic symbols, then their magnetic Moyal product is also $\Gamma_*$-periodic.
\end{proposition}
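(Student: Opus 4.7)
The plan is to exploit the explicit integral representation of the magnetic Moyal product recalled from \cite{MP1}, in which the magnetic twisting factor $\Omega^B$ depends only on configuration-space variables while all momentum dependence is carried by the symplectic phase, and the latter is invariant under a simultaneous equal translation of all momenta. The $\Gamma_*$-periodicity of $f \sharp^B g$ will then follow from a single change of integration variable.

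Concretely, for $f,g \in \mathscr{S}(\Xi)$ the magnetic Moyal product admits a representation of the form
\[
(f \sharp^B g)(x,\xi) \,=\, c \int_\Xi\!\int_\Xi e^{-2i\sigma((x,\xi)-Y,(x,\xi)-Z)} \, \Omega^B(x,y,z) \, f(Y)\, g(Z)\, dY\, dZ,
\]
where $Y=(y,\eta)$, $Z=(z,\zeta)$, and $\Omega^B$ depends only on the position components $x,y,z$. Fix $\gamma^*\in\Gamma_*$ and evaluate at $(x,\xi+\gamma^*)$; then perform the Lebesgue-measure preserving substitution $\eta\mapsto\eta+\gamma^*$, $\zeta\mapsto\zeta+\gamma^*$. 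The difference $(x,\xi+\gamma^*)-(y,\eta+\gamma^*)=(x-y,\xi-\eta)$, and similarly for the $Z$-term, so the symplectic phase is unchanged; $\Omega^B$ is untouched because it does not see momenta; and $f(y,\eta+\gamma^*)=f(y,\eta)$, $g(z,\zeta+\gamma^*)=g(z,\zeta)$ by the assumed periodicity. Hence $(f\sharp^B g)(x,\xi+\gamma^*)=(f\sharp^B g)(x,\xi)$.

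The $\Gamma_*$-periodic symbols relevant to the paper lie in the magnetic Moyal algebra $\mathfrak{M}^B$ rather than in $\mathscr{S}(\Xi)$, so the above formal manipulation has to be promoted to a statement about distributions. I would do this through the defining duality of $\sharp^B$ on $\mathfrak{M}^B$: for any test symbol $\varphi\in\mathscr{S}(\Xi)$, compare $\int(f\sharp^B g)\,\varphi\,dX$ with $\int(f\sharp^B g)\,(\tau_{(0,\gamma^*)}\varphi)\,dX$, transport the momentum shift through the magnetic Moyal triple integral (where $f$ and $g$ are now allowed to be distributions), and invoke the same position/momentum decoupling of $\Omega^B$ to conclude. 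The only real obstacle is presenting this duality step cleanly enough to justify the change of variables inside an oscillatory-integral pairing; once the structural observation about $\Omega^B$ is in place, the algebraic invariance under translations of $\xi$ by $\gamma^*$ is automatic.
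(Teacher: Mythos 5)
Your argument is correct and is exactly the ``easy observation'' the paper has in mind when it says the claim follows from the explicit form of the magnetic Moyal product: the twisting factor in the oscillatory integral depends only on the position variables, while the symplectic phase is invariant under the simultaneous shift $\eta\mapsto\eta+\gamma^*$, $\zeta\mapsto\zeta+\gamma^*$, so the $\Gamma_*$-periodicity of $f$ and $g$ carries over to $f\sharp^B g$. The final extension from Schwartz symbols to the full magnetic Moyal algebra by duality is the standard device used in \cite{MP1} and closes the argument cleanly.
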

\paragraph{About the resolvent}~\\
By Theorem 4.1 in \cite{IMP1},  for any real elliptic symbol
$h\in S^m(\Xi)$ (with $m>0$) and   for any $A$ in $C^\infty_{\text{\sf
pol}}(\X,\mathbb R^2)$, the operator
$\mathfrak{Op}^A(h)$ has a closure $H^A$ in $L^2(\X)$ that is self-adjoint on a domain
$\mathcal{H}^m_A$ (a {\it magnetic Sobolev space}) and lower semibounded. Thus
we can define its resolvent $(H^A-\z)^{-1}$ for any $\z\notin\sigma(H^A)$ and
Theorem 6.5 in \cite{IMP2} states that it exists a symbol
$r^B_\z(h)\in S^{-m}(\Xi)$ such that 
$$
(H^A-\z)^{-1}\ =\ \mathfrak{Op}^A(r^B_\z(h))\,.
$$
\paragraph{Integral kernels and symbols}~\\
For symbols of class $S^0(\Xi)$, we have seen that the
associated magnetic pseudodifferential operator is bounded in
$\mathcal{H}$ and is self-adjoint if and only if its symbol is real. In that
case we can also define its resolvent and the results in \cite{IMP2}, cited
above, show that it is also defined by a symbol of class $S^0(\Xi)\,$.\\

Given any tempered distribution $T\in\mathscr{S}^\prime(\X\times\X)$ we shall denote by $\Int(T):\mathscr{S}(\X)\rightarrow\mathscr{S}^\prime(\X)$ the integral operator having the distribution kernel $T$, given by the formula
\beq
\langle u,\big(\Int(T)\big)v\rangle_\mathcal{H}\,:=\,\big(T,\overline{u}\otimes v\big),\qquad\forall(u,v)\in\mathscr{S}(\X)\times\mathscr{S}(\X)\,.
\eeq
For any tempered distribution $F\in\mathscr{S}^\prime(\Xi)$ we denote by $K_F\in\mathscr{S}^\prime(\X\times\X)$ the integral kernel of its Weyl quantization, i.e. $\mathfrak{Op}(F)=\Int(K_F)$. Let us recall that there exists a  linear bijection
$\mathfrak{W}:\mathscr{S}^\prime(\Xi)\rightarrow\mathscr{S}^\prime(\X\times\X)$
defined by 
 \beq\label{iulie3}
\big(\mathfrak{W}F\big)(x,y)\ :=\
(2\pi)^{-2}\int_{\X^*}e^{i<\xi,x-y>}F\big(\frac{x+y}{2},\xi\big)\,d\xi\,,
\eeq
such that
$$ \mathfrak{Op}(F)=\mathcal{I}\text{\sf
nt}(\mathfrak{W}F)\,.$$
In the magnetic calculus, we have the equality
\beq\label{magn-quant}
\mathfrak{Op}^A(F)\ =\ \mathcal{I}\text{\sf
nt}({\Lambda}^A\,\mathfrak{W}F)\,.
\eeq

\section{The magnetic quasi-band projections}
\label{S.5}

Following step by step the arguments in Section 3 of \cite{CHP} we can associate with  the quasi-band orthogonal projection $\pi\in\mathcal{L}(\mathcal{H})$ some \textit{magnetic versions} of it: $\pi^{\epsilon}$ and $\pi^{\epsilon,\kappa}$.
Starting from the principal quasi Wannier function $\psi_0$ introduced in Definition \ref{pseudo-W-f} and the magnetic field introduced in \eqref{Bek} we define the following objects:
\begin{definition}\label{m-q-W-def}~
\begin{enumerate}
\item $\mathring{\phi}^\epsilon_\gamma(x):=\Lambda^\epsilon(x,\gamma)\psi_0(x-\gamma)$\,, \,$\mathbb{G}^\epsilon_{\alpha\beta}:=\langle\mathring{\phi}^
\epsilon_\alpha\,,\,\mathring{\phi}^\epsilon_\beta\rangle_{\mathcal{H}}$\,,\,  $\mathbb{F}^\epsilon:=\big(\mathbb{G}^\epsilon\big)^{-1/2}\in
\mathcal{L}\big(\ell^2(\Gamma)\big)$,
\item  $\psi^\epsilon_0(x):=\underset{\alpha\in\Gamma}{\sum}\mathbb{F}^\epsilon_{\alpha\,0}\, \Omega^\epsilon
(\alpha,0,x)\psi_0(x-\alpha)$\,, \,$\phi^\epsilon_\gamma(x):=
\Lambda^\epsilon(x,\gamma)\, \psi^\epsilon_0(x-\gamma)$.
\item $\pi^\epsilon :=\sum_{\gamma\in \Gamma} |\phi^\epsilon_\gamma\rangle\langle \phi^\epsilon_\gamma| =(\pi^\epsilon)^2=(\pi^\epsilon)^*$\,,\,
\item $\mathring{\phi}^{\epsilon,\kappa}_\gamma(x):=\Lambda^{\epsilon,\kappa}(x,\gamma)\psi^\epsilon_0(x-\gamma)$\,, \,$\mathbb{G}^{\epsilon,\kappa}_{\alpha\beta}:=\langle\mathring{\phi}^{\epsilon,\kappa}_\alpha\,,\,\mathring{\phi}^{\epsilon,\kappa}_\beta\rangle_{\mathcal{H}}$\,, \, $\mathbb{F}^{\epsilon,\kappa}:=\big(\mathbb{G}^{\epsilon,\kappa}\big)^{-1/2}\in
\mathcal{L}\big(\ell^2(\Gamma)\big)$,
\item $\phi^{\epsilon,\kappa}_\gamma:=\underset{\alpha\in\Gamma}{\sum}
\mathbb{F}^{\epsilon,\kappa}_{\alpha\gamma}\mathring{\phi}^{\epsilon,\kappa}_\alpha$\,,\quad $\pi^{\epsilon,\kappa}:=\sum_{\gamma\in \Gamma} |\phi^{\epsilon,\kappa}_\gamma\rangle\langle \phi^{\epsilon,\kappa}_\gamma| =(\pi^{\epsilon,\kappa})^2=(\pi^{\epsilon,\kappa})^*$.
\end{enumerate}
\end{definition}
Using the notation introduced in \eqref{Aek}-\eqref{sh-not-magn-2} we notice that 
\beq\label{Lambda-dec}
\Lambda^{\epsilon,\kappa}(x,y)=\widetilde{\Lambda}^{\epsilon,\kappa}(x,
y)\Lambda^\epsilon(x,y)
\eeq
so that
\beq \label{phase-fact-dec}
\overset{\circ}{\phi_\gamma^{\epsilon,\kappa}}(x)\,=\,\widetilde{\Lambda}^{
\epsilon,\kappa}(x,\gamma)\phi^\epsilon_\gamma(x)
\eeq
If we simply replace the Wannier function $\phi_0$ in Section 3 of \cite{CHP} with the quasi Wannier function $\psi_0$, all the results of Section 3 of \cite{CHP} remain true due to the fact that the only properties of $\phi_0$ that are used are its smoothness and rapid decay together with all its derivatives, and these facts are true for the quasi Wannier function too. Thus we obtain the following statement, extending the results in Section 3 of \cite{CHP}:

\begin{proposition}\label{Prop-m-q-W}~
There exists $\epsilon_0>0$ such that:
\begin{enumerate}
\item  $\mathbb{F}^\epsilon$ belongs to $\mathcal{L}\big(\ell^2(\Gamma)\big)\bigcap
\mathcal{L}\big(\ell^\infty(\Gamma)\big)$  for any $\epsilon\in[0,\epsilon_0]\,$, and satisfies:
$$\forall m\in\mathbb{N},\ \exists C_m \mbox{ s.t. } <\alpha-\beta>^m\left|\mathbb{F}^\epsilon_{\alpha\beta}-\delta_{\alpha\beta}\right|\leq\,C_m\,\epsilon\,,\ \forall(\alpha,\beta)\in\Gamma^{2}\,,\, \forall \epsilon\in[0,\epsilon_0]\,, 
$$ and there exists a rapidly decaying function $\mathbf{F}^\epsilon:\Gamma\mapsto \mathbb{C}$ such that
$$
\mathbb{F}^\epsilon_{\alpha\beta}\ =\ \Lambda^\epsilon(\alpha,\beta)\mathbf{F}^\epsilon(\alpha-\beta)\,.
$$
\item For any $m\in\mathbb{N}$ and any $a\in\mathbb{N}^2\,$, there exists $C_{m,a}>0$ such that
$$
\underset{x\in\X}{\sup}<x>^m\left|\big(\partial^a\psi^\epsilon_0\big)(x)-\big(\partial^a\psi_0\big)(x)\right|\ \leq\ C_{m,a}\epsilon\,, \qquad\forall\epsilon\in[0,\epsilon_0]\,.
$$
\item For any $m\in\mathbb{N}\,$, there exists $C_m>0$ such that
\beq
\underset{(\alpha,\beta)\in\Gamma^2}{\sup}
<\alpha-\beta>^m\left|\mathbb{F}^{\epsilon,\kappa}_{\alpha,\beta}-\delta_{\alpha\beta}
\right|\leq C_m\, \kappa\epsilon\,,\
\forall(\epsilon,\kappa)\in[0,\epsilon_0]\times[0,1]\,.
\eeq
\end{enumerate}
\end{proposition}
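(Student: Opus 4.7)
The overall plan is to follow, essentially verbatim, the constructions and estimates of Section~3 of \cite{CHP}, simply replacing the genuine Wannier function $\phi_0$ used there by our quasi Wannier function $\psi_0$ from \eqref{pseudo-W-f}. This replacement is legitimate because the arguments of \cite{CHP} use only that $\psi_0\in\mathscr{S}(\X)$, which is the content of Proposition~\ref{free-ps-W}, together with the magnetic geometry encoded in \eqref{defLambdaOmega}--\eqref{defOmega}. I outline the main steps below.

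For item~1, a direct computation starting from the definition of $\mathring{\phi}^\epsilon_\gamma$, then applying the cocycle identity \eqref{D-Omega} to pull out a factor of $\Lambda^\epsilon(\alpha,\beta)$, and finally translating the integration variable by $\alpha$, gives the factorized form $\mathbb{G}^\epsilon_{\alpha\beta}=\Lambda^\epsilon(\alpha,\beta)\mathbf{G}^\epsilon(\alpha-\beta)$, with $\mathbf{G}^\epsilon(0)=1$ and with rapid decay of $\mathbf{G}^\epsilon(\gamma)-\delta_{\gamma,0}$ of order $\epsilon$, thanks to the Schwartz decay of $\psi_0$ and the estimate \eqref{Est-Omega} on $\Omega^\epsilon-1$. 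The crucial algebraic observation is that matrices of the form $M_{\alpha\beta}=\Lambda^\epsilon(\alpha,\beta)\mathbf{M}(\alpha-\beta)$ form an algebra under ordinary matrix multiplication: using once more \eqref{D-Omega} one checks that $(MN)_{\alpha\beta}=\Lambda^\epsilon(\alpha,\beta)(\mathbf{M}\star\mathbf{N})(\alpha-\beta)$ for a twisted convolution $\star$ on $\Gamma$. Since $\|\mathbb{G}^\epsilon-I\|\leq C\epsilon$ is small, $\mathbb{F}^\epsilon=(\mathbb{G}^\epsilon)^{-1/2}$ is given by the norm-convergent binomial series in $\mathbb{G}^\epsilon-I$; every partial sum lies in the above algebra, and a Schur-type estimate on $\star$ transfers the rapid-decay bound from $\mathbf{G}^\epsilon$ to $\mathbf{F}^\epsilon$. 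The $\ell^2$ and $\ell^\infty$ boundedness then follows by Young's inequality.

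For item~2, one simply writes
\begin{equation*}
\psi^\epsilon_0(x)-\psi_0(x)=(\mathbb{F}^\epsilon_{00}-1)\psi_0(x)+\sum_{\alpha\neq 0}\mathbb{F}^\epsilon_{\alpha 0}\,\Omega^\epsilon(\alpha,0,x)\,\psi_0(x-\alpha),
\end{equation*}
and observes that every $\partial^a$-derivative in $x$ either lands on a Schwartz factor $\psi_0(x-\alpha)$ (rapidly decaying in $x-\alpha$, and hence in $\langle x\rangle$ after using the rapid decay of $\mathbb{F}^\epsilon_{\alpha 0}$ in $\alpha$) or on the phase $\Omega^\epsilon(\alpha,0,x)$ (producing only polynomial factors in $\alpha$ together with an additional factor of $\epsilon$ which is absorbed by the Schwartz tail of $\psi_0$). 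For item~3 we use \eqref{Lambda-dec}--\eqref{phase-fact-dec} together with the orthonormality $\langle\phi^\epsilon_\alpha,\phi^\epsilon_\beta\rangle=\delta_{\alpha\beta}$ to rewrite
\begin{equation*}
\mathbb{G}^{\epsilon,\kappa}_{\alpha\beta}-\delta_{\alpha\beta}=\int_\X\big[\overline{\widetilde{\Lambda}^{\epsilon,\kappa}(x,\alpha)}\,\widetilde{\Lambda}^{\epsilon,\kappa}(x,\beta)-1\big]\,\overline{\phi^\epsilon_\alpha(x)}\,\phi^\epsilon_\beta(x)\,dx;
\end{equation*}
the bracket is controlled by a cocycle identity for the field $\kappa\epsilon B_\epsilon$ together with \eqref{Est-Omega}, which supplies a factor $\kappa\epsilon$ because $\|B_\epsilon\|_\infty\leq\|B\|_\infty$. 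Repeating the functional-calculus argument of the previous paragraph yields the asserted bound on $\mathbb{F}^{\epsilon,\kappa}$.

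The main obstacle lies in the second half of the first step: one must show that polynomially weighted estimates on $\mathbf{M}$ are preserved under the twisted convolution $\star$, with constants summable enough to control the binomial series for $(\,\cdot\,)^{-1/2}$. This is standard but delicate bookkeeping, and I intend to invoke the corresponding lemmas of Section~3 of \cite{CHP} verbatim, after the routine check that the input estimate on $\mathbf{G}^\epsilon$ obtained above is identical in form to the one in \cite{CHP}, only with $\phi_0$ replaced by $\psi_0$.
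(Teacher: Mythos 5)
Your proposal is correct and takes essentially the same approach as the paper: the paper's own proof is a one-line appeal to Section~3 of \cite{CHP}, replacing the Wannier function $\phi_0$ there by the quasi Wannier function $\psi_0$, whose only relevant properties are smoothness and rapid decay. Your write-up is the same citation fleshed out with a (correct) sketch of the constituent arguments from \cite{CHP} --- the $\Lambda^\epsilon$-factorization $\mathbb{G}^\epsilon_{\alpha\beta}=\Lambda^\epsilon(\alpha,\beta)\mathbf{G}^\epsilon(\alpha-\beta)$, closure of this factored class under the $\epsilon B_0$-twisted convolution, the series/functional-calculus construction of $(\mathbb{G}^\epsilon)^{-1/2}$ with weighted Schur bounds, and the $\widetilde{\Omega}^{\epsilon,\kappa}$ estimate combined with the orthonormality of $\{\phi^\epsilon_\gamma\}$ for item~3.
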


Replacing $\psi_0$ with $\psi^\epsilon_0$ in the proof of Proposition \ref{bd-Ham0-band} we obtain the following result:

\begin{proposition}\label{bd-Hameps-band}
There exists $\epsilon_0>0$ such that for any $(\epsilon,\kappa)\in[0,\epsilon_0]\times[0,1]$ we have 
 $${\pi}^{\epsilon,\kappa}\mathcal{H}\subset\mathcal{D}(H^{\epsilon,\kappa})\,,$$
  while 
$H^{\epsilon,\kappa}{\pi}^{\epsilon,\kappa}\in\mathcal L(\mathcal{H})$ and
${\pi}^{\epsilon,\kappa} H^{\epsilon,\kappa}$ has a bounded closure.
\end{proposition}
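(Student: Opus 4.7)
The plan is to mirror the proof of Proposition \ref{bd-Ham0-band}, adapting each step to the magnetic setting. The essential differences are that the quasi-Wannier generator $\psi_0$ is replaced by the magnetically translated and dressed functions $\phi^{\epsilon,\kappa}_\gamma$, and that the role of the Floquet decomposition (implicit in the non-magnetic argument) must be taken over by the magnetic pseudodifferential calculus of Section \ref{ss1.3}.

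First, I would show that $\psi^\epsilon_0 \in \mathscr{S}(\mathcal{X})$ with Schwartz seminorms bounded uniformly in $\epsilon \in [0,\epsilon_0]$: since $\psi_0 \in \mathscr{S}(\mathcal{X})$ by Proposition \ref{free-ps-W} and Proposition \ref{Prop-m-q-W}(2) ensures that every polynomially weighted $L^\infty$ seminorm of every derivative of $\psi^\epsilon_0-\psi_0$ is $O(\epsilon)$, all Schwartz seminorms of $\psi^\epsilon_0$ are uniformly bounded in $\epsilon$. The bare generator $\mathring{\phi}^{\epsilon,\kappa}_\gamma(x)=\Lambda^{\epsilon,\kappa}(x,\gamma)\psi^\epsilon_0(x-\gamma)$ is therefore smooth with rapid decay around $\gamma$: the unit-modulus phase $\Lambda^{\epsilon,\kappa}(\cdot,\gamma)$ contributes only polynomially growing derivatives (because $A^{\epsilon,\kappa}$ has components in $C^\infty_{\text{\tt pol}}(\mathcal{X})$), which are absorbed by the Schwartz decay of $\psi^\epsilon_0$. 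Since $H^{\epsilon,\kappa}$ is a second-order differential operator with polynomially bounded coefficients, a direct computation gives $H^{\epsilon,\kappa}\mathring{\phi}^{\epsilon,\kappa}_\gamma \in \mathcal{H}$, hence $\mathring{\phi}^{\epsilon,\kappa}_\gamma \in \mathcal{D}(H^{\epsilon,\kappa})$ for every $\gamma\in\Gamma$. Using the rapid decay of $\mathbb{F}^{\epsilon,\kappa}_{\alpha\gamma}$ in $|\alpha-\gamma|$ from Proposition \ref{Prop-m-q-W}(3), the series defining $\phi^{\epsilon,\kappa}_\gamma = \sum_\alpha \mathbb{F}^{\epsilon,\kappa}_{\alpha\gamma}\mathring{\phi}^{\epsilon,\kappa}_\alpha$ converges in the graph norm of $H^{\epsilon,\kappa}$, so $\phi^{\epsilon,\kappa}_\gamma \in \mathcal{D}(H^{\epsilon,\kappa})$, and, by the same argument applied to arbitrary $\ell^2(\Gamma)$-combinations, the closed span $\pi^{\epsilon,\kappa}\mathcal{H}$ lies in $\mathcal{D}(H^{\epsilon,\kappa})$.

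The main obstacle is establishing that $H^{\epsilon,\kappa}\pi^{\epsilon,\kappa}$ extends to a bounded operator on all of $\mathcal{H}$, since the $\gamma$-dependent coefficients of $H^{\epsilon,\kappa}$ have polynomial growth and we cannot invoke a direct-integral decomposition. Here I would follow the magnetic pseudodifferential route: by the results of Section 3 of \cite{CHP} (whose hypotheses use only the smoothness and rapid decay of the quasi-Wannier seed, as noted before Proposition \ref{Prop-m-q-W}), the projection $\pi^{\epsilon,\kappa}$ is the magnetic Weyl quantization of a symbol $p^{\epsilon,\kappa} \in S^{-\infty}(\Xi)$. Combined with $H^{\epsilon,\kappa}=\mathfrak{Op}^{\epsilon,\kappa}(h)$ for $h(x,\xi)=\xi^2+V_\Gamma(x)\in S^2(\Xi)$, the magnetic composition theorem (Theorem 2.2 of \cite{IMP1}) yields $H^{\epsilon,\kappa}\pi^{\epsilon,\kappa}=\mathfrak{Op}^{\epsilon,\kappa}(h\,\sharp^{B^{\epsilon,\kappa}}\,p^{\epsilon,\kappa})$ with symbol again in $S^{-\infty}(\Xi)$, which is bounded on $\mathcal{H}$ by the magnetic Calderon-Vaillancourt bound of Theorem 3.1 in \cite{IMP1}. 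Finally, $\pi^{\epsilon,\kappa}H^{\epsilon,\kappa}$ defined on $\mathcal{D}(H^{\epsilon,\kappa})$ is a restriction of the adjoint $\big(H^{\epsilon,\kappa}\pi^{\epsilon,\kappa}\big)^*$, hence it has a bounded closure.
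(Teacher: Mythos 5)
Your proof is correct and its first half (the domain inclusion via the Schwartz regularity of $\psi^\epsilon_0$, the polynomial control of the magnetic phase $\Lambda^{\epsilon,\kappa}(\cdot,\gamma)$, and the rapid decay of $\mathbb{F}^{\epsilon,\kappa}$) is exactly what the paper intends: its entire proof consists of the instruction ``replace $\psi_0$ with $\psi^\epsilon_0$ in the proof of Proposition~\ref{bd-Ham0-band},'' and you have filled in the details of that substitution honestly, including the point the paper glosses over, namely that $\pi^{\epsilon,\kappa}\mathcal{H}$ is spanned not by $\tau_\gamma\psi^\epsilon_0$ but by the magnetically dressed $\phi^{\epsilon,\kappa}_\gamma$. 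Where you genuinely deviate is in establishing the boundedness of $H^{\epsilon,\kappa}\pi^{\epsilon,\kappa}$: the paper (following the template of Proposition~\ref{bd-Ham0-band}) deduces this from the domain inclusion $\pi^{\epsilon,\kappa}\mathcal{H}\subset\mathcal{D}(H^{\epsilon,\kappa})$ by the closed graph theorem, whereas you instead compute the magnetic Moyal product $h\,\sharp^{B^{\epsilon,\kappa}}\,p^{\epsilon,\kappa}\in S^{-\infty}(\Xi)$ and invoke the magnetic Calderon--Vaillancourt bound. Both are valid; your pseudodifferential route has the advantage of being quantitative (it yields seminorm control and $\epsilon$-uniformity for free, which is used later anyway), and it sidesteps the almost-orthogonality estimate that the elementary route implicitly requires to pass from individual $\phi^{\epsilon,\kappa}_\gamma\in\mathcal{D}(H^{\epsilon,\kappa})$ to arbitrary $\ell^2(\Gamma)$-combinations -- a step you state but do not spell out, at the same level of brevity as the paper itself. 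The final closure argument via $\bigl(H^{\epsilon,\kappa}\pi^{\epsilon,\kappa}\bigr)^*$ matches the paper exactly.
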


We can also construct the $\Gamma$-periodic symbol $p_\epsilon\in
S^{-\infty}(\Xi)$ such that $\pi^\epsilon=\mathfrak{Op}^\epsilon(p_\epsilon)$ and the symbol $p_{\epsilon,\kappa}\in
S^{-\infty}(\Xi)$ such that ${\pi}^{\epsilon,\kappa}=\mathfrak{Op}^{\epsilon,\kappa}(p_{\epsilon,\kappa})$ and the same proof as in \cite{CHP} gives  the following result:
\begin{proposition}\label{peps-p}
There exists $\epsilon_0>0$ such that for any seminorm  $\nu$ on
$S^{-\infty}(\Xi)$, there exists $C_\nu>0$
such that 
$$
\nu(p^\epsilon-p)\leq\,C_\nu\,\epsilon 
\mbox{ and } \,
\nu(p^{\epsilon,\kappa}-p^\epsilon)\leq\,C_\nu\,\kappa\epsilon\,,\,
\forall(\epsilon,\kappa)\in[0,\epsilon_0]\times[0,1]\,.
$$
\end{proposition}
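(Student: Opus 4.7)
The plan is to mimic the approach of \cite{CHP}: write the Schwartz kernels of $\pi$, $\pi^\epsilon$ and $\pi^{\epsilon,\kappa}$ explicitly in terms of $\psi_0$, $\psi^\epsilon_0$ and the relevant gauge phases, extract the corresponding Weyl or magnetic-Weyl symbols via \eqref{magn-quant} and the Fourier inversion \eqref{iulie3}, and bound each seminorm $\nu^s_{n,m}$ of the differences by combining the pointwise estimates in Proposition \ref{Prop-m-q-W} with the triangular flux bound \eqref{Est-Omega}.

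The first concrete step I would take is the identity
\[
(\mathfrak{W} p_\epsilon)(x,y)\,=\,\Lambda^\epsilon(y,x)\sum_{\gamma\in\Gamma}\phi^\epsilon_\gamma(x)\overline{\phi^\epsilon_\gamma(y)}\,=\,\sum_{\gamma\in\Gamma}\Omega^\epsilon(y,x,\gamma)\,\psi^\epsilon_0(x-\gamma)\,\overline{\psi^\epsilon_0(y-\gamma)},
\]
whose second equality follows from Definition \ref{m-q-W-def} and the cocycle identity \eqref{D-Omega}; the analogous formula for $p$ carries $\psi_0$ in place of $\psi^\epsilon_0$ and no $\Omega$ factor. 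A telescoping decomposition
\[
(\mathfrak{W}(p_\epsilon-p))(x,y)=\sum_\gamma\Big\{(\Omega^\epsilon(y,x,\gamma)-1)\psi^\epsilon_0(x-\gamma)\overline{\psi^\epsilon_0(y-\gamma)}+(\psi^\epsilon_0-\psi_0)(x-\gamma)\overline{\psi^\epsilon_0(y-\gamma)}+\psi_0(x-\gamma)\overline{(\psi^\epsilon_0-\psi_0)(y-\gamma)}\Big\}
\]
then exhibits three pieces, each of order $\epsilon$: the first is controlled by \eqref{Est-Omega}, the remaining two by Proposition \ref{Prop-m-q-W}(2), with arbitrary polynomial decay in all variables, uniformly in $\epsilon$.

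To turn these pointwise kernel bounds into seminorm estimates I would pass to the symmetric variables $\tilde x=(x+y)/2$, $v=x-y$ (under which $(x-y)\wedge(\gamma-y)=v\wedge(\gamma-\tilde x)$) and invert the Weyl transform through a Fourier transform in $v$. Repeated integration by parts against $\langle v\rangle^{-2N}(1-\Delta_v)^N$ produces arbitrary polynomial decay in $\xi$, while the derivatives $\partial^\alpha_x\partial^\beta_\xi$ land on polynomials in $v$, on derivatives of $\Omega^\epsilon$, and on derivatives of the Schwartz functions $\psi^\epsilon_0$ and $\psi_0$. All the polynomial factors so produced, in particular the extra factor $|v\wedge(\gamma-\tilde x)|$ coming from \eqref{Est-Omega}, are absorbed by the uniform Schwartz decay of $\psi_0$ and $\psi^\epsilon_0$ in $|\tilde x\pm v/2-\gamma|$ granted by Proposition \ref{Prop-m-q-W}(2), which also ensures the absolute convergence of the $\gamma$-sum. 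This yields $\nu(p_\epsilon-p)\leq C_\nu\,\epsilon$ for every seminorm $\nu$ of $S^{-\infty}(\Xi)$.

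The estimate $\nu(p^{\epsilon,\kappa}-p^\epsilon)\leq C_\nu\,\kappa\epsilon$ would follow from an identical scheme, starting from the expression for $(\mathfrak{W}p_{\epsilon,\kappa})(x,y)$ obtained by combining Definition \ref{m-q-W-def} with the factorisations \eqref{Lambda-dec}--\eqref{phase-fact-dec}. The difference with the corresponding expansion of $\mathfrak{W}p^\epsilon$ then splits into three types of contributions: terms linear in $\mathbb{F}^{\epsilon,\kappa}_{\alpha\beta}-\delta_{\alpha\beta}$, of order $\kappa\epsilon$ by Proposition \ref{Prop-m-q-W}(3); terms proportional to $\widetilde{\Omega}^{\epsilon,\kappa}-1$, for which \eqref{Est-Omega} applied to the scaled field $\kappa\epsilon B_\epsilon$ supplies the prefactor $\kappa\epsilon$; and terms produced by expanding the gauge phase $\widetilde{\Lambda}^{\epsilon,\kappa}$, contributing only polynomial factors in $v$ and $\gamma-\tilde x$, again absorbed by the Schwartz decay of $\psi^\epsilon_0$. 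The main technical obstacle, here as in \cite{CHP}, is precisely the bookkeeping required to show, uniformly in $\epsilon$ and for arbitrarily many derivatives in $x$ and $\xi$, that every polynomial factor generated by differentiating $\widetilde{\Lambda}^{\epsilon,\kappa}$, $\Omega^\epsilon$ or $\widetilde{\Omega}^{\epsilon,\kappa}$ remains dominated by the rapid decay of $\psi_0$ and $\psi^\epsilon_0$; this is exactly where the slow variation $x\mapsto\epsilon x$ of $A$ and $B$ enters, since each derivative falling on $\widetilde{\Lambda}^{\epsilon,\kappa}$ produces a harmless extra factor of order $\kappa\epsilon$.
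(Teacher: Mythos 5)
Your proposal follows essentially the same route the paper takes: the paper simply invokes ``the same proof as in \cite{CHP}'' with $\phi_0$ replaced by the quasi-Wannier function $\psi_0$, and your sketch reconstructs exactly the mechanics of that [CHP] argument — expressing the integral kernels through Definition \ref{m-q-W-def} and the cocycle identity \eqref{D-Omega}, telescoping, and feeding the pointwise decay estimates of Proposition \ref{Prop-m-q-W} together with the flux bound \eqref{Est-Omega} into the Fourier inversion that produces the symbol. The key structural observations (the $\Omega^\epsilon$ factor for $\mathfrak{W}p_\epsilon$, the three-term split, the fact that $\partial_{\tilde x}$ of the gauge phase in symmetric variables only produces $\kappa\epsilon\,\nabla A_\epsilon$ contributions) are the correct ones and match what [CHP] does, so the proof is sound.
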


Note that in this case the commutator
$\big[H^{\epsilon,\kappa},{\pi}^{\epsilon,\kappa}\big]$ is no longer
small (due to the arbitrary deformation which was made in constructing the quasi Wannier function).

\section{Checking the conditions of the abstract reduction procedure for the magnetic operators}
\label{S.6}

We want to apply  Propositions \ref{bernard7'} and \ref{C-FS} for magnetic pseudodifferential operators and to control the behavior of their symbols. The following abstract result concerns the symbol of the "reduced resolvent" $R_\bot(E)$ which is introduced after \eqref{eq:lb}. In our applications below we shall replace $H$ with $H-E\bb1$ for some $E<\inf\sigma(H)$.

Suppose given some real elliptic symbol $h\in S^m_1(\Xi)_{\text{\sf ell}}$ for
some $m>0$ and  consider the magnetic pseudodifferential calculus,
denoted by $\mathfrak{Op}^A$, associated with  a smooth polynomially bounded vector potential $A$ such the  magnetic field $B=\curl A$ is of class $BC^\infty(\X)$. Let $$H= \mathfrak{Op}^A(h)$$ and suppose also given an orthogonal
projection $\Pi\equiv\mathfrak{Op}^A(p)$ with $p\in S^{-\infty}(\Xi)$ and such
that $\Pi\mathcal{H}\subset\mathcal{D}(H)$. Let us also
introduce  $$H^\bot:=\Pi^\bot H\Pi^\bot \mbox{ with }
\Pi^\bot:=\bb1-\Pi\,.
$$

\begin{proposition}\label{red-mPsD}
With the above notation, if $H^\bot$ is invertible on
$\Pi^\bot\mathcal{H}$ with bounded inverse
$R\in\mathcal L(\Pi^\bot\mathcal{H})$, then there exists
$r\in S^{-m}(\Xi)$ such that $\Pi^\bot R\Pi^\bot=\mathfrak{Op}^A(r)$.
\end{proposition}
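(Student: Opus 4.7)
The strategy is to reduce the claim to the known resolvent calculus for elliptic magnetic pseudo-differential operators by extending $H^\bot$, which acts only on $\Pi^\bot\mathcal{H}$, to an operator invertible on the whole of $\mathcal{H}$. I introduce
$$T := \Pi^\bot H \Pi^\bot + \Pi$$
with $\mathcal{D}(T) := \mathcal{D}(H)$; the assumption $\Pi\mathcal{H}\subset\mathcal{D}(H)$ together with the closed graph theorem applied to $H\Pi$ ensures that all four pieces of $T - H = -\Pi H - H\Pi + \Pi H\Pi + \Pi$ are bounded, so $\mathcal{D}(T) = \mathcal{D}(H)$ is legitimate. A direct algebraic check using $\Pi^2 = \Pi$, $\Pi\,\Pi^\bot = 0$ and $H^\bot R = \bb1_{\Pi^\bot\mathcal{H}}$ shows that $T$ is symmetric with bounded two-sided inverse
$$T^{-1} = \Pi^\bot R \Pi^\bot + \Pi.$$
Consequently $T$ is self-adjoint and $0$ lies in its resolvent set, so obtaining a symbol for $T^{-1}$ will immediately yield a symbol for $\Pi^\bot R \Pi^\bot = T^{-1} - \Pi$ by subtracting $p$.

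The magnetic composition theorem of \cite{MP1,IMP1} provides the Weyl magnetic symbol of $T$:
$$q := (1-p) \sharp^B h \sharp^B (1-p) + p.$$
Since $p \in S^{-\infty}(\Xi)$ and $h \in S^m(\Xi)$, every term in the expansion of $q$ containing at least one factor $p$ belongs to $S^{-\infty}(\Xi)$, so $q - h \in S^{-\infty}(\Xi)$. Hence $q$ is a real elliptic symbol of order $m$. Applying the resolvent theorem recalled just before Proposition \ref{IMP} (Theorem 6.5 of \cite{IMP2}) at the spectral value $0$, which lies in the resolvent set of the self-adjoint closure of $\mathfrak{Op}^A(q)$, yields $\tilde r \in S^{-m}(\Xi)$ with $T^{-1} = \mathfrak{Op}^A(\tilde r)$. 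Setting
$$r := \tilde r - p \in S^{-m}(\Xi)$$
(where $S^{-\infty}(\Xi) \subset S^{-m}(\Xi)$ is used) completes the argument.

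The subtle point requiring care is the identification of the abstractly constructed $T$ with the canonical self-adjoint realization of $\mathfrak{Op}^A(q)$ to which the calculus of \cite{IMP2} is directly applicable. Because $p \in S^{-\infty}(\Xi)$, the projection $\Pi$ maps $\mathscr{S}(\X)$ continuously into itself and preserves the magnetic Sobolev space $\mathcal{H}^m_A$; hence $\mathcal{D}(H) = \mathcal{H}^m_A$ is stable under both $\Pi$ and $\Pi^\bot$, and the magnetic composition formula yields $T u = \mathfrak{Op}^A(q) u$ for every $u \in \mathscr{S}(\X)$. Since $\mathscr{S}(\X)$ is a core for both self-adjoint operators, they coincide, and the symbol $\tilde r$ produced by the resolvent calculus genuinely represents $T^{-1}$, as required.
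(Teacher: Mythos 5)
Your proof is correct and is, at its core, the same argument as the paper's: you introduce the globally invertible operator $T=\Pi+\Pi^\bot H\Pi^\bot$ (the paper writes this same identity $\Pi^\bot R\Pi^\bot=\Pi^\bot(\Pi+\Pi^\bot H\Pi^\bot)^{-1}\Pi^\bot$ explicitly), observe that its magnetic symbol $q=(1-p)\sharp^B h\sharp^B(1-p)+p$ is an elliptic order-$m$ symbol differing from $h$ by an element of $S^{-\infty}(\Xi)$, and then invoke the magnetic calculus from \cite{IMP2} to place the symbol of $T^{-1}$ in $S^{-m}(\Xi)$. The only visible divergence is cosmetic: you pass through the resolvent theorem (Theorem 6.5 of \cite{IMP2}), identifying $T$ with the canonical self-adjoint realization of $\mathfrak{Op}^A(q)$ and using $0\in\rho(T)$, whereas the paper cites the Beals-type inversion criterion in Proposition \ref{IMP}(3). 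Both references come from the same paper, prove the same statement, and require the same kind of operator-norm input (boundedness of $T^{-1}$ into the magnetic Sobolev space, which in your formulation is packaged into the self-adjoint-realization identification); the extra care you devote to the core/domain issue is sound and would remain implicit in the paper's terse two-line proof.
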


\begin{proof}
We notice that 
\beq
\Pi^\bot R\Pi^\bot=\Pi^\bot\big(\Pi\,+\,\Pi^\bot H\Pi^\bot\big)^{-1}\Pi^\bot
\eeq
and use point (3) of Proposition \ref{IMP}.
\end{proof}

Let us verify that the pair of operators $(H^{\epsilon,\kappa},\pi^{\epsilon,\kappa})$ satisfies Hypothesis \ref{Prop-FS}.

\begin{proposition}\label{FS-Hek}
There exist $\epsilon_0>0$ and $b_0\in(0,\widetilde{b})$ (with $\widetilde{b}>0$ introduced in Lemma \ref{newlemma}) such that for any $(\epsilon,\kappa)\in[0,\epsilon_0]\times[0,1]$ the pair of operators  $(H^{\epsilon,\kappa},\pi^{\epsilon,\kappa})$ verify the following properties:
\begin{enumerate}
 \item $H^{\epsilon,\kappa}\pi^{\epsilon,\kappa}$ is 
bounded on $\mathcal{H}$, uniformly in $\epsilon$ and $\kappa\,$.
\item 
$(\bb1-\pi^{\epsilon,\kappa})H^{\epsilon,\kappa}(\bb1-\pi^{\epsilon,\kappa}
)\geq b_0\,(\bb1-\pi^{\epsilon,\kappa})\,$.
\end{enumerate}
\end{proposition}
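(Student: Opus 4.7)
The plan has two parts. Part (1) follows from the magnetic pseudodifferential calculus; the heart is Part (2), where the unperturbed lower bound comes from Bloch--Floquet theory and must then be extended to the $(\epsilon,\kappa)$-perturbed setting.

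For (1), I would use the symbolic structure of $\pi^{\epsilon,\kappa}$. Propositions \ref{free-ps-W} and \ref{peps-p} yield $\pi^{\epsilon,\kappa}=\mathfrak{Op}^{A^{\epsilon,\kappa}}(p^{\epsilon,\kappa})$ with $p^{\epsilon,\kappa}\in S^{-\infty}(\Xi)$ whose seminorms are bounded uniformly in $(\epsilon,\kappa)\in[0,\epsilon_0]\times[0,1]$. Since $h(x,\xi)=\xi^2+V_\Gamma(x)\in S^2(\Xi)$, the composition theorem for the magnetic Moyal calculus (Theorem 2.2 in \cite{IMP1}) gives $h\sharp^{B^{\epsilon,\kappa}}p^{\epsilon,\kappa}\in S^{-\infty}(\Xi)$ with seminorms controlled by those of $h$, $p^{\epsilon,\kappa}$, and the $BC^\infty$-norms of $B^{\epsilon,\kappa}$, all uniformly bounded. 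The magnetic Calder\'on--Vaillancourt estimate (Theorem 3.1 in \cite{IMP1}) then yields the uniform $\mathcal{L}(\mathcal{H})$-bound on $H^{\epsilon,\kappa}\pi^{\epsilon,\kappa}=\mathfrak{Op}^{A^{\epsilon,\kappa}}(h\sharp^{B^{\epsilon,\kappa}}p^{\epsilon,\kappa})$.

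For (2) at $\epsilon=\kappa=0$, I would decompose $\pi^\perp H^0\pi^\perp$ as the direct integral with fibers $(1-\hat\pi(\theta))\hat H^0(\theta)(1-\hat\pi(\theta))$. For $\theta\in\overline{\Sigma_b}$, $\hat\pi(\theta)=|\hat\phi_0(\theta)\rangle\langle\hat\phi_0(\theta)|$ projects onto the simple ground state of $\hat H^0(\theta)$, so the fiber is bounded below by $\lambda_1(\theta)$; by Hypothesis \ref{Hyp-1} and Lemma \ref{newlemma}, $\lambda_0$ is simple on $\overline{\Sigma_b}$, whence $\lambda_1-\lambda_0>0$ continuously, and by compactness $\lambda_1^{\mathrm{min}}:=\inf_{\overline{\Sigma_b}}\lambda_1>0$. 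For $\theta\notin\Sigma_b$, Lemma \ref{newlemma} yields $\hat H^0(\theta)\geq b$, so the fiber is $\geq b(1-\hat\pi(\theta))$. Setting $2b_0:=\min(\lambda_1^{\mathrm{min}},b)$ produces $\pi^\perp H^0\pi^\perp\geq 2b_0\,\pi^\perp$.

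For (2) in the perturbed case, the principal obstacle is that $H^{\epsilon,\kappa}-H^0$ is an unbounded operator (the linearly growing term $\epsilon A^0$ in $A^{\epsilon,\kappa}$ is not $H^0$-bounded), so a naive operator-norm perturbation is unavailable. I would instead pass to the symbol level and compare resolvents. Pick $M>b_0$ and set
$$
W^{\epsilon,\kappa}:=(1-\pi^{\epsilon,\kappa})(H^{\epsilon,\kappa}-b_0)(1-\pi^{\epsilon,\kappa})+M\pi^{\epsilon,\kappa};
$$
the desired bound on $(1-\pi^{\epsilon,\kappa})\mathcal{H}$ is equivalent to $W^{\epsilon,\kappa}\geq 0$ on $\mathcal{H}$, and in the unperturbed case we already have $W^0\geq b_0>0$. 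Since $W^{\epsilon,\kappa}$ is the magnetic quantization of an elliptic $S^2(\Xi)$-symbol $q^{\epsilon,\kappa}$ built from $h$ and $p^{\epsilon,\kappa}$ via $\sharp^{B^{\epsilon,\kappa}}$-products, and $\|B^{\epsilon,\kappa}-B^\Gamma\|_{BC^\infty}=O(\epsilon)$, a combined application of Proposition \ref{peps-p}, Proposition \ref{IMP}(3), and the resolvent symbol construction of \cite{IMP2} shows that for $\eta>0$ small both $(W^0+\eta)^{-1}$ and $(W^{\epsilon,\kappa}+\eta)^{-1}$ have $S^{-2}(\Xi)$-symbols that agree up to $O(\epsilon)$ in the relevant seminorms. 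A Neumann-series argument then propagates the positivity $W^0\geq b_0$ to $W^{\epsilon,\kappa}\geq b_0/2$ for $\epsilon$ small enough (uniformly in $\kappa\in[0,1]$). The hardest step is precisely this symbol-level resolvent comparison across distinct magnetic quantizations; after re-labelling $b_0$, one obtains the claim.
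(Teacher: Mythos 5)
Your treatment of item (1) is a valid alternative to the paper's: the paper derives it from Proposition \ref{bd-Hameps-band} via the rapid decay of the magnetic quasi Wannier functions, while you work directly at the symbol level with the magnetic composition theorem and the magnetic Calder\'on--Vaillancourt estimate; both routes are legitimate. Your unperturbed estimate $\pi^\perp H^0\pi^\perp\geq 2b_0\pi^\perp$ via the fibered Bloch--Floquet analysis is also correct (inside $\Sigma_b$ the fiber is $\geq\lambda_1(\theta)$ and $\lambda_1>0$ on the compact $\overline{\Sigma}_b$; outside, $\hat H^0(\theta)\geq b$ compresses to $\geq b(1-\hat\pi(\theta))$).

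The gap is in the perturbed case, and you flag it yourself: the ``symbol-level resolvent comparison across distinct magnetic quantizations'' is asserted but not established. You claim that $(W^0+\eta)^{-1}$ and $(W^{\epsilon,\kappa}+\eta)^{-1}$ have $S^{-2}(\Xi)$-symbols agreeing to $O(\epsilon)$, and that ``a Neumann-series argument then propagates the positivity.'' Neither step is carried out, and your construction actually makes the first step harder than necessary. Because $\pi^{\epsilon,\kappa}$ is $(\epsilon,\kappa)$-dependent, your $W^{\epsilon,\kappa}$ has both an $(\epsilon,\kappa)$-dependent symbol \emph{and} an $(\epsilon,\kappa)$-dependent magnetic quantization. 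Proposition \ref{IMP}(3) gives only membership of the inverse symbol in $S^{-2}(\Xi)$, with no quantitative comparison; Proposition \ref{peps-p} controls $p^{\epsilon,\kappa}-p$ but not the change of Moyal product $\sharp^{\epsilon,\kappa}$ vs.\ $\sharp$; and no result you invoke bounds the operator norm of a difference of two magnetic quantizations applied to (possibly different) resolvent symbols. Obtaining $\|(W^{\epsilon,\kappa}+\eta)^{-1}-(W^0+\eta)^{-1}\|=O(\epsilon)$ and turning that into a lower bound on $\sigma(W^{\epsilon,\kappa})$ is precisely the technically delicate content that is missing.

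The paper avoids this double moving target by modifying $H^0$ itself rather than sandwiching it: it sets $W^0:=\mathscr{U}_\Gamma^{-1}\big(\int^\oplus g(\theta)\hat\pi(\theta)\,d\theta\big)\mathscr{U}_\Gamma$ (a bounded operator supported in $\pi\mathcal{H}$), so that $K^0:=H^0+W^0\geq b'\bb1$ as a genuine elliptic operator, and then defines $K^{\epsilon,\kappa}:=H^{\epsilon,\kappa}+\mathfrak{Op}^{\epsilon,\kappa}(w)$ with the \emph{same} symbol $h+w$ as $K^0$, only the magnetic quantization changing. This makes the spectral-edge regularity theorem (Corollary 1.6 in \cite{CP-2}) directly applicable and yields $K^{\epsilon,\kappa}\geq(3/4)b'$ uniformly in $\kappa$. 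The remaining $(\epsilon,\kappa)$-dependence of the projection is treated as a separate bounded perturbation: $\|\pi^{\epsilon,\kappa}W^{\epsilon,\kappa}\pi^{\epsilon,\kappa}-W^{\epsilon,\kappa}\|=O(\epsilon)$, which is a bounded-operator estimate obtained from Proposition \ref{peps-p} together with the Moyal product comparison results (Propositions B.12 and B.14 of \cite{CHP}). To repair your argument, you should either reduce to a comparison of operators with $(\epsilon,\kappa)$-\emph{independent} symbol, as the paper does, or explicitly import and apply the spectral-edge stability result of \cite{CP-2}; absent one of these, the positivity transfer you need is not proved.
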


\begin{proof}
The first statement is a direct consequence of Proposition \ref{bd-Hameps-band} and we focus on the second statement. 

Let $b^\prime<b<\widetilde{b}$ with $\widetilde{b}>0$ introduced in Lemma \ref{newlemma} so that the closure of $\Sigma_{b^\prime}$ is included in $\Sigma_{b}$.  We choose a function $g\in C^\infty_0(\Sigma_{b})$ such that $0\leq
g(\theta)\leq1$ and $g(\theta)=1$ on $\Sigma_{b^\prime}$. Then we define successively (using \eqref{hatpi}):
\beq\label{def-K}
\hat{K}^0(\theta):=\hat{H}^0(\theta)+g(\theta)\hat{\pi}(\theta)\,,\qquad
K^0:=\mathscr{U}_\Gamma^{-1}\left(\int_{E^*}^\oplus\hat{K}^0(\theta)\,
d\theta\right)\mathscr{U}_\Gamma\,,
\eeq
\beq\label{def-W}
W^0:=K^0-H^0,\quad w:=\mathfrak{S}(W^0),\quad
W^{\epsilon,\kappa}:=\mathfrak{Op}^{\epsilon,\kappa}(w),\quad K^{\epsilon,\kappa}:=H^{\epsilon,\kappa} +W^{\epsilon,\kappa}\,.
\eeq
By construction,  we have $K^0\geq b^\prime\bb1$, which implies:
\beq\label{l-bound-0}
\pi^\bot H^0\pi^\bot=\pi^\bot K^0\pi^\bot\geq b^\prime\pi^\bot.
\eeq
Using the regularity of the spectrum (see Corollary 1.6 in
\cite{CP-2}), we conclude that for $\epsilon_0>0$ small enough,
$K^{\epsilon,\kappa}\geq(3/4)b^\prime$ for every
$\epsilon\in[0,\epsilon_0]$ and $0\leq \kappa\leq 1$, and this  implies:
\beq\label{l-bound}
\big(\bb1-\pi^{\epsilon,\kappa}\big)K^{\epsilon,\kappa}\big(\bb1-\pi^{\epsilon,\kappa}\big)\geq(3/4)b^\prime\big(\bb1-\pi^{\epsilon,\kappa}\big).
\eeq
By construction, we have that $W^0=\pi W^0\pi$. We show that their magnetic counterparts are also close in norm. 
We write:
$$
\pi^{\epsilon,\kappa}W^{\epsilon,\kappa}\pi^{\epsilon,\kappa}-W^{\epsilon,\kappa}=\mathfrak{Op}^{\epsilon,\kappa}
\left(p^{\epsilon,\kappa}\,\sharp^{\epsilon,\kappa}\, 
 w\,\sharp^{\epsilon,\kappa}\, 
p^{\epsilon,\kappa}-w\right),
$$
while 
\begin{align*}
&p^{\epsilon,\kappa}\,\sharp^{\epsilon,\kappa}\, 
 w\,\sharp^{\epsilon,\kappa}\, 
p^{\epsilon,\kappa}-p\,\sharp\, w\,\sharp\,
p \\
&=\big(p^{\epsilon,\kappa}\,\sharp^{\epsilon,\kappa}\, 
 w-p^{\epsilon,\kappa}\,\sharp\,
w\big)\,\sharp^{\epsilon,\kappa}\, 
p^{\epsilon,\kappa}+p^{\epsilon,\kappa}
\,\sharp\,\big(w\,\sharp\,^{\epsilon, \kappa}
p^{\epsilon,\kappa}-w\,\sharp\, p\big)+\big(p^{\epsilon,\kappa}-p\big)\,\sharp\,
w\,\sharp\, p
\end{align*}
is a symbol of class $S^0_1(\Xi)\subset\mathfrak{C}^{\epsilon,\kappa}(\Xi)$ having the norm in $\mathfrak{C}^{\epsilon,\kappa}(\Xi)$ (i.e. the operator norm of its magnetic quantization $\mathfrak{Op}^{\epsilon,\kappa}$) of order $\epsilon$ and thus
there exist $C$ and $\epsilon_0$ such that 
\beq\label{horiac9}
\left\|\pi^{\epsilon,\kappa}W^{\epsilon,\kappa}\pi^{\epsilon,\kappa}-W^{\epsilon,\kappa}\right\|\leq
C\, \epsilon\,,
\eeq
for any $(\epsilon,\kappa)\in[0,\epsilon_0]\times[0,1]$. 

Finally, writing 
$$
(\bb1-\pi^{\epsilon,\kappa})H^{\epsilon,\kappa}(\bb1-\pi^{\epsilon,\kappa})=(\bb1-\pi^{\epsilon,\kappa})
K^{\epsilon,\kappa}(\bb1-\pi^{\epsilon,\kappa})-(\bb1-\pi^{\epsilon,\kappa})
(W^{\epsilon,\kappa}-\pi^{\epsilon,\kappa}  
W^{\epsilon,\kappa}\pi^{\epsilon,\kappa})(\bb1-\pi^{\epsilon,\kappa})
$$
and using \eqref{l-bound} and \eqref{horiac9} we see that if $\epsilon$ is small enough then we can choose $b_0=b^\prime/2\,$. This achieves the proof.
\end{proof}
Thus Hypothesis \ref{Prop-FS} is satisfied when $H$ and $\Pi$ are replaced by $H^{\epsilon,\kappa}$ and $\pi^{\epsilon,\kappa}$ respectively, and $\beta=b'/2\,$. Hence Proposition \ref{C-FS} can be applied and fixing $N>1$ and working with $\epsilon \geq 0$ such that $\epsilon N<b'/2\,$, we deduce that $H^{\epsilon,\kappa}$ has $N$ gaps of order $\epsilon$ in its lower spectrum if the operator $\widetilde{H}^{\epsilon,\kappa}$ associated with the pair
$(H^{\epsilon,\kappa},\pi^{\epsilon,\kappa})$ as in \eqref{dress} and \eqref{horiac6} also has $N$ gaps of order $\epsilon$ in its lower spectrum. We shall use the notations $Y^{\epsilon,\kappa}\in\mathcal{L}(\mathcal{H})$ and $\widetilde{H}^{\epsilon,\kappa}\in\mathcal{L}(\pi^{\epsilon,\kappa}\mathcal{H})$ for the operators defined as in \eqref{dress} and \eqref{horiac6} for $H$ replaced by $H^{\epsilon,\kappa}$ and $\Pi$ replaced by $\pi^{\epsilon,\kappa}$.

\section{The one band effective Hamiltonian}
\label{modif-B-function}

The conclusion of  Section \ref{S.6}  implies that in order to finish the proof of our main result in Theorem \ref{mainTh} we have to analyze the bottom of the spectrum of the operator $\pi^{\epsilon,\kappa}\widetilde{H}^{\epsilon,\kappa}\pi^{\epsilon,\kappa}$.
We follow essentially  the arguments and ideas from  Subsection 3.3 in \cite{CHP}, but due to the fact that in the present situation the calculus is much more involved, we prefer to present rather complete proofs. 
\subsection{Preliminaries}
As we shall start with the situation with constant magnetic field, let us recall the following results concerning the \textit{Zak magnetic translations} \cite{Z} (see also  \cite{Ne-RMP,HS,CHN}).

\begin{proposition}\label{Zak-transl}
In the case of a constant magnetic field of the form $B_\epsilon=\epsilon B_0$ the family of unitary operators
\beq
\mathcal{T}^\epsilon_\gamma:=\Lambda^\epsilon(\cdot,\gamma)\tau_{\gamma}\,,\qquad \gamma\in\Gamma
\eeq
satisfies the following properties:
\begin{enumerate}
\item $\mathcal{T}^\epsilon_\alpha\mathcal{T}^\epsilon_\beta=
\Lambda^\epsilon(\beta,\alpha)\mathcal{T}^\epsilon_{\alpha+\beta}$\,.
\item The operator $\mathfrak{Op}^\epsilon(F)$ commutes with all the $\{\mathcal{T}^\epsilon_\gamma\}_{\gamma\in\Gamma}$ if and only if $F\in\mathscr{S}^\prime(\Xi)$ is $\Gamma$-periodic with respect to the variable in $\X$.
\end{enumerate}
\end{proposition}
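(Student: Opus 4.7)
The plan for part (1) is a direct calculation. From $(\mathcal{T}^\epsilon_\gamma u)(x)=\Lambda^\epsilon(x,\gamma)u(x-\gamma)$ applied twice,
$$(\mathcal{T}^\epsilon_\alpha\mathcal{T}^\epsilon_\beta u)(x)=\Lambda^\epsilon(x,\alpha)\,\Lambda^\epsilon(x-\alpha,\beta)\,u(x-\alpha-\beta)\,,$$
so the cocycle identity reduces to the pointwise equality $\Lambda^\epsilon(x,\alpha)\Lambda^\epsilon(x-\alpha,\beta)=\Lambda^\epsilon(\beta,\alpha)\Lambda^\epsilon(x,\alpha+\beta)$. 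Inserting the explicit transverse-gauge expression $\Lambda^\epsilon(u,v)=\exp\{i(B_0/2)\,u\wedge v\}$ recalled after \eqref{defOmega} and using bilinearity and antisymmetry of $\wedge$, this reduces to the elementary identity $x\wedge\alpha+(x-\alpha)\wedge\beta=x\wedge(\alpha+\beta)+\beta\wedge\alpha$, which is immediate.

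For part (2), my strategy is to first establish an intertwining relation: for every $F\in\mathscr{S}^\prime(\Xi)$ and every $\gamma\in\Gamma$,
$$\mathcal{T}^\epsilon_\gamma\,\mathfrak{Op}^\epsilon(F)=\mathfrak{Op}^\epsilon(F_\gamma)\,\mathcal{T}^\epsilon_\gamma\,,\qquad F_\gamma(x,\xi):=F(x-\gamma,\xi)\,.$$
Once this is granted, commutation of $\mathfrak{Op}^\epsilon(F)$ with every $\mathcal{T}^\epsilon_\gamma$ is equivalent to $\mathfrak{Op}^\epsilon(F_\gamma)=\mathfrak{Op}^\epsilon(F)$ for every $\gamma\in\Gamma$, and by the injectivity of the magnetic Weyl quantization recorded in Proposition~\ref{MP} this is in turn equivalent to $F_\gamma=F$ for every $\gamma\in\Gamma$, which is exactly $\Gamma$-periodicity of $F$ in the first variable. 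This delivers both implications of the \emph{iff}.

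To prove the intertwining I would compute the integral kernels of the two sides via \eqref{magn-quant}, which gives $\mathfrak{Op}^\epsilon(F)=\Int(\Lambda^\epsilon\,\mathfrak{W}F)$. A change of variable in the kernel of $\mathfrak{Op}^\epsilon(F_\gamma)\,\mathcal{T}^\epsilon_\gamma$, combined with the translation covariance $(\mathfrak{W}F_\gamma)(x,y+\gamma)=(\mathfrak{W}F)(x-\gamma,y)$ which is read off directly from \eqref{iulie3}, reduces the claim to the purely geometric identity
$$\Lambda^\epsilon(x,\gamma)\,\Lambda^\epsilon(x-\gamma,y)=\Lambda^\epsilon(x,y+\gamma)\,\Lambda^\epsilon(y+\gamma,\gamma)\,,$$
which is again checked by expanding via the wedge-product formula for $\Lambda^\epsilon$. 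The only real obstacle is notational, namely keeping track of the two distinct roles played by $\Lambda^\epsilon$, as the phase defining the magnetic translation and as the phase modifying the Weyl kernel in \eqref{magn-quant}; once this bookkeeping is pinned down, both parts of the proposition reduce to short bilinear-algebra verifications plus the single appeal to Proposition~\ref{MP}.
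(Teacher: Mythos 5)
Your proof is correct. The paper does not actually prove Proposition~\ref{Zak-transl}; it is stated as a recalled fact with citations to Zak and to \cite{Ne-RMP,HS,CHN}, so there is no paper proof to compare against. Your route is the standard and clean one: part (1) reduces, via the explicit transverse-gauge formula $\Lambda^\epsilon(u,v)=\exp\{i\epsilon(B_0/2)\,u\wedge v\}$, to the bilinear identity $x\wedge\alpha+(x-\alpha)\wedge\beta=x\wedge(\alpha+\beta)+\beta\wedge\alpha$, and part (2) hinges on the kernel-level intertwining
\[
\mathcal{T}^\epsilon_\gamma\,\mathfrak{Op}^\epsilon(F)=\mathfrak{Op}^\epsilon\big(F(\cdot-\gamma,\cdot)\big)\,\mathcal{T}^\epsilon_\gamma,
\]
which you correctly reduce, using \eqref{iulie3} and \eqref{magn-quant}, to the two checkable facts $(\mathfrak{W}F_\gamma)(x,y+\gamma)=(\mathfrak{W}F)(x-\gamma,y)$ and $\Lambda^\epsilon(x,\gamma)\Lambda^\epsilon(x-\gamma,y)=\Lambda^\epsilon(x,y+\gamma)\Lambda^\epsilon(y+\gamma,\gamma)$; both hold by the same bilinear computation. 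The final appeal to the injectivity of $\mathfrak{Op}^\epsilon$ from Proposition~\ref{MP} then yields both directions of the equivalence. One small cosmetic remark: since $F$ is only a tempered distribution, the commutation should be understood as an identity of maps $\mathscr{S}(\X)\to\mathscr{S}'(\X)$, which is harmless here because $\mathcal{T}^\epsilon_\gamma$ preserves both $\mathscr{S}(\X)$ and $\mathscr{S}'(\X)$, but it is worth stating.
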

On the other hand, looking at the definition of the magnetic Moyal product and using the second point of the above proposition one can prove the following statement.
\begin{proposition}\label{Gamma-inv}
For a constant magnetic field $B_\epsilon:=\epsilon B_0$ the following statements hold true:
\begin{enumerate}
\item The subspace $\mathfrak M^{B_\epsilon}_\Gamma$ of the $\Gamma$-periodic distributions of  $\mathfrak M^{B_\epsilon}$ is  a subalgebra.
\item If $F \in  \mathfrak M^{B_\epsilon}_\Gamma  $ is invertible  in  $\mathfrak M^{B_\epsilon}$, its inverse is  in $\mathfrak M^{B_\epsilon}_\Gamma$.
\end{enumerate}
\end{proposition}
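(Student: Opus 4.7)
The plan is to deduce both statements from the characterization of $\Gamma$-periodic symbols via commutation with the Zak magnetic translations in point 2 of Proposition \ref{Zak-transl}. The key idea is that in the constant magnetic field case, $\{\mathcal{T}^\epsilon_\gamma\}_{\gamma\in\Gamma}$ gives a projective representation of $\Gamma$, and $\Gamma$-periodicity of the symbol is precisely equivalent to commutation of its magnetic quantization with this representation. Since commutation with a fixed family of unitaries is stable under taking products and inverses of operators, both assertions follow almost formally once one transports the statements from symbols to operators via $\mathfrak{Op}^\epsilon$.

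For the first point, I would take $F, G \in \mathfrak M^{B_\epsilon}_\Gamma$. Since $\mathfrak M^{B_\epsilon}$ is itself a $*$-algebra, the magnetic Moyal product $F\,\sharp^{B_\epsilon}\,G$ lies in $\mathfrak M^{B_\epsilon}$. To show it is $\Gamma$-periodic, apply point 2 of Proposition \ref{Zak-transl}: because $F$ and $G$ are periodic, both $\mathfrak{Op}^\epsilon(F)$ and $\mathfrak{Op}^\epsilon(G)$ commute with every $\mathcal{T}^\epsilon_\gamma$ as operators on $\mathscr{S}(\X)$. Using the defining relation $\mathfrak{Op}^\epsilon(F)\,\mathfrak{Op}^\epsilon(G)=\mathfrak{Op}^\epsilon(F\,\sharp^{B_\epsilon}\,G)$, the composition still commutes with every $\mathcal{T}^\epsilon_\gamma$, so invoking Proposition \ref{Zak-transl} once more in the other direction yields $F\,\sharp^{B_\epsilon}\,G\in \mathfrak M^{B_\epsilon}_\Gamma$.

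For the second point, let $F\in \mathfrak M^{B_\epsilon}_\Gamma$ be invertible in $\mathfrak M^{B_\epsilon}$ with inverse $G:=F^-_{B_\epsilon}$. By definition of the inverse in the magnetic Moyal algebra we have
\begin{equation*}
\mathfrak{Op}^\epsilon(F)\,\mathfrak{Op}^\epsilon(G)=\mathfrak{Op}^\epsilon(G)\,\mathfrak{Op}^\epsilon(F)=\bb1
\end{equation*}
as identities on $\mathscr{S}(\X)$. Since $F$ is $\Gamma$-periodic, $\mathfrak{Op}^\epsilon(F)$ commutes with every Zak translation $\mathcal{T}^\epsilon_\gamma$; as $\mathcal{T}^\epsilon_\gamma$ is unitary and $\mathfrak{Op}^\epsilon(F)$ admits the two-sided inverse $\mathfrak{Op}^\epsilon(G)$, the elementary identity $\mathcal{T}^\epsilon_\gamma\,\mathfrak{Op}^\epsilon(G)=\mathfrak{Op}^\epsilon(G)\,\mathcal{T}^\epsilon_\gamma$ follows on $\mathscr{S}(\X)$. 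Applying point 2 of Proposition \ref{Zak-transl} to $\mathfrak{Op}^\epsilon(G)$ then gives $G\in \mathfrak M^{B_\epsilon}_\Gamma$, which is the claim.

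There is no real analytic obstacle here; the only mild subtlety is to make sure the commutation arguments are carried out at the level of operators on $\mathscr{S}(\X)$ (which the magnetic Moyal algebra preserves by the results recalled from \cite{MP1}), so that Proposition \ref{Zak-transl} applies both ways without having to extend operators to larger spaces.
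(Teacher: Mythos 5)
Your proof is correct and follows exactly the approach the paper itself indicates (the paper offers no explicit proof, just the remark that the statement follows "using the second point of the above proposition," namely the characterization of $\Gamma$-periodicity via commutation with the Zak magnetic translations). You correctly translate both assertions into statements about operators on $\mathscr{S}(\X)$, use the fact that the $\mathcal{T}^\epsilon_\gamma$ are unitaries preserving $\mathscr{S}(\X)$ together with the homomorphism property $\mathfrak{Op}^\epsilon(F)\mathfrak{Op}^\epsilon(G)=\mathfrak{Op}^\epsilon(F\,\sharp^{B_\epsilon}G)$, and invoke the elementary stability of a commutant under products and two-sided inverses; this is exactly the content the authors had in mind.
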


Having in mind Proposition \ref{MP} and with the shorthand notation (see \eqref{defsymb})
\beq
\mathfrak{S}^{\epsilon,\kappa}:=\mathfrak{S}^{A^{\epsilon,\kappa}}\,,\qquad
\mathfrak{S}^{\epsilon}:=\mathfrak{S}^{A^{\epsilon}}\,,
\eeq
we introduce the following symbols:

\begin{definition}\label{rem-Hpi}~
\begin{itemize}
 \item
$h_\circ^{\epsilon,\kappa}:=\mathfrak{S}^{\epsilon,\kappa}\big({\pi}^{\epsilon,\kappa}H^{\epsilon,\kappa}{\pi}^{\epsilon,
\kappa}\big);\quad
h_\circ^{\epsilon}:=\mathfrak{S}^{\epsilon}\big({\pi}^{\epsilon}H^{\epsilon}{\pi}^{\epsilon}\big)$;
 \item
$h_\bot^{\epsilon,\kappa}:= \mathfrak{S}^{\epsilon,\kappa}\big(\big(\bb1-{\pi}^{
\epsilon,\kappa}\big)H^{\epsilon,\kappa}\big(\bb1-{\pi}^{
\epsilon,\kappa}\big)\big);\quad
h_\bot^{\epsilon}:=\mathfrak{S}^{\epsilon}\big(\big(\bb1-{\pi}^{
\epsilon}\big)H^{\epsilon}\big(\bb1-{\pi}^{
\epsilon}\big)\big)$;
 \item
$h_\bullet^{\epsilon,\kappa}:=\mathfrak{S}^{\epsilon,\kappa}\big({\pi}^{\epsilon,\kappa}H^{\epsilon,\kappa}\big(\bb1-{\pi}^{
\epsilon,\kappa}\big)\big);\quad h_\bullet^{\epsilon}:=
\mathfrak{S}^{\epsilon}\big({\pi}^{\epsilon}H^{\epsilon}
\big(\bb1-{\pi}^{\epsilon}\big)\big)$.
\item $r_{\epsilon,\kappa}:=\mathfrak{S}^{\epsilon,\kappa}\big((\bb1-{\pi}^{\epsilon,
\kappa})R^{\epsilon,\kappa}_\bot(\bb1-{\pi}^{\epsilon,
\kappa})\big);\quad  r_{\epsilon}:=\mathfrak{S}^{\epsilon}\big(
(\bb1-{\pi}^{\epsilon})R^{\epsilon}_\bot(\bb1-{\pi}^{
\epsilon})\big)$;
\item 
$\mathfrak{y}^{\epsilon,\kappa}:=\mathfrak{S}^{\epsilon,\kappa}\big(\pi^{\epsilon,\kappa}\big(Y^{\epsilon,\kappa})^{-1/2}\pi^{\epsilon,\kappa}\big)\,;\quad\mathfrak{y}^{\epsilon}:=\mathfrak{S}^{\epsilon}
\big(\pi^{\epsilon}\big(Y^{\epsilon}\big)^{-1/2}\pi^{\epsilon}
\big)\,$;
\item $\mathfrak{z}^{\epsilon,\kappa}:=\mathfrak{y}^{\epsilon,\kappa}-p^{\epsilon,\kappa};\quad \quad
\mathfrak{z}^{\epsilon}:=\mathfrak{y}^{\epsilon}-p^{\epsilon}\,$
\end{itemize}
\end{definition}

\begin{proposition}
We have the following properties:
\begin{enumerate}
\item $\{r_{\epsilon,\kappa},r_{\epsilon}\} \subset S^{-2}(\Xi)\,$, 
\item $\big\{\mathfrak{y}^{\epsilon,\kappa},\mathfrak{z}^{\epsilon,\kappa},\mathfrak{y}^{\epsilon},\mathfrak{z}^{\epsilon}\big\}\subset S^{-\infty}(\Xi)\,$.
\end{enumerate}
\end{proposition}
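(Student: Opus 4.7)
The plan for part (1) is to apply Proposition \ref{red-mPsD} directly with $H=H^{\epsilon,\kappa}$ and $\Pi = \pi^{\epsilon,\kappa}$. All hypotheses are in place: the symbol $h(x,\xi)=\xi^2 + V_\Gamma(x)$ is real, elliptic, of order $2$; $p^{\epsilon,\kappa}\in S^{-\infty}(\Xi)$ by the construction recalled in Section \ref{S.5}; $\pi^{\epsilon,\kappa}\mathcal{H}\subset \mathcal{D}(H^{\epsilon,\kappa})$ is Proposition \ref{bd-Hameps-band}; and Proposition \ref{FS-Hek}(2) yields $(\bb1-\pi^{\epsilon,\kappa}) H^{\epsilon,\kappa}(\bb1-\pi^{\epsilon,\kappa})\geq b_0 (\bb1-\pi^{\epsilon,\kappa})$, so this operator is boundedly invertible on $(\bb1-\pi^{\epsilon,\kappa})\mathcal{H}$. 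Proposition \ref{red-mPsD} then gives $r_{\epsilon,\kappa}\in S^{-2}(\Xi)$, and the case $\kappa=0$ yields $r_\epsilon\in S^{-2}(\Xi)$ in the same way.

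For part (2) I would first rewrite
\[
Y^{\epsilon,\kappa}= \pi^{\epsilon,\kappa} + X^{\epsilon,\kappa},\qquad X^{\epsilon,\kappa}:= \pi^{\epsilon,\kappa} H^{\epsilon,\kappa}(\bb1-\pi^{\epsilon,\kappa})R^{\epsilon,\kappa}_\bot(0)^2(\bb1-\pi^{\epsilon,\kappa})H^{\epsilon,\kappa}\pi^{\epsilon,\kappa},
\]
and extend to the full $\mathcal{H}$ via $\widetilde{Y}^{\epsilon,\kappa}:=Y^{\epsilon,\kappa}+(\bb1-\pi^{\epsilon,\kappa})= \bb1+X^{\epsilon,\kappa}$, a bounded self-adjoint operator satisfying $\widetilde{Y}^{\epsilon,\kappa}\geq\bb1$ and $\pi^{\epsilon,\kappa}(Y^{\epsilon,\kappa})^{-1/2}\pi^{\epsilon,\kappa}= \pi^{\epsilon,\kappa}(\widetilde{Y}^{\epsilon,\kappa})^{-1/2}\pi^{\epsilon,\kappa}$. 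By the composition theorem in the magnetic Moyal calculus together with part (1), the symbol of $X^{\epsilon,\kappa}$ has the form $p^{\epsilon,\kappa}\,\sharp^{\epsilon,\kappa}\,f\,\sharp^{\epsilon,\kappa}\,p^{\epsilon,\kappa}$ for some symbol $f$ of finite order (built from $h$ and $r_{\epsilon,\kappa}$), and hence belongs to $S^{-\infty}(\Xi)$ since any Moyal product sandwiched between two factors from $S^{-\infty}(\Xi)$ lies in $S^{-\infty}(\Xi)$. In particular the symbol of $\widetilde{Y}^{\epsilon,\kappa}$ equals $1+\chi^{\epsilon,\kappa}$ with $\chi^{\epsilon,\kappa}\in S^{-\infty}(\Xi)$.

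The central step, and the main technical point I expect, is to pass from $\widetilde{Y}^{\epsilon,\kappa}$ to $(\widetilde{Y}^{\epsilon,\kappa})^{-1/2}$ while preserving the $S^{-\infty}$ remainder. I would use the Cauchy integral
\[
(\widetilde{Y}^{\epsilon,\kappa})^{-1/2} - \bb1\ =\ \frac{1}{2\pi i}\int_\Gamma \big(z^{-1/2}-1\big)\,(z-\widetilde{Y}^{\epsilon,\kappa})^{-1}\,dz,
\]
over a bounded contour $\Gamma$ enclosing $\sigma(\widetilde{Y}^{\epsilon,\kappa})\subset[1,M]$ and avoiding $(-\infty,0]$. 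Factoring $z-\widetilde{Y}^{\epsilon,\kappa}=(z-1)\big(\bb1- (z-1)^{-1}X^{\epsilon,\kappa}\big)$ and applying Proposition \ref{IMP}(2) shows that the symbol of $(z-\widetilde{Y}^{\epsilon,\kappa})^{-1}$ has the form $(z-1)^{-1}+\rho^{\epsilon,\kappa}_z$ with $\rho^{\epsilon,\kappa}_z\in S^{-\infty}(\Xi)$, whose $S^{-\infty}$ seminorms can be controlled uniformly in $z\in\Gamma$ by a standard Neumann iteration. Since $\int_\Gamma (z^{-1/2}-1)(z-1)^{-1}dz=0$ (the residue at $z=1$ vanishes because $1^{-1/2}-1=0$), integration gives that the symbol of $(\widetilde{Y}^{\epsilon,\kappa})^{-1/2}$ equals $1+\omega^{\epsilon,\kappa}$ with $\omega^{\epsilon,\kappa}\in S^{-\infty}(\Xi)$. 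Therefore
\[
\mathfrak{y}^{\epsilon,\kappa}\ =\ p^{\epsilon,\kappa}\,\sharp^{\epsilon,\kappa}\,(1+\omega^{\epsilon,\kappa})\,\sharp^{\epsilon,\kappa}\,p^{\epsilon,\kappa}\ \in\ S^{-\infty}(\Xi),
\]
and $\mathfrak{z}^{\epsilon,\kappa}=\mathfrak{y}^{\epsilon,\kappa}-p^{\epsilon,\kappa}\in S^{-\infty}(\Xi)$ as well; the case $\kappa=0$ is identical and gives $\mathfrak{y}^\epsilon,\mathfrak{z}^\epsilon\in S^{-\infty}(\Xi)$.
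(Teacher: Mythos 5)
Your argument is correct, and for item (1) it matches the paper exactly (apply Proposition \ref{red-mPsD} with $m=2$). For item (2) the paper's own proof is a single sentence --- ``we notice that $\pi^{\epsilon,\kappa}$ and $\pi^\epsilon$ have symbols of class $S^{-\infty}(\Xi)$'' --- which is precisely the seed idea of your argument, but you supply the technical scaffolding the paper omits. In particular, the step from ``$Y^{\epsilon,\kappa}$ has a nice symbol'' to ``$\pi^{\epsilon,\kappa}(Y^{\epsilon,\kappa})^{-1/2}\pi^{\epsilon,\kappa}$ has an $S^{-\infty}$ symbol'' is not immediate, because $(Y^{\epsilon,\kappa})^{-1/2}$ is defined only on $\pi^{\epsilon,\kappa}\mathcal{H}$ and one must know a priori that it is a magnetic pseudodifferential operator with a symbol of controlled order before the composition theorem $S^{-\infty}\,\sharp^{B}\,S^0\,\sharp^{B}\,S^{-\infty}\subset S^{-\infty}$ can be invoked. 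Your device of extending to $\widetilde{Y}^{\epsilon,\kappa}=\bb1+X^{\epsilon,\kappa}$ on all of $\mathcal{H}$, noting $X^{\epsilon,\kappa}$ has $S^{-\infty}$ symbol because it is an $S^0$ composition sandwiched between two copies of $p^{\epsilon,\kappa}\in S^{-\infty}$, and then passing to the inverse square root via the holomorphic functional calculus combined with Proposition \ref{IMP}(2) for the symbol of $(z-\widetilde{Y}^{\epsilon,\kappa})^{-1}$, is exactly the kind of argument the paper's one-liner is gesturing at. The observation that the contour integral of the scalar part vanishes because $z^{-1/2}-1$ has a zero at $z=1$ is a clean way to see that $(\widetilde{Y}^{\epsilon,\kappa})^{-1/2}-\bb1$ has $S^{-\infty}$ symbol. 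In short: same route as the paper, with the missing steps genuinely filled in rather than merely asserted.
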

\begin{proof}
For the first two conclusions we just use Proposition \ref{red-mPsD} with $m=2$. For the third conclusion we notice that $\pi^{\epsilon,\kappa}$ and $\pi^\epsilon$ have symbols of class $S^{-\infty}(\Xi)$.
\end{proof}

\subsection{Main approximation }\label{SS-approx-matrix}

\subsubsection{ Reduction to the constant field $\epsilon B_0$}
 The next proposition states that $ \widetilde{H}^{\epsilon,\kappa} $ is close in operator norm to a magnetic matrix in which the non-constant magnetic field appears only through the phase $\widetilde{\Lambda}^
{\epsilon,\kappa}$. 
\begin{proposition}\label{prelim-res}
With the definitions and notations  of this section, the following approximation holds, for any $(\alpha,\beta) \in \Gamma\times \Gamma$, 
\begin{equation}\label{eq:7.2.1.a}
\begin{array}{l}
\left\langle\phi^{\epsilon,\kappa}_\alpha\,,\,
\widetilde{H}
^{\epsilon,\kappa}
\phi^{\epsilon,\kappa}_\beta
\right\rangle_{\mathcal{H}}=\widetilde{\Lambda}^
{\epsilon,\kappa}(\alpha,\beta)
\left\langle\phi^{\epsilon}_\alpha\,,\,\mathfrak{Op}^\epsilon(\mathfrak h^\epsilon)
\phi^{\epsilon}_\beta
\right\rangle_{\mathcal{H}}+\kappa \epsilon \, 
\mathscr{O} 
(|\alpha-\beta|^{-\infty})\,,
\end{array}
\end{equation}
with 
$$
\mathfrak h_\epsilon:= h_\circ^\epsilon+h_\circ^{\epsilon}\,\sharp\,^{\epsilon}
\mathfrak{z}^{\epsilon}
+\mathfrak{z}^{\epsilon}\,\sharp\,^{
\epsilon}h_\circ^{\epsilon}\,\sharp\,^{\epsilon}\mathfrak{y}^{\epsilon}+\mathfrak{y}^{\epsilon}\,\sharp\,^{\epsilon}h_\bullet^{\epsilon}
\,\sharp\,^{\epsilon}r_{\epsilon}\,\sharp\,^{\epsilon}
h_\bullet^{\epsilon}\,\sharp\,^{\epsilon}\mathfrak{y}^{\epsilon}\,.
$$
Moreover
\begin{equation}\label{eq:7.2.1.b}
\langle\phi^{\epsilon}_\alpha\,,\,\mathfrak{Op}^\epsilon(\mathfrak h^\epsilon)
\phi^{\epsilon}_\beta\rangle_{\mathcal{H}}=\Lambda^\epsilon(\alpha,\beta)\langle\phi^{\epsilon}_{\alpha-\beta}\,,\,\mathfrak{Op}^\epsilon(\mathfrak h^\epsilon)
\phi^{\epsilon}_0 \rangle_{\mathcal{H}}\,.
\end{equation}
\end{proposition}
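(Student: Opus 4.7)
The two identities are essentially independent: \eqref{eq:7.2.1.b} is an exact Zak magnetic translation covariance in the constant background $\epsilon B_0$, while \eqref{eq:7.2.1.a} is an approximation reducing the inhomogeneous magnetic field to that background. For \eqref{eq:7.2.1.b}, every constituent of $\mathfrak{h}^\epsilon$ --- namely $p^\epsilon$, $h_\circ^\epsilon$, $h_\bullet^\epsilon$, $r_\epsilon$, $\mathfrak{y}^\epsilon$, $\mathfrak{z}^\epsilon$ --- is the symbol of an operator commuting with every $\mathcal{T}^\epsilon_\gamma$, so by Proposition \ref{Zak-transl}(2) it is $\Gamma$-periodic in $x$; Proposition \ref{Gamma-inv} preserves this property under Moyal composition, so $\mathfrak{Op}^\epsilon(\mathfrak{h}^\epsilon)$ commutes with every $\mathcal{T}^\epsilon_\gamma$. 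Combined with $\phi^\epsilon_\gamma = \mathcal{T}^\epsilon_\gamma\phi^\epsilon_0$ and the cocycle $\mathcal{T}^\epsilon_\alpha\mathcal{T}^\epsilon_\beta = \Lambda^\epsilon(\beta,\alpha)\mathcal{T}^\epsilon_{\alpha+\beta}$, conjugating by the unitary $\mathcal{T}^\epsilon_{-\beta}$ yields \eqref{eq:7.2.1.b}, the phase $\Lambda^\epsilon(\alpha,\beta)$ emerging from $\overline{\Lambda^\epsilon(\alpha,-\beta)}$.

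For \eqref{eq:7.2.1.a}, I first identify the magnetic symbol of $\widetilde{H}^{\epsilon,\kappa}$ as
\[
\mathfrak{h}^{\epsilon,\kappa} = \mathfrak{y}^{\epsilon,\kappa} \sharp^{\epsilon,\kappa} \bigl( h_\circ^{\epsilon,\kappa} - h_\bullet^{\epsilon,\kappa} \sharp^{\epsilon,\kappa} r_{\epsilon,\kappa} \sharp^{\epsilon,\kappa} (h_\bullet^{\epsilon,\kappa})^{*} \bigr) \sharp^{\epsilon,\kappa} \mathfrak{y}^{\epsilon,\kappa},
\]
and use $\mathfrak{y}^{\epsilon,\kappa} = p^{\epsilon,\kappa} + \mathfrak{z}^{\epsilon,\kappa}$ together with the absorption identities $p^{\epsilon,\kappa}\sharp^{\epsilon,\kappa} h_\circ^{\epsilon,\kappa} = h_\circ^{\epsilon,\kappa} = h_\circ^{\epsilon,\kappa} \sharp^{\epsilon,\kappa} p^{\epsilon,\kappa}$ (and analogously for $h_\bullet^{\epsilon,\kappa}$) to bring $\mathfrak{h}^{\epsilon,\kappa}$ into the same structural form as the $\mathfrak{h}^\epsilon$ displayed in the statement; every summand lies in $S^{-\infty}(\Xi)$. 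Each ingredient $q^{\epsilon,\kappa}$ in Definition \ref{rem-Hpi} differs from its $\epsilon$-analogue $q^\epsilon$ by $O(\kappa\epsilon)$ in every Schwartz seminorm: Proposition \ref{peps-p} gives this for $p$, and it propagates through the resolvent construction of Proposition \ref{red-mPsD} and through an analytic functional calculus defining $\mathfrak{y}$. Moreover $\sharp^{\epsilon,\kappa} - \sharp^\epsilon$ is itself $O(\kappa\epsilon)$ via $|\widetilde{\Omega}^{\epsilon,\kappa} - 1| = O(\kappa\epsilon\cdot\mathrm{area})$ from \eqref{Est-Omega}. Hence $\mathfrak{h}^{\epsilon,\kappa} - \mathfrak{h}^\epsilon = O(\kappa\epsilon)$ in every Schwartz seminorm on $S^{-\infty}(\Xi)$.

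To transfer this to matrix elements, I replace $\phi^{\epsilon,\kappa}_\gamma$ by $\mathring{\phi}^{\epsilon,\kappa}_\gamma$ using Proposition \ref{Prop-m-q-W}(3) (an $O(\kappa\epsilon)$ correction, rapidly decaying in $|\alpha-\gamma|$), and then apply \eqref{phase-fact-dec} together with the kernel representation \eqref{magn-quant} and the factorisation $\Lambda^{\epsilon,\kappa} = \widetilde{\Lambda}^{\epsilon,\kappa}\Lambda^\epsilon$ from \eqref{Lambda-dec}. The three $\widetilde{\Lambda}^{\epsilon,\kappa}$ factors --- from $\mathring{\phi}^{\epsilon,\kappa}_\alpha$, from the integral kernel of $\mathfrak{Op}^{\epsilon,\kappa}(\mathfrak{h}^\epsilon)$, and from $\mathring{\phi}^{\epsilon,\kappa}_\beta$ --- recombine via the cocycle \eqref{D-Omega} into $\widetilde{\Lambda}^{\epsilon,\kappa}(\alpha,\beta)\cdot\widetilde{\Omega}^{\epsilon,\kappa}(\alpha,x,y)\widetilde{\Omega}^{\epsilon,\kappa}(\alpha,y,\beta)$. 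The main obstacle lies precisely here: each triangle factor from \eqref{Est-Omega} delivers $O(\kappa\epsilon\cdot\mathrm{area})$, and the area must be absorbed into decay using the Schwartz localization of $\phi^\epsilon_\alpha$, $\phi^\epsilon_\beta$ around $\alpha$, $\beta$ together with the rapid off-diagonal decay of the kernel $\mathfrak{W}(\mathfrak{h}^\epsilon)(x,y)$. This is possible only because $\mathfrak{h}^\epsilon$ genuinely belongs to $S^{-\infty}(\Xi)$, a point that hinges on checking $\mathfrak{y}^\epsilon \in S^{-\infty}(\Xi)$: since $Y^\epsilon - \pi^\epsilon = \pi^\epsilon K^\epsilon \pi^\epsilon$ with $K^\epsilon$ bounded, an analytic functional calculus carries the $S^{-\infty}$ character of $p^\epsilon$ through to $\mathfrak{y}^\epsilon$. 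Collecting all $O(\kappa\epsilon)$ errors yields the announced remainder $\kappa\epsilon\,\mathscr{O}(|\alpha-\beta|^{-\infty})$.
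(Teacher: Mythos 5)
Your proof is essentially correct and rests on the same mechanism as the paper's: peel the variable-field phase $\widetilde{\Lambda}^{\epsilon,\kappa}$ off to reduce to the constant field $\epsilon B_0$, estimate the twisted-Moyal differences $\sharp^{\epsilon,\kappa}-\sharp^\epsilon$ and the symbol differences $q^{\epsilon,\kappa}-q^\epsilon$ by $O(\kappa\epsilon)$ via \eqref{Est-Omega} and Proposition \ref{peps-p}, and conclude \eqref{eq:7.2.1.b} from $\Gamma$-periodicity (Propositions \ref{Gamma-inv}, \ref{Zak-transl}). The organisation, however, differs in a substantive way: the paper splits $\widetilde{H}^{\epsilon,\kappa}$ into two pieces ($\Sigma_1$ from $\Pi H\Pi$ and $\Sigma_2$ from $\Pi H R_\bot H\Pi$), writes each as a double sum over $\Gamma\times\Gamma$ with $\mathbb{F}^{\epsilon,\kappa}$ coefficients, and controls every piece through matrix-element Lemmas \ref{P-red-H}, \ref{P-red-OpF}, \ref{H-bd-Lemma} and Remark \ref{d-series-coeff}; you instead identify a single global magnetic symbol $\mathfrak{h}^{\epsilon,\kappa}=\mathfrak{y}^{\epsilon,\kappa}\sharp^{\epsilon,\kappa}(h_\circ^{\epsilon,\kappa}-h_\bullet^{\epsilon,\kappa}\sharp^{\epsilon,\kappa}r_{\epsilon,\kappa}\sharp^{\epsilon,\kappa}(h_\bullet^{\epsilon,\kappa})^*)\sharp^{\epsilon,\kappa}\mathfrak{y}^{\epsilon,\kappa}$ (which, after the absorption identities you invoke, coincides with the paper's displayed $\mathfrak{h}^\epsilon$ up to a sign: note the paper's displayed formula has $+\mathfrak{y}^\epsilon\sharp^\epsilon h_\bullet^\epsilon\sharp^\epsilon r_\epsilon\sharp^\epsilon h_\bullet^\epsilon\sharp^\epsilon\mathfrak{y}^\epsilon$, in tension with \eqref{horiac6} and \eqref{magn-matrix'}; your minus sign is the consistent one), and only at the very end convert to matrix elements.

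Your version is cleaner conceptually and avoids carrying the double series explicitly, but it buries one point you would still have to write out: when you substitute $\phi^{\epsilon,\kappa}_\gamma\mapsto\mathring{\phi}^{\epsilon,\kappa}_\gamma$ (i.e.\ undo the $\mathbb{F}^{\epsilon,\kappa}$ Gram--Schmidt), the resulting $O(\kappa\epsilon)$ error is itself a double sum over $\Gamma$, and collapsing it to $\kappa\epsilon\,\mathscr{O}(|\alpha-\beta|^{-\infty})$ requires not only Proposition \ref{Prop-m-q-W}(3) but also a uniform-in-$\epsilon,\kappa$ off-diagonal decay of the undressed matrix elements (the role of Lemma \ref{H-bd-Lemma} and Remark \ref{d-series-coeff} in the paper). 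Likewise, "$\mathfrak{y}^{\epsilon,\kappa}-\mathfrak{y}^\epsilon=O(\kappa\epsilon)$ by analytic functional calculus" is plausible but needs an explicit contour argument, since $Y^{\epsilon,\kappa}$ and $Y^\epsilon$ live on the different subspaces $\pi^{\epsilon,\kappa}\mathcal{H}$ and $\pi^\epsilon\mathcal{H}$ and one must compare square roots across them. Neither of these is a gap in principle --- both follow from the tools you already cite --- but as written they are gestured at rather than carried out, whereas the paper's $\Sigma_1/\Sigma_2$ decomposition makes precisely these two steps explicit.
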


\begin{proof}
Having in mind the notation of  Section \ref{S.5} (see Definition \ref{m-q-W-def}, item 5), we get:
\beq\label{magn-matrix'}
\left\langle\phi^{\epsilon,\kappa}_\alpha\,,\,\pi^{\epsilon,\kappa}
\widetilde{H}
^{\epsilon,\kappa}\pi^{\epsilon,\kappa}
\phi^{\epsilon,\kappa}_\beta
\right\rangle_{\mathcal{H}} =  \Sigma_1(\alpha,\beta) + \Sigma_2(\alpha,\beta)\,,
\eeq
where, using \eqref{horiac6} we can write
\begin{equation}\label{magn-matrix-1}
\Sigma_1(\alpha,\beta):= \underset{\alpha'\in\Gamma}{
\sum}\underset{\beta'\in\Gamma}{\sum}\overline{\mathbb{F}}^{\epsilon,\kappa}_{\alpha'\alpha}
\mathbb{F}^{\epsilon,\kappa}_{\beta'\beta}
\left\langle\big(Y^{\epsilon,\kappa}\big)^{-\frac{1}{2}}\Lambda^{\epsilon,\kappa}(\cdot,
\alpha')\tau_{\alpha'}\psi^\epsilon_0\,,\,H^{\epsilon,\kappa}
\big(Y^{\epsilon,\kappa}\big)^{-\frac{1}{2}}\Lambda^{\epsilon,\kappa}(\cdot,\beta')\tau_
{\beta'}\psi^\epsilon_0\right\rangle_{\mathcal{H}} 
\end{equation}
and 
\begin{equation}\label{magn-matrix-2}
\Sigma_2(\alpha,\beta):=
 \underset{\alpha'\in\Gamma}{
\sum}\underset{\beta'\in\Gamma}{\sum}\overline{\mathbb{F}}^{\epsilon,\kappa}_{\alpha'\alpha}
\mathbb{F}^{\epsilon,\kappa}_{\beta'\beta}
\left\langle\big(Y^{\epsilon,\kappa}\big)^{-\frac{1}{2}}\Lambda^{\epsilon,\kappa}(\cdot,
\alpha')\tau_{\alpha'}\psi^\epsilon_0\,,\,
K^{\epsilon,\kappa}
\big(Y^{\epsilon,\kappa}\big)^{-\frac{1}{2}}\Lambda^{\epsilon,\kappa}(\cdot,\beta')\tau_
{\beta'}\psi^\epsilon_0\right\rangle_{\mathcal{H}}
\end{equation}
with $$K^{\epsilon,\kappa}:=\pi^{\epsilon,\kappa}H^{\epsilon,\kappa}R_\bot^{\epsilon,\kappa}(0)H^{\epsilon,\kappa}\pi^{\epsilon,\kappa}\in\mathcal{L}(\pi^{\epsilon,\kappa}\mathcal{H})\,.$$
Using  Formula \eqref{phase-fact-dec} and Proposition \ref{Prop-m-q-W} (point 3), we will  express the series appearing in  $ \Sigma_1(\alpha,\beta)$ and $\Sigma_2(\alpha,\beta)$  in terms of elements associated with  $\pi^\epsilon\mathcal{H}$ in the constant magnetic field $\epsilon B_0$.

In Subsection \ref{SS-sigma-1} we will prove the following estimation contained in Lemma \ref{Concl-1}:
$$\begin{array}{ll}
\Sigma_1(\alpha,\beta)&
=\widetilde{\Lambda}^{\epsilon,\kappa}(\alpha,\beta)
\left\langle\phi^\epsilon_{
\alpha}\,,\,\left[H^{\epsilon}+\mathfrak{Op}^{\epsilon}\big(h_\circ^{\epsilon,
\kappa}\,\sharp^{\epsilon,\kappa}\, 
\mathfrak{z}^{\epsilon,\kappa}
\big)+\mathfrak{Op}^{\epsilon}\big(\mathfrak{z}^{\epsilon,\kappa}\,\sharp\,^{
\epsilon,\kappa}h_\circ^{\epsilon,\kappa}\,\sharp^{\epsilon,\kappa}\, 
\mathfrak{y}^{\epsilon,\kappa}\big)\right]\phi^\epsilon_{\beta}\right\rangle_{\mathcal{H
}}\\ & \qquad + \kappa\epsilon\, \mathscr{O}(|\alpha-\beta|^{-\infty}) \,.
\end{array}
$$
For the term $\Sigma_2(\alpha,\beta)$ we will prove in Subsection \ref{SS-sigma-2} (Lemma \ref{Concl-2}) that
$$
\Sigma_2 (\alpha,\beta)
=\widetilde{\Lambda}^{\epsilon,\kappa}(\alpha,\beta)
\left\langle\phi^\epsilon_{
\alpha}\,,\,\mathfrak{Op}^{\epsilon}\big(
\mathfrak{y}^{\epsilon,\kappa}\,\sharp^{\epsilon,\kappa}\, 
h_\bullet^{\epsilon,
\kappa}\,\sharp^{\epsilon,\kappa}\, 
r_{\epsilon,\kappa}\,\sharp^{\epsilon,\kappa}\, 
h_\bullet^{\epsilon,\kappa}\,\sharp^{\epsilon,\kappa}\, 
\mathfrak{y}^{\epsilon,
\kappa}\big)\phi^\epsilon_{\beta}\right\rangle_{\mathcal{H
}}+ \kappa \epsilon \, \mathscr{O}(|\alpha-\beta|^{-\infty})\,.
$$

We continue by noticing that
Proposition \ref{peps-p} and the properties of the magnetic composition of
symbols (see Proposition B.12 in \cite{CHP}) imply that for any seminorm
$\nu$ on $S^{-\infty}(\Xi)$ there exists $C_\nu>0$ such that
\beq\label{est-h}
\nu(h_\circ^{\epsilon,\kappa}-h_\circ^{\epsilon})\leq C_\nu\,\kappa\epsilon\,,\quad
\nu(h_\bullet^{\epsilon,\kappa}-h_\bullet^{\epsilon})\leq
C_\nu\,\kappa\epsilon\,,\quad\nu(h_\bot^{\epsilon,\kappa}-h_\bot^{
\epsilon})\leq C_\nu\,\kappa\epsilon\,.
\eeq

Then let us consider the difference $r_{\epsilon,\kappa}-r_{\epsilon}\in
S^{-2}(\Xi)$ and compute
\beq
\big[h_\bot^{\epsilon,\kappa}\,\sharp^{\epsilon,\kappa}\, 
\big(r_{\epsilon,
\kappa } -r_{\epsilon}\big)\big]\,\sharp\,^{\epsilon}h_\bot^{\epsilon}=(1-
p^{\epsilon,\kappa})\,\sharp\,^{\epsilon}h_\bot^{\epsilon}-
\big(h_\bot^{\epsilon,\kappa}\,\sharp^{\epsilon,\kappa}\, 
r_\epsilon\big)\,\sharp\,^{\epsilon}h_\bot^{\epsilon}.
\eeq
Using Propositions B.12 and B.14 in \cite{CHP} and the estimates in Proposition
\ref{peps-p} we conclude that
\beq
\big(h_\bot^{\epsilon,\kappa}\,\sharp^{\epsilon,\kappa}\, 
r_\epsilon\big)\,\sharp\,^{\epsilon}h_\bot^{\epsilon}  =\big(h_\bot^{
\epsilon,\kappa}\,\sharp\,^{\epsilon}
r_\epsilon\big)\,\sharp\,^{\epsilon}h_\bot^{\epsilon}+\kappa\epsilon\mathfrak
{r}_{1,\epsilon,\kappa} =h_\bot^{\epsilon,\kappa}\,\sharp\,^{\epsilon}(1-p^{
\epsilon})+\kappa\epsilon\, \mathfrak{r}_{1,\epsilon,\kappa},
\eeq
and 
\beq
\begin{array}{ll}
\big[h_\bot^{\epsilon,\kappa}\,\sharp^{\epsilon,\kappa}\, 
\big(r_{\epsilon,
\kappa}-r_{\epsilon}\big)\big]\,\sharp\,^{\epsilon}h_\bot^{\epsilon}& =(1-
p^{\epsilon,\kappa})\,\sharp\,^{\epsilon}h_\bot^{\epsilon}-h_\bot^{
\epsilon,\kappa}\,\sharp\,^{\epsilon}(1-p^{
\epsilon})-\kappa\epsilon\, \mathfrak{r}_{1,\epsilon,\kappa}\\ & 
=h_\bot^{\epsilon}-h_\bot^{\epsilon,\kappa}
+\kappa\epsilon\, \mathfrak{r}_{2,\epsilon,\kappa}\\ & =(1-p^{\epsilon})\,\sharp\,^{\epsilon
}h\,\sharp\,^{\epsilon}(1-p^{\epsilon})-(1-p^{\epsilon,\kappa})\,\sharp\,^{\epsilon,
\kappa}h\,\sharp^{\epsilon,\kappa}\, 
(1-p^{\epsilon,\kappa})\\ & \qquad -\kappa\epsilon\, \mathfrak{
r}_{3,\epsilon,\kappa}\,,
\end{array}
\eeq
with $\mathfrak{r}_{1,\epsilon,\kappa}\,,$ $\mathfrak{r}_{2,\epsilon,\kappa}$ and $\mathfrak{r}_{3,\epsilon,\kappa}$
in $S^0(\Xi)$ uniformly for $(\epsilon,\kappa)\in[0,\epsilon_0]\times[0,1]$.\\
Using Proposition \ref{peps-p} several times we obtain that the difference $r_{\epsilon,\kappa}-r_{\epsilon}$ belongs to $S^{-4}(\Xi)$
and for any seminorm $\nu$ on  $S^{-4}(\Xi)$
there exists $C_\nu>0$ such that
\beq\label{est-r}
\nu\big(r_{\epsilon,\kappa}-r_{\epsilon}\big)\leq C_\nu\, \kappa\epsilon\,.
\eeq

From Proposition \ref{peps-p} and the magnetic version of
Calderon-Vaillancourt Theorem (see Theorem~3.1 in \cite{IMP1}) we deduce that
\beq
\left\|\mathfrak{Op}^\epsilon\big(p^{\epsilon,\kappa}
-p^\epsilon\big)\right\|\leq C\, \kappa\epsilon\,.
\eeq
Meanwhile, from Proposition B.14 in \cite{CHP},  we also
deduce that for any symbols $f$ and $g$ in  $S^0(\Xi)$ there exists $C>0$ such that
\beq
\left\|\mathfrak{Op}^\epsilon\big(f\,\sharp^{\epsilon,\kappa}\, 
g-f\,\sharp\,^\epsilon g\big)\right\|\leq C\,\kappa\epsilon\,.
\eeq

Putting all these results together, using \eqref{est-h} and \eqref{est-r} in order to control the difference
$\mathfrak{z}^{\epsilon,\kappa}-\mathfrak{z}^\epsilon$ and the usual result
concerning the estimate 
 of the difference of the two magnetic products
$\,\sharp^{\epsilon,\kappa}\, 
$ and $\,\sharp\,^\epsilon$ (see Proposition B.14 in \cite{CHP}) we finally obtain:
\beq\label{S-1-magn-matrix}
\Sigma_1(\alpha,\beta) \,=\,\widetilde{\Lambda}^{\epsilon,\kappa}(\alpha,\beta)
\left\langle\phi^\epsilon_{
\alpha}\,,\, \mathfrak{Op}^{\epsilon} \big(h_\circ^\epsilon+h_\circ^{\epsilon}\,\sharp\,^{\epsilon}
\mathfrak{z}^{\epsilon}
+\mathfrak{z}^{\epsilon}\,\sharp\,^{
\epsilon}h_\circ^{\epsilon}\,\sharp\,^{\epsilon}\mathfrak{y}^{\epsilon}\big)  \phi^\epsilon_{\beta}\right\rangle_{\mathcal{H
}} \,+\, \kappa\epsilon \, \mathcal O(|\alpha-\beta||^{-\infty})\,,
\eeq
and 
\beq\label{S-2-magn-matrix}
\Sigma_2(\alpha,\beta)
\,=\,\widetilde{\Lambda}^{\epsilon,\kappa}(\alpha,\beta)
\left\langle\phi^\epsilon_{
\alpha}\,,\,\mathfrak{Op}^{\epsilon}\big(\mathfrak{y}^{\epsilon}\,\sharp\,^{\epsilon}h_\bullet^{\epsilon}
\,\sharp\,^{\epsilon}r_{\epsilon}\,\sharp\,^{\epsilon}
h_\bullet^{\epsilon}\,\sharp\,^{\epsilon}\mathfrak{y}^{\epsilon}\big)\phi^\epsilon_{\beta}\right\rangle_{\mathcal{H
}} +  \kappa \epsilon \,  \mathcal O(|\alpha-\beta|)^{-\infty}\,.
\eeq
This gives us \eqref{eq:7.2.1.a} in the proposition.

From Proposition \ref{Gamma-inv} it follows that the symbol
\beq
\mathfrak{h}^\epsilon:=h_\circ^\epsilon+h_\circ^{\epsilon}\,\sharp\,^{\epsilon}
\mathfrak{z}^{\epsilon}
+\mathfrak{z}^{\epsilon}\,\sharp\,^{
\epsilon}h_\circ^{\epsilon}\,\sharp\,^{\epsilon}\mathfrak{y}^{\epsilon}+\mathfrak{y}^{\epsilon}\,\sharp\,^{\epsilon}h_\bullet^{\epsilon}
\,\sharp\,^{\epsilon}r_{\epsilon}\,\sharp\,^{\epsilon}
h_\bullet^{\epsilon}\,\sharp\,^{\epsilon}\mathfrak{y}^{\epsilon}
\eeq
is $\Gamma$-periodic in the $\X$-variable. Thus, using the results in
Proposition \ref{Zak-transl},  $\mathfrak{Op}^\epsilon(\mathfrak{h}^\epsilon)$
commutes with
all the operators
$\{\Lambda^\epsilon(\gamma,\cdot)\tau_{\gamma}\}_{\gamma\in\Gamma}$ and we
get \eqref{eq:7.2.1.b}, achieving the proof of the proposition.
\end{proof}

\subsubsection {Control of $\Sigma_1 (\alpha,\beta)$.}\label{SS-sigma-1}

The main result here is the following lemma: 
\begin{lemma}\label{Concl-1}
We have
\begin{align*}
\Sigma_1(\alpha,\beta)&=\widetilde{\Lambda}^{\epsilon,\kappa}(\alpha,\beta)
\left\langle\phi^\epsilon_{
\alpha}\,,\,\left[H^{\epsilon}+\mathfrak{Op}^{\epsilon}\big(h_\circ^{\epsilon,
\kappa}\,\sharp^{\epsilon,\kappa}\, 
\mathfrak{z}^{\epsilon,\kappa}
\big)+\mathfrak{Op}^{\epsilon}\big(\mathfrak{z}^{\epsilon,\kappa}\,\sharp\,^{
\epsilon,\kappa}h_\circ^{\epsilon,\kappa}\,\sharp^{\epsilon,\kappa}\, 
\mathfrak{y}^{\epsilon,\kappa}\big)\right]\phi^\epsilon_{\beta}\right\rangle_{\mathcal{H
}}\\
&+ \kappa\epsilon\, \mathscr{O}(|\alpha-\beta|^{-\infty}) \,.
\end{align*}
\end{lemma}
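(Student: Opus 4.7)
The plan is to identify $\Sigma_1$ as a single magnetic pseudodifferential matrix element, expand it algebraically using $\mathfrak{y}^{\epsilon,\kappa} = p^{\epsilon,\kappa} + \mathfrak{z}^{\epsilon,\kappa}$, and then transfer each resulting piece from the $(\epsilon,\kappa)$-calculus to the constant-field $\epsilon$-calculus by extracting a global phase $\widetilde{\Lambda}^{\epsilon,\kappa}(\alpha,\beta)$. Since $\phi^{\epsilon,\kappa}_\gamma \in \pi^{\epsilon,\kappa}\mathcal{H}$, I would first rewrite
$$\Sigma_1(\alpha,\beta) = \left\langle \phi^{\epsilon,\kappa}_\alpha,\, \mathfrak{Op}^{\epsilon,\kappa}\bigl( \mathfrak{y}^{\epsilon,\kappa} \sharp^{\epsilon,\kappa} h_\circ^{\epsilon,\kappa} \sharp^{\epsilon,\kappa} \mathfrak{y}^{\epsilon,\kappa} \bigr) \phi^{\epsilon,\kappa}_\beta \right\rangle_{\mathcal{H}}.$$
Using $\mathfrak{y}^{\epsilon,\kappa} = p^{\epsilon,\kappa} + \mathfrak{z}^{\epsilon,\kappa}$ together with the absorption rules $p^{\epsilon,\kappa} \sharp^{\epsilon,\kappa} h_\circ^{\epsilon,\kappa} = h_\circ^{\epsilon,\kappa} = h_\circ^{\epsilon,\kappa} \sharp^{\epsilon,\kappa} p^{\epsilon,\kappa}$ (immediate from the definition of $h_\circ^{\epsilon,\kappa}$), the symbol splits as
$$\mathfrak{y}^{\epsilon,\kappa} \sharp^{\epsilon,\kappa} h_\circ^{\epsilon,\kappa} \sharp^{\epsilon,\kappa} \mathfrak{y}^{\epsilon,\kappa} = h_\circ^{\epsilon,\kappa} + h_\circ^{\epsilon,\kappa} \sharp^{\epsilon,\kappa} \mathfrak{z}^{\epsilon,\kappa} + \mathfrak{z}^{\epsilon,\kappa} \sharp^{\epsilon,\kappa} h_\circ^{\epsilon,\kappa} \sharp^{\epsilon,\kappa} \mathfrak{y}^{\epsilon,\kappa},$$
three symbols in $S^{-\infty}(\Xi)$.

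The analytic core of the argument is the transfer identity
$$\left\langle \phi^{\epsilon,\kappa}_\alpha,\, \mathfrak{Op}^{\epsilon,\kappa}(F)\, \phi^{\epsilon,\kappa}_\beta \right\rangle_{\mathcal{H}} = \widetilde{\Lambda}^{\epsilon,\kappa}(\alpha,\beta)\, \left\langle \phi^\epsilon_\alpha,\, \mathfrak{Op}^\epsilon(F)\, \phi^\epsilon_\beta \right\rangle_{\mathcal{H}} + \kappa\epsilon\, \mathscr{O}(|\alpha-\beta|^{-\infty}),$$
which I would establish for any $F \in S^{-\infty}(\Xi)$. To derive it, I first use Proposition \ref{Prop-m-q-W}(3) to replace $\phi^{\epsilon,\kappa}_\gamma$ by $\mathring{\phi}^{\epsilon,\kappa}_\gamma = \widetilde{\Lambda}^{\epsilon,\kappa}(\cdot,\gamma)\phi^\epsilon_\gamma$ up to a $\kappa\epsilon$ rapidly-decaying correction (whose contribution falls into the remainder). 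Then, using \eqref{Lambda-dec} and \eqref{magn-quant}, I write the integral kernel of $\mathfrak{Op}^{\epsilon,\kappa}(F)$ as $\widetilde{\Lambda}^{\epsilon,\kappa}(x,y)$ times the kernel of $\mathfrak{Op}^\epsilon(F)$, and combine the resulting three $\widetilde{\Lambda}^{\epsilon,\kappa}$ phases through the cocycle identity \eqref{D-Omega} applied successively to the triangles $(\alpha,x,y)$ and $(\alpha,y,\beta)$. This produces $\widetilde{\Lambda}^{\epsilon,\kappa}(\alpha,\beta)$ multiplied by two $\widetilde{\Omega}^{\epsilon,\kappa}$ factors; by \eqref{Est-Omega} each of them differs from $1$ by $\kappa\epsilon$ times a triangle area. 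The Schwartz decay of $\phi^\epsilon_\alpha,\phi^\epsilon_\beta$ absorbs the $|x-\alpha|,|y-\beta|$ growth, while the Schwartz decay of the kernel of $\mathfrak{Op}^\epsilon(F)$ in $x-y$ (since $F \in S^{-\infty}(\Xi)$) absorbs the growth in $|\alpha-\beta|$, yielding the claimed remainder.

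Applying this transfer identity to each of the three symbols in the expansion above reproduces the right-hand side of the lemma, except that the leading term appears as $\mathfrak{Op}^\epsilon(h_\circ^{\epsilon,\kappa})$ rather than $H^\epsilon$. The bound \eqref{est-h} gives $h_\circ^{\epsilon,\kappa} - h_\circ^\epsilon \in \kappa\epsilon\, S^{-\infty}(\Xi)$, so that discrepancy only perturbs the remainder; and since $\pi^\epsilon \mathfrak{Op}^\epsilon(h_\circ^\epsilon)\pi^\epsilon = \pi^\epsilon H^\epsilon \pi^\epsilon$ together with $\phi^\epsilon_\gamma \in \pi^\epsilon\mathcal{H}$, this surviving piece is exactly $\langle \phi^\epsilon_\alpha, H^\epsilon \phi^\epsilon_\beta\rangle_{\mathcal{H}}$.

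The main obstacle will be the uniform bookkeeping of three independent sources of $\kappa\epsilon$ errors — the deviation of $\mathbb{F}^{\epsilon,\kappa}$ from the identity (Proposition \ref{Prop-m-q-W}(3)), the triangle-flux defect $\widetilde{\Omega}^{\epsilon,\kappa} - 1$, and the discrepancy between the Moyal products $\sharp^{\epsilon,\kappa}$ and $\sharp^\epsilon$ (Proposition B.14 of \cite{CHP}) — in such a way that the $|\alpha - \beta|^{-\infty}$ decay survives each approximation. The mechanism is the one already used in Section 3 of \cite{CHP}: the Schwartz class of all relevant kernels at the lattice scale dominates the polynomial growth produced by the magnetic fluxes.
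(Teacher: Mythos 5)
Your proposal is correct and follows essentially the same route as the paper: write $\Sigma_1$ as a matrix element of $\mathfrak{Op}^{\epsilon,\kappa}(\mathfrak{y}^{\epsilon,\kappa}\sharp^{\epsilon,\kappa}h_\circ^{\epsilon,\kappa}\sharp^{\epsilon,\kappa}\mathfrak{y}^{\epsilon,\kappa})$, split it (via $\mathfrak{y}^{\epsilon,\kappa}=p^{\epsilon,\kappa}+\mathfrak{z}^{\epsilon,\kappa}$) into the three pieces appearing in the lemma, and transfer each to the constant-field $\sharp^\epsilon$-calculus by extracting $\widetilde{\Lambda}^{\epsilon,\kappa}(\alpha,\beta)$ and controlling the $\widetilde{\Omega}^{\epsilon,\kappa}$ flux defects and the $\mathbb{F}^{\epsilon,\kappa}$-sums through rapid decay — this is exactly the content of the paper's Lemmas \ref{P-red-OpF}, \ref{H-bd-Lemma} and Remark \ref{d-series-coeff}. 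The one small organizational difference is the leading term: the paper obtains $H^\epsilon$ directly through the gauge-covariance identity \eqref{26-02-2} (Lemma \ref{P-red-H}), whereas you run $h_\circ^{\epsilon,\kappa}$ through the same transfer lemma as the two correction terms and then invoke \eqref{est-h} together with $\pi^\epsilon\mathfrak{Op}^\epsilon(h_\circ^\epsilon)\pi^\epsilon=\pi^\epsilon H^\epsilon\pi^\epsilon$ — both routes are sound, and yours treats all three terms more uniformly.
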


The rest of this paragraph is dedicated to the proof of the above statement. We start by considering  the scalar products appearing in the double series in  the expression of  $\Sigma_1(\alpha,\beta)$ in \eqref{magn-matrix-1}.

\beq\label{sc-pr-1}
\begin{array}{ll}
\mathfrak{H}^{\epsilon,\kappa}_{\alpha'\beta'} & :=
\left\langle\big(Y^{\epsilon,\kappa}\big)^{-1/2}\Lambda^{\epsilon,\kappa}(\cdot,
\alpha')\tau_{\alpha'}\psi^\epsilon_0\,,\,H^{\epsilon,\kappa}
\big(Y^{\epsilon,\kappa}\big)^{-1/2}\Lambda^{\epsilon,\kappa}(\cdot,\beta')\tau_
{-\beta'}\psi^\epsilon_0\right\rangle_{\mathcal{H}}\\ & 
\; =\left\langle\phi^\epsilon_{\alpha'}\,,\,\widetilde{\Lambda}^{\epsilon,\kappa}(
\alpha',\cdot)H^{\epsilon,\kappa}\widetilde{\Lambda}^{\epsilon,\kappa}(\cdot,
\beta')\phi^\epsilon_{\beta'}\right\rangle_{\mathcal{H}} + \mathcal S_1(\alpha',\beta') + \mathcal S_2(\alpha',\beta')\,,
\end{array}
\eeq
with
\beq \label{T-2}
\mathcal S_1(\alpha',\beta'):=\left\langle\phi^\epsilon_{\alpha'}\,,\,\widetilde{\Lambda}^{\epsilon,\kappa}(
\alpha',\cdot)H^{\epsilon,\kappa}\big[\big(Y^{\epsilon,\kappa}\big)^{-1/2}
-\bb1\big]\widetilde{\Lambda}^{\epsilon,\kappa}(\cdot,
\beta')\phi^\epsilon_{\beta'}\right\rangle_{\mathcal{H}}
\eeq
 and
 \beq \label{T-3}
\mathcal S_2(\alpha',\beta'):= \,\left\langle\phi^\epsilon_{\alpha'}\,,\,\widetilde{\Lambda}^{\epsilon,\kappa}
(\alpha',\cdot)\big[\big(Y^{\epsilon,\kappa}\big)^{-1/2}
-\bb1\big]H^{\epsilon,\kappa}\big(Y^{\epsilon,\kappa}\big)^{-1/2}\widetilde{
\Lambda}^{ \epsilon,\kappa}(\cdot,
\beta')\phi^\epsilon_{\beta'}\right\rangle_{\mathcal{H}}.
\eeq

\begin{lemma}\label{P-red-H}
Let $m\in\mathbb{N}$. There exists $C_m>0$ such that for all  $\alpha',\beta' \in \Gamma$ and $0<\epsilon,\kappa\leq 1$ we have
 
$$
\left|\left\langle\phi^\epsilon_{\alpha'},\widetilde{\Lambda}^{\epsilon,
\kappa}(
\alpha',\cdot)H^{\epsilon,\kappa}\widetilde{\Lambda}^{\epsilon,\kappa}(\cdot,
\beta')\phi^\epsilon_{\beta'}\right\rangle_{\mathcal{H}}-\widetilde{\Lambda}
^{\epsilon,\kappa}(\alpha',\beta')
\left\langle\phi^{\epsilon}_{\alpha'},\,H^{\epsilon}\phi^{
\epsilon}_{\beta'}\right\rangle_{\mathcal{H}}\right|\leq
C_m\, \kappa\epsilon<\alpha'-\beta'>^{-m}.
$$
\end{lemma}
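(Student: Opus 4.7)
The plan is to factor out the constant phase $\widetilde{\Lambda}^{\epsilon,\kappa}(\alpha',\beta')$ via the cocycle identity \eqref{D-Omega}, then apply a gauge transformation to reduce $H^{\epsilon,\kappa}$ to $H^{\epsilon}$ plus a small first-order perturbation of the form $\kappa\epsilon\cdot |x-\beta'|\cdot(\text{covariant derivative})$, and finally invoke the uniform Schwartz decay of the quasi Wannier basis (Proposition~\ref{Prop-m-q-W}(2)) to promote pointwise $\mathcal{O}(\kappa\epsilon)$ bounds into the claimed $\langle\alpha'-\beta'\rangle^{-m}$ decay.

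Concretely, applying \eqref{D-Omega} to the vector potential $\kappa A_\epsilon$, whose magnetic field is $\widetilde{B}^{\epsilon,\kappa}=\kappa\epsilon B_\epsilon$, yields the pointwise identity
$$\widetilde{\Lambda}^{\epsilon,\kappa}(\alpha',x)\widetilde{\Lambda}^{\epsilon,\kappa}(x,\beta')=\widetilde{\Lambda}^{\epsilon,\kappa}(\alpha',\beta')\,\widetilde{\Omega}^{\epsilon,\kappa}(\alpha',x,\beta').$$
Setting $U:=\widetilde{\Lambda}^{\epsilon,\kappa}(\cdot,\beta')$, the gauge covariance formula \eqref{26-02-2} with $A=\kappa A_\epsilon$ gives the operator identity
$$U^{-1}H^{\epsilon,\kappa}U=H^{\epsilon}+\widetilde{W}^{\epsilon,\kappa}_{\beta'},$$
where $\widetilde{W}^{\epsilon,\kappa}_{\beta'}$ is a first-order differential operator whose coefficients involve the transverse-gauge potential of $\widetilde{B}^{\epsilon,\kappa}$ based at $\beta'$; by \eqref{axz} these coefficients are bounded pointwise by $C\kappa\epsilon|x-\beta'|$ (with analogous polynomial control of all derivatives), acting via the covariant derivatives $-i\partial_j-A^\Gamma_j-A^\epsilon_j$. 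Inserting both identities into the left-hand side splits it into the main term $\widetilde{\Lambda}^{\epsilon,\kappa}(\alpha',\beta')\langle\phi^\epsilon_{\alpha'},H^\epsilon\phi^\epsilon_{\beta'}\rangle$ and two remainders coming respectively from $[\widetilde{\Omega}^{\epsilon,\kappa}(\alpha',\beta',\cdot)]^{-1}-1$ and from $\widetilde{W}^{\epsilon,\kappa}_{\beta'}$.

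The first remainder is controlled by \eqref{Est-Omega}, which gives the pointwise bound $|[\widetilde{\Omega}^{\epsilon,\kappa}(\alpha',\beta',x)]^{-1}-1|\leq C\kappa\epsilon\,|(\beta'-\alpha')\wedge(x-\alpha')|$; the second has a pointwise bound of order $\kappa\epsilon|x-\beta'|$ after acting on a Schwartz vector. To convert these pointwise estimates into the required decay in $|\alpha'-\beta'|$, one uses Proposition~\ref{Prop-m-q-W}(2) together with the explicit formula $\phi^\epsilon_\gamma(x)=\Lambda^\epsilon(x,\gamma)\psi^\epsilon_0(x-\gamma)$: the vectors $\phi^\epsilon_{\alpha'}$, $H^\epsilon\phi^\epsilon_{\beta'}$ and $\widetilde{W}^{\epsilon,\kappa}_{\beta'}\phi^\epsilon_{\beta'}$ are all uniformly Schwartz-localized at $\alpha'$ or $\beta'$ respectively. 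A convolution-type estimate of the form
$$\int\langle x-\alpha'\rangle^{-M_1}\langle x-\beta'\rangle^{-M_2}|x-\alpha'|^{N_1}|x-\beta'|^{N_2}\,dx\leq C\,\langle\alpha'-\beta'\rangle^{-m}$$
(valid for $M_1,M_2$ sufficiently large) then yields the claimed bound $C_m\kappa\epsilon\langle\alpha'-\beta'\rangle^{-m}$. The main obstacle is ensuring uniform Schwartz localization of $\widetilde{W}^{\epsilon,\kappa}_{\beta'}\phi^\epsilon_{\beta'}$ at $\beta'$, since the covariant derivatives involve the unbounded potential $A^\epsilon(x)=(\epsilon B_0/2)(-x_2,x_1)$; this is resolved by the key observation that, because $\beta'\in\Gamma$, $A^\Gamma$ and $V_\Gamma$ are $\Gamma$-periodic, and $A^0$ is linear, the magnetic translation $\Lambda^\epsilon(\cdot,\beta')\circ\tau_{\beta'}$ intertwines $H^\epsilon$ with itself, so it suffices to verify the Schwartz property of $H^\epsilon\psi^\epsilon_0$ and of analogous first-order operators acting on $\psi^\epsilon_0$, which are both immediate from Proposition~\ref{Prop-m-q-W}(2).
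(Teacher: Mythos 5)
Your proof follows the same route as the paper's: factor out the constant phase $\widetilde{\Lambda}^{\epsilon,\kappa}(\alpha',\beta')$ via the cocycle identity \eqref{D-Omega}, use the gauge-covariance identity \eqref{26-02-2} to conjugate $H^{\epsilon,\kappa}$ by $\widetilde{\Lambda}^{\epsilon,\kappa}(\cdot,\beta')$ into $H^\epsilon$ plus a first-order operator with coefficients of size $\kappa\epsilon\langle x-\beta'\rangle$, control the $\widetilde{\Omega}^{\epsilon,\kappa}$-factor with \eqref{Est-Omega}, and close with the rapid decay of the quasi Wannier functions. Your explicit handling of uniform-in-$\beta'$ Schwartz localization via the magnetic translation intertwining of Proposition \ref{Zak-transl} is a legitimate subtlety that the paper's final sentence leaves implicit, but the underlying argument is the same.
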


\begin{proof}
We use \eqref{26-02-2} for the pair of vector potentials
$(A^{\epsilon,\kappa},A^\epsilon)$, in the form
$$
(-i\nabla-A^{\epsilon,\kappa}(
x))^2\widetilde{\Lambda}^{\epsilon,\kappa}(x,\tilde{\beta}) 
=\widetilde{\Lambda}^{\epsilon,\kappa}(x,\tilde{\beta})\left(
-i\nabla-A^\epsilon(x)\,+\,\kappa a_\epsilon(x,\tilde{\beta})\right)^2\,,
$$
 writing:
\beq
a_\epsilon(x,z)_j\:=\underset{k}{\sum}(x-\gamma)_k\int_0^1\epsilon
B_{jk}\big(\epsilon\gamma+s\epsilon(x-\gamma)\big)s\,ds\, \mbox{ for } j=1,2\,,
\eeq
and noticing that
\begin{equation}\label{est-aepsilon}
|a_\epsilon(x,\gamma)|\leq C\, \epsilon<x-\gamma>\,.
\end{equation}
Using \eqref{D-Omega} we have 
\begin{equation}\label{L-ek}
\widetilde{\Lambda}^{\epsilon,\kappa}(x,\tilde{\alpha})^{-1}\widetilde{\Lambda}^
{\epsilon,\kappa}(x,\tilde{\beta})\,=\,\widetilde{\Lambda}^{\epsilon,\kappa}
(\tilde{\alpha},\tilde{\beta})\,
\widetilde{\Omega}^{\epsilon,\kappa}(\tilde{\alpha},x,\tilde{\beta})\,,
\end{equation}
and we know from \eqref{sh-not-magn-2} and \eqref{Est-Omega}   that
$$ |\widetilde{\Omega}^{\epsilon,\kappa}(\tilde{\alpha},x,\tilde{\beta})-\bb1|\,
\leq C \, \kappa\epsilon\, |x-\tilde { \alpha}|\, |x-\tilde{\beta}|\,.
$$

The fast decay of the quasi Wannier function (Proposition \ref{free-ps-W}) allows us to finish the proof.
\end{proof}

\begin{lemma}\label{P-red-OpF}
Let us consider an operator of the form
$
{\pi}^{\epsilon,\kappa}\mathfrak{Op}^{\epsilon,\kappa}(F)
\pi^{\epsilon,\kappa}
$
with $F\in S^-_1(\Xi)$ (see \eqref{M-inv}). Then for
any $m\in\mathbb{N}$, there exists a seminorm $\nu_m$ such that,  $\forall \,
(\alpha',\beta') \in \Gamma\times \Gamma$ and $\forall(\epsilon,\kappa)\in[0,1]\times[0,1]$,
$$
\begin{array}{l}
\left|\left\langle\phi^\epsilon_{\alpha'}\,,
\,\widetilde{\Lambda}^{\epsilon,\kappa}(\alpha',\cdot){\pi}^{\epsilon,
\kappa}\mathfrak{Op}^{\epsilon,\kappa}(F)
\pi^{\epsilon,\kappa}\widetilde{\Lambda}^{\epsilon,\kappa}(\cdot,
\beta')\phi^\epsilon_{\beta'}\right\rangle_{\mathcal{H}}\,-\,
\widetilde{\Lambda}^{\epsilon,\kappa}(\alpha',\beta')\left\langle\phi^{\epsilon}
_{\alpha'},\,{\pi}^{
\epsilon}\mathfrak{Op}^{\epsilon}(F)
\pi^{\epsilon}\,
\phi^{\epsilon}_{\beta'}\right\rangle_{\mathcal{H}}\right| 
\\
\quad \qquad\qquad 
\leq\nu_m(F)\, \kappa\epsilon<\alpha'-\beta'>^{-m}\,.
\end{array}
$$
\end{lemma}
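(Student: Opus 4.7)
The plan is to adapt the argument of Lemma \ref{P-red-H} to the pseudodifferential setting, exploiting the decomposition $\Lambda^{\epsilon,\kappa}=\widetilde{\Lambda}^{\epsilon,\kappa}\Lambda^\epsilon$ from \eqref{Lambda-dec} and the flux composition identity \eqref{D-Omega}. Since $\pi^{\epsilon,\kappa}=\mathfrak{Op}^{\epsilon,\kappa}(p^{\epsilon,\kappa})$, the operator $\pi^{\epsilon,\kappa}\mathfrak{Op}^{\epsilon,\kappa}(F)\pi^{\epsilon,\kappa}$ equals $\mathfrak{Op}^{\epsilon,\kappa}(G^{\epsilon,\kappa})$ with
\[
G^{\epsilon,\kappa}:=p^{\epsilon,\kappa}\sharp^{\epsilon,\kappa}F\sharp^{\epsilon,\kappa}p^{\epsilon,\kappa}\,,\qquad G^{\epsilon}:=p^{\epsilon}\sharp^{\epsilon}F\sharp^{\epsilon}p^{\epsilon}\,,
\]
both of class $S^{-\infty}(\Xi)$. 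By Proposition \ref{peps-p} together with the standard comparison of the two magnetic Moyal products (Propositions B.12 and B.14 of \cite{CHP}), one has $G^{\epsilon,\kappa}-G^\epsilon=\mathscr{O}(\kappa\epsilon)$ in every Schwartz seminorm, with bound depending only on a finite-order seminorm $\nu_m(F)$. Combined with the rapid decay of $\phi^\epsilon_\gamma$ (Proposition \ref{free-ps-W}) and the fact that $|\widetilde{\Lambda}^{\epsilon,\kappa}|=1$, the contribution of this symbol difference to the matrix element is at most $C_m\kappa\epsilon\langle\alpha'-\beta'\rangle^{-m}$.

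It therefore suffices to estimate
\[
\left\langle\phi^\epsilon_{\alpha'}\,,\,\widetilde{\Lambda}^{\epsilon,\kappa}(\alpha',\cdot)\,\mathfrak{Op}^{\epsilon,\kappa}(G^\epsilon)\,\widetilde{\Lambda}^{\epsilon,\kappa}(\cdot,\beta')\,\phi^\epsilon_{\beta'}\right\rangle_{\mathcal{H}}-\widetilde{\Lambda}^{\epsilon,\kappa}(\alpha',\beta')\left\langle\phi^\epsilon_{\alpha'}\,,\,\mathfrak{Op}^\epsilon(G^\epsilon)\,\phi^\epsilon_{\beta'}\right\rangle_{\mathcal{H}}\,.
\]
I would expand both matrix elements in kernel form via \eqref{magn-quant}. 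In the first one all the $\kappa$-dependence sits in the slow phase chain $\widetilde{\Lambda}^{\epsilon,\kappa}(\alpha',x)\widetilde{\Lambda}^{\epsilon,\kappa}(x,y)\widetilde{\Lambda}^{\epsilon,\kappa}(y,\beta')$, which I would collapse into the single factor $\widetilde{\Lambda}^{\epsilon,\kappa}(\alpha',\beta')$ by two applications of the identity $\Lambda^A(a,b)\Lambda^A(b,c)=\Omega^B(a,b,c)\Lambda^A(a,c)$ (a rearrangement of \eqref{D-Omega}). Each reduction produces a flux factor $\widetilde{\Omega}^{\epsilon,\kappa}$ over a triangle with vertices in $\{\alpha',x,y,\beta'\}$; splitting it as $\widetilde{\Omega}^{\epsilon,\kappa}=1+(\widetilde{\Omega}^{\epsilon,\kappa}-1)$ and invoking \eqref{Est-Omega} for the scaled field $\kappa\epsilon B_\epsilon$ bounds every correction factor by $C\kappa\epsilon\,|(\cdot)\wedge(\cdot)|$. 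The \emph{main} contribution, where every $\widetilde{\Omega}^{\epsilon,\kappa}$ is replaced by $1$, rebuilds exactly the second matrix element above; every other contribution carries an additional prefactor $\kappa\epsilon$ multiplied by quadratic polynomial growth in the mutual distances between $\alpha',x,y,\beta'$.

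The polynomial factors are absorbed by combining the Schwartz localization of $\phi^\epsilon_{\alpha'}(x)$ around $\alpha'$ and of $\phi^\epsilon_{\beta'}(y)$ around $\beta'$ (Propositions \ref{free-ps-W} and \ref{Prop-m-q-W}(2)) with the decay of the Weyl kernel $(\mathfrak{W}G^\epsilon)(x,y)$ in $|x-y|$, which for $G^\epsilon\in S^{-\infty}(\Xi)$ is of Schwartz type and quantified by a sufficiently high-order seminorm $\nu_m(F)$. A triangle-inequality argument on the broken path $\alpha'\to x\to y\to\beta'$ then converts the product of these local decays into the straight-line decay $\langle\alpha'-\beta'\rangle^{-m}$, yielding the claimed bound. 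The main obstacle is precisely this last bookkeeping step: one must ensure that the polynomial growth produced by the flux corrections is uniformly dominated by the combined Schwartz decays of $\phi^\epsilon_{\alpha'}$, $\phi^\epsilon_{\beta'}$, and $\mathfrak{W}G^\epsilon$ along the full broken path, while still extracting the correct straight-line decay in $\alpha'-\beta'$ at the end. This is routine and follows the pattern used in the proof of Proposition 3.12 of \cite{CHP}.
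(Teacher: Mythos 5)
Your proposal is correct and follows essentially the same route as the paper: express everything as integral operators via the magnetic kernel formula \eqref{magn-quant}, use the factorization $\Lambda^{\epsilon,\kappa}=\widetilde{\Lambda}^{\epsilon,\kappa}\Lambda^\epsilon$ to isolate the slow phases, collapse the chain $\widetilde{\Lambda}^{\epsilon,\kappa}(\alpha',x)\widetilde{\Lambda}^{\epsilon,\kappa}(x,y)\widetilde{\Lambda}^{\epsilon,\kappa}(y,\beta')$ into $\widetilde{\Lambda}^{\epsilon,\kappa}(\alpha',\beta')$ times flux factors $\widetilde{\Omega}^{\epsilon,\kappa}$, bound these by $\kappa\epsilon$ times areas via \eqref{Est-Omega}, and absorb the polynomial growth through the Schwartz decay of the quasi-Wannier functions and the integrated kernel bound of Proposition B.8 in \cite{CHP}. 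The one place where you are more explicit than the paper is the preliminary reduction from $G^{\epsilon,\kappa}=p^{\epsilon,\kappa}\sharp^{\epsilon,\kappa}F\sharp^{\epsilon,\kappa}p^{\epsilon,\kappa}$ to $G^{\epsilon}=p^{\epsilon}\sharp^{\epsilon}F\sharp^{\epsilon}p^{\epsilon}$ using Proposition \ref{peps-p} together with the comparison of $\sharp^{\epsilon,\kappa}$ and $\sharp^\epsilon$; the paper leaves this step implicit (it runs the kernel computation directly on $\mathfrak{Op}^{\epsilon,\kappa}(F)$ and states the conclusion with the projections attached), so your version is slightly more transparent without changing the mechanism.
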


\begin{proof}
Using Proposition B.8 in \cite{CHP} we may conclude that for 
$F\in S^-_1(\Xi)$ and for any $k\in\mathbb{N}$ there exists a seminorm
$\nu_k:S^-_1(\Xi)\rightarrow\mathbb{R}_+$ such that
\beq
\underset{x\in\X}{\sup}\int_\X \,|x-y|^k|K_F(x,y)\,dy |\leq \nu_k(F) \, .
\eeq
Let us compute for $u\in\mathcal{H}$:
\beq
\begin{array}{l}
\Big[\widetilde{\Lambda}^{\epsilon,\kappa}
(\alpha',\cdot)
\mathfrak{Op}^{\epsilon,\kappa}(F)
\widetilde{\Lambda}^{\epsilon,\kappa}(\cdot,\beta')u\Big](x)\\ \quad  =
\Big[\widetilde{\Lambda}^{\epsilon,\kappa}
(\alpha',\cdot)\big[\Int(\widetilde{\Lambda}^{\epsilon,\kappa}
\Lambda^\epsilon
K_F)(\widetilde{\Lambda}^{\epsilon,\kappa}(\cdot,\beta')u)\big]\Big](x) \\ \quad 
=\widetilde{\Lambda}^{\epsilon,\kappa}(\alpha',\beta')\widetilde{
\Omega} ^{\epsilon,\kappa}(\alpha',x,\beta')
\int_\X dy\,\widetilde{\Omega}
^{\epsilon,\kappa}(x,y,\beta')\Lambda^\epsilon(x,y)K_F(x,y)u(y)\,,
\end{array}
\eeq 
and take into account the estimates (from \eqref{Est-Omega})
\beq
\left|\widetilde{\Omega} ^{\epsilon,\kappa}(\alpha',x,\beta')-1\right|\leq
C\, \kappa\epsilon\, |x-\alpha'|\,|x-\beta'|\,,
\eeq
and 
\beq
\left|\widetilde{\Omega} ^{\epsilon,\kappa}(x,x+z,\beta')-1\right|\leq
C\, \kappa\epsilon\, |z|\,|x-\beta'|\,.
\eeq
Thus we obtain
\beq
\begin{array}{l}
\left|\left\langle
\phi^{\epsilon}_{\alpha},\,\widetilde{\Lambda}^{\epsilon,\kappa}(\alpha,\cdot)\pi^{\epsilon,\kappa}
\mathfrak{Op}^{\epsilon,\kappa}(F)\pi^{\epsilon,\kappa}\,
\widetilde{\Lambda}^{\epsilon,\kappa}(\cdot,\beta)
\phi^{\epsilon}_{\beta}\right\rangle_{\mathcal{H}}-
\widetilde{\Lambda}^{\epsilon,\kappa}(\alpha,\beta)\left\langle\phi^{\epsilon}
_\alpha,\,{\pi}^{
\epsilon}\mathfrak{Op}^{\epsilon}(F)\pi^{\epsilon}\,
\phi^{\epsilon}_\beta\right\rangle_{\mathcal{H}}\right| \\
\qquad\qquad\qquad\qquad\qquad\qquad\qquad  \leq \, \kappa\epsilon\, \nu_m(F)<\alpha'-\beta'>^m\,,
\end{array}
\eeq 
for any $m\in\mathbb{N}$, with some $\nu_m:S^0_0(\Xi)\rightarrow\mathbb{R}_+$ a
seminorm  on $S^0_0(\Xi)$.
\end{proof}

\begin{lemma}\label{H-bd-Lemma}
For any
$m\in\mathbb{N}$ there exists $C_m>0$ such that
$$
\left|\mathfrak{H}^{\epsilon,\kappa}_{\alpha'\beta'}\right|\leq
C_m<\alpha'-\beta'>^{-m}.
$$
\end{lemma}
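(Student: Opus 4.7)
The plan is to estimate separately the three pieces in the decomposition \eqref{sc-pr-1}, namely $\mathfrak{H}^{\epsilon,\kappa}_{\alpha'\beta'} = \mathfrak{T}_0(\alpha',\beta') + \mathcal{S}_1(\alpha',\beta') + \mathcal{S}_2(\alpha',\beta')$, where $\mathfrak{T}_0$ denotes the main term $\langle\phi^\epsilon_{\alpha'},\widetilde{\Lambda}^{\epsilon,\kappa}(\alpha',\cdot)H^{\epsilon,\kappa}\widetilde{\Lambda}^{\epsilon,\kappa}(\cdot,\beta')\phi^\epsilon_{\beta'}\rangle_{\mathcal H}$. For each piece, the strategy is the same: reduce it, modulo an error of order $\kappa\epsilon<\alpha'-\beta'>^{-m}$, to a matrix element computed in the constant magnetic field $\epsilon B_0$, and then use the Zak magnetic translations of Proposition \ref{Zak-transl} together with the Schwartz character of $\phi^\epsilon_0$ to extract the rapid decay in $|\alpha'-\beta'|$.

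For the main piece $\mathfrak{T}_0$, Lemma \ref{P-red-H} yields directly
\[
\mathfrak{T}_0(\alpha',\beta') = \widetilde{\Lambda}^{\epsilon,\kappa}(\alpha',\beta')\,\langle\phi^\epsilon_{\alpha'},\, H^\epsilon\phi^\epsilon_{\beta'}\rangle_{\mathcal H} + \mathscr{O}\big(\kappa\epsilon<\alpha'-\beta'>^{-m}\big),
\]
so only the constant-field scalar product must be controlled. Since $\phi^\epsilon_{\alpha'},\phi^\epsilon_{\beta'}\in\pi^\epsilon\mathcal H$, we may replace $H^\epsilon$ by $\pi^\epsilon H^\epsilon\pi^\epsilon = \mathfrak{Op}^\epsilon(h_\circ^\epsilon)$, whose symbol $h_\circ^\epsilon$ lies in $S^{-\infty}(\Xi)$ and is $\Gamma$-periodic in the spatial variable. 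By Proposition \ref{Zak-transl}(2) this operator commutes with all Zak translations $\mathcal T^\epsilon_\gamma$; since $\phi^\epsilon_\gamma = \mathcal T^\epsilon_\gamma\phi^\epsilon_0$ and the Zak translations are unitary, the scalar product equals, up to a unimodular phase, $\langle\phi^\epsilon_0, \mathfrak{Op}^\epsilon(h_\circ^\epsilon)\phi^\epsilon_{\beta'-\alpha'}\rangle_{\mathcal H}$. The rapid decay in $|\beta'-\alpha'|$ then follows from the Schwartz nature of $\phi^\epsilon_0$ combined with the off-diagonal decay of the integral kernel of $\mathfrak{Op}^\epsilon(h_\circ^\epsilon)$ (a consequence of $h_\circ^\epsilon\in S^{-\infty}(\Xi)$), via a standard Peetre-type estimate on the resulting triple integral.

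For the terms $\mathcal S_1$ and $\mathcal S_2$ one first has to recognize the operator sandwiched between the phase factors as a magnetic pseudodifferential operator with $S^{-\infty}(\Xi)$-symbol. Using \eqref{phase-fact-dec} to identify $\widetilde{\Lambda}^{\epsilon,\kappa}(\cdot,\beta')\phi^\epsilon_{\beta'} = \mathring{\phi}^{\epsilon,\kappa}_{\beta'}\in\pi^{\epsilon,\kappa}\mathcal H$, and observing that $(Y^{\epsilon,\kappa})^{-1/2}-\mathbf 1$, viewed on $\pi^{\epsilon,\kappa}\mathcal H$ and extended by zero, equals $\mathfrak{Op}^{\epsilon,\kappa}(\mathfrak z^{\epsilon,\kappa})$, one can insert the projection $\pi^{\epsilon,\kappa}$ on both sides of the scalar product and rewrite the composite $H^{\epsilon,\kappa}[(Y^{\epsilon,\kappa})^{-1/2}-\mathbf 1]$ as $\mathfrak{Op}^{\epsilon,\kappa}\big(h_\circ^{\epsilon,\kappa}\,\sharp^{\epsilon,\kappa}\,\mathfrak z^{\epsilon,\kappa}\big)$, whose symbol is in $S^{-\infty}(\Xi)$. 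Lemma \ref{P-red-OpF} then reduces $\mathcal S_1$ to a constant-field matrix element, which the Zak-translation argument above handles verbatim; the term $\mathcal S_2$ is treated identically, the relevant symbol being $\mathfrak z^{\epsilon,\kappa}\,\sharp^{\epsilon,\kappa}\,h_\circ^{\epsilon,\kappa}\,\sharp^{\epsilon,\kappa}\,\mathfrak y^{\epsilon,\kappa}\in S^{-\infty}(\Xi)$.

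The main source of difficulty I anticipate is the algebraic bookkeeping required to legitimately replace the unbounded operator $H^{\epsilon,\kappa}$ by the bounded operator $\mathfrak{Op}^{\epsilon,\kappa}(h_\circ^{\epsilon,\kappa})$ inside the scalar products---verifying, in particular, that $\mathring{\phi}^{\epsilon,\kappa}_{\alpha'}$ does lie in $\pi^{\epsilon,\kappa}\mathcal H$ so that the projection can be inserted on both sides, and that the various Moyal compositions that appear remain in $S^{-\infty}(\Xi)$ under the magnetic calculus. Once these verifications are in place, the uniform seminorm bounds of Lemmas \ref{P-red-H} and \ref{P-red-OpF}, combined with the Zak-translation argument, deliver the required estimate $|\mathfrak{H}^{\epsilon,\kappa}_{\alpha'\beta'}|\leq C_m<\alpha'-\beta'>^{-m}$ uniformly in $(\epsilon,\kappa)\in[0,\epsilon_0]\times[0,1]$.
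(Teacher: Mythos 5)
Your proposal is correct in outcome but is substantially more roundabout than the paper's proof and contains one small imprecision. The paper's proof skips the decomposition \eqref{sc-pr-1} entirely: it observes that $\mathfrak{H}^{\epsilon,\kappa}_{\alpha'\beta'}$ is \emph{itself} a matrix element of a single magnetic pseudodifferential operator with $S^{-\infty}(\Xi)$ symbol, namely
$\mathfrak{Op}^{\epsilon,\kappa}\big(\mathfrak{y}^{\epsilon,\kappa}\,\sharp^{\epsilon,\kappa}\,h_\circ^{\epsilon,\kappa}\,\sharp^{\epsilon,\kappa}\,\mathfrak{y}^{\epsilon,\kappa}\big)$,
taken between $\mathring{\phi}^{\epsilon,\kappa}_{\alpha'}$ and $\mathring{\phi}^{\epsilon,\kappa}_{\beta'}$ (obtained by inserting $\pi^{\epsilon,\kappa}$ on both sides), and then applies Proposition B.8 of \cite{CHP} (fast off-diagonal decay of the kernel of an $S^{-\infty}$ magnetic operator) together with the rapid decay of the distorted quasi-Wannier functions and a Peetre estimate to conclude in one step. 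Your route goes through the three-term decomposition and the reduction to the constant-field $\epsilon B_0$ via Lemmas \ref{P-red-H} and \ref{P-red-OpF}, which is not wrong but is doing more work than the statement requires: those lemmas are tailored to extracting the sharp $\kappa\epsilon$ error factor needed for Proposition \ref{prelim-res}, whereas here only a crude $\mathscr{O}(\langle\alpha'-\beta'\rangle^{-m})$ bound, uniform in $(\epsilon,\kappa)$, is wanted. The one imprecise claim in your argument is that ``the Zak-translation argument above handles [$\mathcal S_1,\mathcal S_2$] verbatim'': after Lemma \ref{P-red-OpF} those terms reduce to $\langle\phi^\epsilon_{\alpha'},\pi^\epsilon\mathfrak{Op}^\epsilon(F)\pi^\epsilon\phi^\epsilon_{\beta'}\rangle$ with $F=h_\circ^{\epsilon,\kappa}\,\sharp^{\epsilon,\kappa}\,\mathfrak z^{\epsilon,\kappa}$ built from the $\kappa$-dependent Moyal product $\sharp^{\epsilon,\kappa}$, which does \emph{not} preserve $\Gamma$-periodicity in $x$ (Propositions \ref{Zak-transl}(2) and \ref{Gamma-inv} are stated only for the constant field $\epsilon B_0$); so $\mathfrak{Op}^\epsilon(F)$ need not commute with the Zak translations. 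Fortunately the Zak step is inessential even for the $\mathfrak T_0$ piece: the Peetre estimate on the triple integral $\int\int\langle x-\alpha'\rangle^{-m-3}\langle x-y\rangle^{-m-3}\langle y-\beta'\rangle^{-m-3}\,dx\,dy\lesssim\langle\alpha'-\beta'\rangle^{-m}$ already gives the decay for any $S^{-\infty}$ symbol, periodic or not, which is exactly what the paper uses and what makes its one-line identification work. If you drop the Zak-translation detour and apply Peetre directly to each of your three pieces, your argument becomes airtight, though still longer than the paper's.
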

\begin{proof}
We have 
\beq
\begin{array}{ll}
\mathfrak{H}^{\epsilon,\kappa}_{\alpha'\beta'}& =
\left\langle\big(Y^{\epsilon,\kappa}\big)^{-1/2}\Lambda^{\epsilon,\kappa}
(\cdot ,
\alpha')\tau_{\alpha'}\psi^\epsilon_0\,,\,H^{\epsilon,\kappa}
\big(Y^{\epsilon,\kappa}\big)^{-1/2}\Lambda^{\epsilon,\kappa}(\cdot,\beta')\tau_
{-\beta'}\psi^\epsilon_0\right\rangle_{\mathcal{H}} \\ & 
=\left\langle\phi^{\epsilon,\kappa}_{\alpha'}\,,\,\mathfrak{Op}^{\epsilon,\kappa
}\big(\mathfrak{y}^{\epsilon,\kappa}\,\sharp^{\epsilon,\kappa}\, 
h_\circ\,\sharp\,^
{\epsilon,\kappa} \mathfrak{y}^{\epsilon,\kappa}\big)
\phi^{\epsilon,\kappa}_{\beta'}\right\rangle_{\mathcal{H}}.
\end{array}
\eeq
We notice that
$h_\circ\,\sharp^{\epsilon,\kappa}\, 
\mathfrak{y}^{\epsilon,\kappa}\in
S^{-\infty}(\Xi)$ so that using once again Proposition B.8 from \cite{CHP} as
in the begining of the proof of Lemma \ref{P-red-OpF} and the rapid
decay of the magnetic distorted Wannier functions we conclude that for any
$m\in\mathbb{N}$ there exist $C_m>0$, $C'_m>0$ and a seminorm $\nu_m$ on $S^{-\infty}(\Xi)$, such that
$$
\begin{array}{l}
\left|\mathfrak{H}^{\epsilon,\kappa}_{\alpha'\beta'}\right| \\ 
\quad 
\leq C'_m\nu_m
\big(\mathfrak{y}^{\epsilon,\kappa}\,\sharp^{\epsilon,\kappa}\, 
h_\circ\,\sharp\,^{
\epsilon,\kappa}\mathfrak{y}^{\epsilon,\kappa}
\big)\int_\X\int_\X dx\,dy<x-\alpha'>^{-m-3}<x-y>^{-m}<y-\beta'>^{-m-3}\\
\quad \leq
 C_m<\alpha'-\beta'>^{-m}\,.
 \end{array}
$$
\end{proof}

\begin{remark}\label{d-series-coeff}
Let us notice that
\beq
\underset{\alpha'\in\Gamma}{
\sum}\underset{\beta'\in\Gamma}{\sum}\overline{\mathbb{F}^{\epsilon,\kappa}_{
\alpha'\alpha}}
\mathbb{F}^{\epsilon,\kappa}_{\beta'\beta}\mathfrak{H}^{\epsilon,\kappa}_{
\alpha',\beta'}=
\mathfrak{H}^{\epsilon,\kappa}_{\alpha,\beta}+\underset{\alpha'\in\Gamma}{
\sum}\big(\overline{\mathbb{F}^{\epsilon,\kappa}}-\bb1\big)_{\alpha'\alpha}
\mathfrak{H}^{\epsilon,\kappa}_{\alpha',\beta}
+\underset{\alpha'\in\Gamma}{
\sum}\underset{\beta'\in\Gamma}{\sum}\overline{\mathbb{F}^{\epsilon,\kappa}_{
\alpha'\alpha}}\big(
\mathbb{F}^{\epsilon,\kappa}-\bb1\big)_{\beta'\beta}\mathfrak{H}^{\epsilon,
\kappa}_{\alpha',\beta'}
\eeq
and due to Proposition \ref{Prop-m-q-W} (point 3) we deduce that for any $(\epsilon,\kappa)\in[0,\epsilon_0]\times[0,1]$ for some small enough $\epsilon_0>0$ and for any $(\alpha,\beta)\in\Gamma\times\Gamma$ we have that for any $m\in\mathbb{N}$ there exists some $C_m>0$ such that
\beq
\left|\underset{\alpha'\in\Gamma}{
\sum}\underset{\beta'\in\Gamma}{\sum}\overline{\mathbb{F}^{\epsilon,\kappa}_{
\alpha'\alpha}}
\mathbb{F}^{\epsilon,\kappa}_{\beta'\beta}\mathfrak{H}^{\epsilon,\kappa}_{
\alpha',\beta'}\,-\,\mathfrak{H}^{\epsilon,\kappa}_{\alpha,\beta}\right|\ \leq\ C_m\, \kappa\epsilon<\alpha-\beta>^{-m}.
\eeq
\end{remark}
Coming to the last two terms 
\eqref{T-2} and \eqref{T-3}   we notice that they may be written as:
\beq
\mathcal S_1
=\left\langle\phi^\epsilon_{\alpha'}\,,\,\widetilde{\Lambda}^{\epsilon,\kappa
}(\alpha',\cdot){\pi}^{\epsilon,\kappa}
\mathfrak{Op}^{\epsilon,\kappa}\big(h_\circ^{\epsilon,\kappa}\,\sharp\,^{
\epsilon,\kappa}\mathfrak{z}^{\epsilon,\kappa}
\big){\pi}^{\epsilon,\kappa}
\widetilde{\Lambda}^{\epsilon,\kappa}(\cdot,
\beta')\phi^\epsilon_{\beta'}\right\rangle_{\mathcal{H}},
\eeq
and 
\beq
\mathcal S_2
=\left\langle\phi^\epsilon_{\alpha'}\,,\,\widetilde{\Lambda}^{\epsilon,\kappa}
(\alpha',\cdot){\pi}^{\epsilon,\kappa}
\mathfrak{Op}^{\epsilon,\kappa}\big(\mathfrak{z}^{\epsilon,\kappa}\,\sharp\,^{
\epsilon,\kappa}h_\circ^{\epsilon,\kappa}\,\sharp^{\epsilon,\kappa}\, 
\mathfrak{y}^{\epsilon,\kappa}\big)
{\pi}^{\epsilon,\kappa}\widetilde{\Lambda}^{\epsilon,\kappa}(\cdot,
\beta')\phi^\epsilon_{\beta'}\right\rangle_{\mathcal{H}}.
\eeq
Thus
we notice that $h_\circ^{\epsilon,\kappa}\,\sharp\,^{
\epsilon,\kappa}\mathfrak{z}^{\epsilon,\kappa}$ and $
\mathfrak{z}^{\epsilon,\kappa}\,\sharp\,^{
\epsilon,\kappa}h_\circ^{\epsilon,\kappa}\,\sharp^{\epsilon,\kappa}\, 
\mathfrak{y}^{\epsilon,\kappa}$ are symbols of class $S^{-\infty}(\Xi)$ and we may use Lemma~\ref{P-red-OpF} in order to obtain that:
\begin{align*}
&\left|\left\langle\phi^\epsilon_{\alpha'}\,,\,
\widetilde{\Lambda}^{\epsilon,\kappa}(
\alpha',\cdot)H^{\epsilon,\kappa} \big[\big(Y^{\epsilon,\kappa}\big)^{-1/2}
-\bb1\big]\widetilde{\Lambda}^{\epsilon,\kappa}(\cdot,
\beta')\phi^\epsilon_{\beta'}\right\rangle_{\mathcal{H}} \right . \\
 & \qquad \qquad -\left . \widetilde{\Lambda}^{
\epsilon,\kappa}(\alpha',\beta')\left\langle\phi^\epsilon_{\alpha'}\,,\,
\mathfrak{Op}^{\epsilon}\big(h_\circ^{\epsilon,\kappa}\,\sharp\,^{
\epsilon,\kappa}\mathfrak{z}^{\epsilon,\kappa}
\big)\phi^\epsilon_{\beta'}\right\rangle_{\mathcal{H}}
\right| \leq   C_m\,\kappa\epsilon<\alpha'-\beta'>^{-m},
\end{align*}
and
\begin{align*}
&\left|\left\langle\phi^\epsilon_{\alpha'}\,,\,
\widetilde{\Lambda}^{\epsilon,
\kappa}
(\alpha',\cdot)\big[\big(Y^{\epsilon,\kappa}\big)^{-1/2}
-\bb1\big]H^{\epsilon,\kappa}\big(Y^{\epsilon,\kappa}\big)^{-1/2}\widetilde{
\Lambda}^{ \epsilon,\kappa}(\cdot,
\beta')\phi^\epsilon_{\beta'}\right\rangle_{\mathcal{H}}\right . \\
&\qquad \qquad - \left . \left\langle\phi^\epsilon_{\alpha'}\,,\,
\mathfrak{Op}^{\epsilon}\big(\mathfrak{z}^{\epsilon,\kappa}\,\sharp\,^{
\epsilon,\kappa}h_\circ^{\epsilon,\kappa}\,\sharp^{\epsilon,\kappa}\, 
\mathfrak{y}^{\epsilon,\kappa}\big)\phi^\epsilon_{\beta'}\right\rangle_{\mathcal{H}}
\right| \leq  C_m\, \kappa\epsilon<\alpha'-\beta'>^{-m}.
\end{align*}

\subsubsection {Control of $\Sigma_2(\alpha,\beta)$.}\label{SS-sigma-2}

Going back to the double series appearing  in  \eqref{magn-matrix-2} we write
\begin{align}\label{B-1}
&\mathfrak{R}^{\epsilon,\kappa}_{\alpha'\beta'}:= \nonumber\\
&=\left\langle\big(Y^{\epsilon,\kappa}\big)^{-1/2}\Lambda^{\epsilon,\kappa}(\cdot,
\alpha')\tau_{\alpha'}\psi^\epsilon_0\,,\,
{\pi}^{\epsilon,\kappa}H^{\epsilon,\kappa}
R^{\epsilon,\kappa}_\bot(0)
H^{\epsilon,\kappa}{\pi}^{\epsilon,\kappa}
\big(Y^{\epsilon,\kappa}\big)^{-1/2}\Lambda^{\epsilon,\kappa}(\cdot,\beta')\tau_
{-\beta'}\psi^\epsilon_0\right\rangle_{\mathcal{H}}\nonumber\\
&=\left\langle\phi^\epsilon_{\alpha'}\,,
\,\widetilde{\Lambda}^{\epsilon,\kappa}(\alpha',\cdot)\big(Y^{\epsilon,\kappa}
\big)^{-1/2}
{\pi}^{\epsilon,\kappa}H^{\epsilon,\kappa}
R^{\epsilon,\kappa}_\bot(0)
H^{\epsilon,\kappa}{\pi}^{\epsilon,\kappa}
\big(Y^{\epsilon,\kappa}\big)^{-1/2}\widetilde{\Lambda}^{\epsilon,\kappa}(\cdot,
\beta')\phi^\epsilon_{\beta'}\right\rangle_{\mathcal{H}}\nonumber\\
&=\left\langle\phi^\epsilon_{\alpha'}\,,
\,\widetilde{\Lambda}^{\epsilon,\kappa}(\alpha',\cdot)
\mathfrak{Op}^{\epsilon,\kappa}\big(
\mathfrak{y}^{\epsilon,\kappa}\,\sharp^{\epsilon,\kappa}\, 
h_\bullet^{\epsilon,
\kappa}\,\sharp^{\epsilon,\kappa}\, 
r_{\epsilon,\kappa}\,\sharp^{\epsilon,\kappa}\, 
h_\bullet^{\epsilon,\kappa}\,\sharp^{\epsilon,\kappa}\, 
\mathfrak{y}^{\epsilon,
\kappa}\big)
\widetilde{\Lambda}^{\epsilon,\kappa}(\cdot,
\beta')\phi^\epsilon_{\beta'}\right\rangle_{\mathcal{H}}\,,
\end{align}
and notice that
\begin{align*}
&\underset{\alpha'\in\Gamma}{
\sum}\underset{\beta'\in\Gamma}{\sum}\overline{\mathbb{F}}_{\alpha'\alpha}
\mathbb{F}_{\beta'\beta}\mathfrak{R}^{\epsilon,\kappa}_{\alpha'\beta'}\\
&=
\mathfrak{R}^{\epsilon,\kappa}_{\alpha,\beta}+\underset{\alpha'\in\Gamma}{
\sum}\big(\overline{\mathbb{F}^{\epsilon,\kappa}}-\bb1\big)_{\alpha'\alpha}
\mathfrak{R}^{\epsilon,\kappa}_{\alpha',\beta}
+\underset{\alpha'\in\Gamma}{
\sum}\underset{\beta'\in\Gamma}{\sum}\overline{\mathbb{F}^{\epsilon,\kappa}_{
\alpha'\alpha}}\big(
\mathbb{F}^{\epsilon,\kappa}-\bb1\big)_{\beta'\beta}\mathfrak{R}^{\epsilon,
\kappa}_{\alpha',\beta'}\,.
\end{align*}
Noticing further that
$h_\bullet^{\epsilon,\kappa}:=p^{\epsilon,\kappa}\,\sharp^{\epsilon,\kappa}\, 
h\,\sharp\,^{
\epsilon,\kappa}(1-p^{\epsilon,\kappa})\in S^{-\infty}(\Xi)$ and
using again Lemma~\ref{P-red-OpF} we obtain that for any
$m\in\mathbb{N}$ there exists $C_m>0$ such that
\beq\label{B-3}
\begin{array}{l}
\left|\mathfrak{R}^{\epsilon,\kappa}_{\alpha'\beta'}-\widetilde{\Lambda}^{
\epsilon,\kappa}(\alpha',\beta')\left\langle\phi^\epsilon_{
\alpha'},
\mathfrak{Op}^{\epsilon}\big(
\mathfrak{y}^{\epsilon,\kappa}\sharp^{\epsilon,\kappa}\, 
h_\bullet^{\epsilon,
\kappa}\sharp^{\epsilon,\kappa}
r_{\epsilon,\kappa}\,\sharp^{\epsilon,\kappa} 
h_\bullet^{\epsilon,\kappa}\,\sharp^{\epsilon,\kappa} 
\mathfrak{y}^{\epsilon,
\kappa}\big)\phi^\epsilon_{\beta'}\right\rangle_{\mathcal{H}}\right| \\
\quad \quad  \leq
C_m\kappa\epsilon<\alpha'-\beta'>^{-m}.
\end{array}
\eeq
Moreover the proof of Lemma \ref{H-bd-Lemma} applies in this situation also and
we obtain that for any
$m\in\mathbb{N}$ there exists $C_m>0$ such that
\beq\label{B-4}
\left|\mathfrak{R}^{\epsilon,\kappa}_{\alpha'\beta'}\right|\leq
C_m<\alpha'-\beta'>^{-m}.
\eeq
 Taking into account all these results (\eqref{B-1}-\eqref{B-4}) and point 3 of Proposition  \ref{Prop-m-q-W}
we obtain the following result.

\begin{lemma}\label{Concl-2} 
We have
$$
\Sigma_2 (\alpha,\beta)
=\widetilde{\Lambda}^{\epsilon,\kappa}(\alpha,\beta)
\left\langle\phi^\epsilon_{
\alpha}\,,\,\mathfrak{Op}^{\epsilon}\big(
\mathfrak{y}^{\epsilon,\kappa}\,\sharp^{\epsilon,\kappa}\, 
h_\bullet^{\epsilon,
\kappa}\,\sharp^{\epsilon,\kappa}\, 
r_{\epsilon,\kappa}\,\sharp^{\epsilon,\kappa}\, 
h_\bullet^{\epsilon,\kappa}\,\sharp^{\epsilon,\kappa}\, 
\mathfrak{y}^{\epsilon,
\kappa}\big)\phi^\epsilon_{\beta}\right\rangle_{\mathcal{H
}}+ \kappa \epsilon \, \mathscr{O}(|\alpha-\beta|^{-\infty})\,.
$$
\end{lemma}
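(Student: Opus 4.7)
The plan is to mirror the strategy successfully used for $\Sigma_1$ in Lemma \ref{Concl-1}, exploiting the fact that $K^{\epsilon,\kappa}=\pi^{\epsilon,\kappa}H^{\epsilon,\kappa}R^{\epsilon,\kappa}_\bot(0)H^{\epsilon,\kappa}\pi^{\epsilon,\kappa}$ is already bounded and that its dressed version has a symbol of class $S^{-\infty}(\Xi)$, hence falling into the framework of Lemma \ref{P-red-OpF}. First I would rewrite $\Sigma_2(\alpha,\beta)$ from \eqref{magn-matrix-2} as a double sum
\[
\Sigma_2(\alpha,\beta)=\sum_{\alpha',\beta'\in\Gamma}\overline{\mathbb{F}^{\epsilon,\kappa}_{\alpha'\alpha}}\,\mathbb{F}^{\epsilon,\kappa}_{\beta'\beta}\,\mathfrak{R}^{\epsilon,\kappa}_{\alpha'\beta'},
\]
with $\mathfrak{R}^{\epsilon,\kappa}_{\alpha'\beta'}$ the matrix element displayed in \eqref{B-1}. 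Using Definition \ref{rem-Hpi}, Proposition \ref{IMP} (which yields $r_{\epsilon,\kappa}\in S^{-2}(\Xi)$), and the composition theorem for magnetic pseudodifferential calculus, the inner operator is identified with $\mathfrak{Op}^{\epsilon,\kappa}(\mathfrak{y}^{\epsilon,\kappa}\sharp^{\epsilon,\kappa}h_\bullet^{\epsilon,\kappa}\sharp^{\epsilon,\kappa}r_{\epsilon,\kappa}\sharp^{\epsilon,\kappa}h_\bullet^{\epsilon,\kappa}\sharp^{\epsilon,\kappa}\mathfrak{y}^{\epsilon,\kappa})$. Since both $h_\bullet^{\epsilon,\kappa}$ and $\mathfrak{y}^{\epsilon,\kappa}$ carry projection factors $p^{\epsilon,\kappa}\in S^{-\infty}(\Xi)$, the composed symbol lies in $S^{-\infty}(\Xi)\subset S^-_1(\Xi)$, making it eligible for Lemma \ref{P-red-OpF}.

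Next I would apply Lemma \ref{P-red-OpF} to this composed symbol: it transfers the scalar product from the full $(\epsilon,\kappa)$-magnetic calculus to the constant-field $\epsilon$-calculus, extracting the phase $\widetilde{\Lambda}^{\epsilon,\kappa}(\alpha',\beta')$ and leaving a remainder of size $\kappa\epsilon\,\mathscr{O}(\langle\alpha'-\beta'\rangle^{-m})$ for every $m\in\mathbb{N}$; this produces the main term of estimate \eqref{B-3}. Repeating the kernel argument used for Lemma \ref{H-bd-Lemma} (combining Proposition~B.8 of \cite{CHP} for $S^{-\infty}$ symbols with the rapid decay of $\phi^\epsilon_{\alpha'}$ and $\phi^\epsilon_{\beta'}$) yields the uniform bound $|\mathfrak{R}^{\epsilon,\kappa}_{\alpha'\beta'}|\leq C_m\langle\alpha'-\beta'\rangle^{-m}$ corresponding to \eqref{B-4}.

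Finally, I would split the double sum according to the identity
\[
\sum_{\alpha',\beta'}\overline{\mathbb{F}^{\epsilon,\kappa}_{\alpha'\alpha}}\mathbb{F}^{\epsilon,\kappa}_{\beta'\beta}\mathfrak{R}^{\epsilon,\kappa}_{\alpha'\beta'}=\mathfrak{R}^{\epsilon,\kappa}_{\alpha\beta}+\sum_{\alpha'}(\overline{\mathbb{F}^{\epsilon,\kappa}}-\bb1)_{\alpha'\alpha}\mathfrak{R}^{\epsilon,\kappa}_{\alpha'\beta}+\sum_{\alpha',\beta'}\overline{\mathbb{F}^{\epsilon,\kappa}_{\alpha'\alpha}}(\mathbb{F}^{\epsilon,\kappa}-\bb1)_{\beta'\beta}\mathfrak{R}^{\epsilon,\kappa}_{\alpha'\beta'},
\]
and control the last two contributions via Proposition \ref{Prop-m-q-W}(3), which bounds the entries of $\mathbb{F}^{\epsilon,\kappa}-\bb1$ by $\kappa\epsilon\, C_m\langle\cdot\rangle^{-m}$. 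A discrete convolution on $\Gamma$ between these rapidly decaying weights and the decay of $\mathfrak{R}^{\epsilon,\kappa}$ in $\langle\alpha'-\beta'\rangle^{-m}$ absorbs the rest into $\kappa\epsilon\,\mathscr{O}(|\alpha-\beta|^{-\infty})$, concluding the proof. The step I expect to require the most care is exactly this combination: verifying that after substituting the asymptotic expansion from Lemma \ref{P-red-OpF} into the double sum, the polynomial-decay estimates remain stable under iterated convolution over $\Gamma$, so no order is lost in the final weight $\langle\alpha-\beta\rangle^{-m}$; the gain of one factor $\kappa\epsilon$ must persist throughout and appear only once in the final bound.
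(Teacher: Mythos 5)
Your proposal follows the paper's proof of Lemma~\ref{Concl-2} step for step: rewriting $\Sigma_2$ as the double sum over $\mathfrak{R}^{\epsilon,\kappa}_{\alpha'\beta'}$ and identifying the inner operator with the magnetic quantization of the $S^{-\infty}$ symbol $\mathfrak{y}^{\epsilon,\kappa}\sharp^{\epsilon,\kappa}h_\bullet^{\epsilon,\kappa}\sharp^{\epsilon,\kappa}r_{\epsilon,\kappa}\sharp^{\epsilon,\kappa}h_\bullet^{\epsilon,\kappa}\sharp^{\epsilon,\kappa}\mathfrak{y}^{\epsilon,\kappa}$, invoking Lemma~\ref{P-red-OpF} to extract the phase and gain the $\kappa\epsilon$ factor (yielding \eqref{B-3}), repeating the Lemma~\ref{H-bd-Lemma} kernel argument for the uniform bound \eqref{B-4}, and finally using the $\mathbb{F}^{\epsilon,\kappa}-\bb1$ decomposition together with Proposition~\ref{Prop-m-q-W}(3) and a $\Gamma$-convolution estimate exactly as in Remark~\ref{d-series-coeff}. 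The argument is correct and matches the paper's; your closing remark about keeping track of the single power of $\kappa\epsilon$ through the convolutions is the right thing to watch.
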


\subsection{The modified energy band and proof of Theorem \ref{mainTh} {\rm(i)}}

Having in mind the result of Proposition \ref{prelim-res} let us introduce 
\beq\label{final-4}
k^\epsilon:\Gamma\rightarrow\mathbb{C},\ k^\epsilon(\gamma):=
\left\langle\phi^{\epsilon}_\alpha\,,\,\mathfrak{Op}
^\epsilon(\mathfrak{h}^\epsilon)
\phi^{\epsilon}_\beta
\right\rangle_{\mathcal{H}};\quad
\lambdabar^\epsilon(\theta):=\underset{\gamma\in\Gamma}{\sum}e^{-i<\theta,
\gamma>}k^\epsilon(\gamma).
\eeq
Proceeding as in \cite{CP-2} and \cite{CHP} we define the \textit{magnetic matrix} acting in $\ell^2(\Gamma)$:
\beq\label{magn-matrix}
\mathscr{M}^{\epsilon,\kappa}(\alpha,\beta):=\left\langle\phi^{\epsilon,\kappa}_\alpha\,,\,\mathfrak{Op}
^\epsilon(\mathfrak{h}^\epsilon)
\phi^{\epsilon,\kappa}_\beta
\right\rangle_{\mathcal{H}}=\Lambda^{\epsilon,\kappa}(\alpha,\beta)k^\epsilon(\alpha-\beta)\,.
\eeq

Let us consider the smooth function $\lambdabar^\epsilon:\mathbb{T}_*\rightarrow\mathbb{R}$ as a periodic smooth function on $\Xi$ constant along the directions in $\X\times\{0\}$ and write its magnetic quantization as an integral operator
\beq
\mathfrak{Op}^{\epsilon,\kappa}(\lambdabar^\epsilon):=\Int\big(
\Lambda^{\epsilon,\kappa} K_{{\lambda\hspace{-5pt}\relbar}^\epsilon}\big)\,,
\eeq
with
\beq
K_{{\lambda\hspace{-5pt}\relbar}^\epsilon}(x,y)=(2\pi)^{-2}\int_{\X^*}e^{i<\xi,x-y>}\lambdabar^\epsilon(\xi)d\xi=\underset{\gamma\in\Gamma}{\sum}k^\epsilon(\gamma)\delta_0(x-y-\gamma)\,.
\eeq
Let us define the following unitary map
\beq
\mathscr{W}_\Gamma:L^2(\X)\rightarrow\ell^2(\Gamma)\otimes L^2(E),\ \big(\mathscr{W}_\Gamma F\big)(\alpha,\underline{x}):=F(\alpha+\underline{x})
\eeq
 and compute the integral kernel $\mathfrak{K}$ of $\mathscr{W}_\Gamma\left(\mathfrak{Op}^{\epsilon,\kappa}(\lambdabar^\epsilon)\right) \mathscr{W}_\Gamma^{-1}$ in this representation
\beq
\mathfrak{K}(\alpha+\underline{x},\beta + \underline{y})=\Lambda^{\epsilon,\kappa}\big(\alpha+\underline{x},\beta+\underline{x}\big)k^\epsilon(\alpha-\beta)\delta_0(\underline{x}-
\underline{y})\,.
\eeq

In order to compare it with \eqref{magn-matrix} we shall consider two unitary gauge transformations:
\begin{align*}
\mathscr{U}_\epsilon:\ell^2(\Gamma)\otimes L^2(E)\rightarrow\ell^2(\Gamma)\otimes L^2(E),\quad\big(\mathscr{U}_\epsilon\Phi\big)(\alpha,\underline{x}):=\Lambda^\epsilon(\underline{x},\alpha)\Phi(\alpha,\underline{x})
\end{align*}
\begin{align*}
\mathscr{U}_{\epsilon,\kappa}:\ell^2(\Gamma)\otimes L^2(E)\rightarrow\ell^2(\Gamma)\otimes L^2(E),\quad\big(\mathscr{U}_{\epsilon,\kappa}\Phi\big)(\alpha,\underline{x}):=\widetilde{\Lambda}^{\epsilon,\kappa}(\alpha,\alpha+\underline{x})\Phi(\alpha,\underline{x}).
\end{align*}
We notice that, using also \eqref{D-Omega}, the kernel $\widetilde{ \mathfrak{K}}$ of $\big(\mathscr{U}_{\epsilon,\kappa}\mathscr{U}_\epsilon\mathscr{W}_\Gamma\big)\left(\mathfrak{Op}^{\epsilon,\kappa}(\lambdabar^\epsilon)\right) \big(\mathscr{U}_{\epsilon,\kappa}\mathscr{U}_\epsilon\mathscr{W}_\Gamma\big)^{-1}$ is given by:
\begin{align*}
&\widetilde{\mathfrak{K}}\big((\alpha,\underline{x});(\beta,\underline{y})\big)\\ &\qquad =
\widetilde{\Lambda}^{\epsilon,\kappa}(\alpha,\alpha+\underline{x})\Lambda^\epsilon(\underline{x},\alpha)\Lambda^{\epsilon,\kappa}(\alpha+\underline{x},\beta+\underline{x})\Lambda^\epsilon(\beta,\underline{x})\widetilde{\Lambda}^{\epsilon,\kappa}(\beta+\underline{x},\beta)k^\epsilon(\alpha-\beta)\delta_0(\underline{x}-
\underline{y})\\ 
&\qquad
=\Lambda^{\epsilon,\kappa}(\alpha,\beta)\widetilde{\Omega}^{\epsilon,\kappa}(\alpha,\alpha+\underline{x},\beta+\underline{x})\widetilde{\Omega}^{\epsilon,\kappa}(\alpha,\beta+\underline{x},\beta)k^\epsilon(\alpha-\beta)\delta_0(\underline{x}-
\underline{y})\,.
\end{align*}

\begin{proposition}\label{magn-matrix-ke}
The Hausdorff distance between the spectra of the operator $\mathfrak{Op}^{\epsilon,\kappa}(\lambdabar^\epsilon)$ in $\mathcal{L}\big(\mathcal{H}\big)$ and  the hermitian operator associated with the matrix $\mathscr{M}^{\epsilon,\kappa}(\alpha,\beta)$ in the orthonormal basis $\{\phi^{\epsilon,\kappa}_\gamma\}_{\gamma\in\Gamma}$ of $\pi^{\epsilon,\kappa}\mathcal{H}$ is of order $\kappa\epsilon$.
\end{proposition}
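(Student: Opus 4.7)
My plan is to realise both operators on the common Hilbert space $\ell^2(\Gamma)\otimes L^2(E)$ via the composed unitary $\mathscr{U}_{\epsilon,\kappa}\mathscr{U}_\epsilon\mathscr{W}_\Gamma$ already introduced in the paragraph preceding the statement, to estimate the operator norm of the difference, and then to invoke the standard bound $d_H(\sigma(A),\sigma(B))\leq \|A-B\|$ valid for bounded self-adjoint operators. Let $M$ denote the bounded self-adjoint operator on $\pi^{\epsilon,\kappa}\mathcal{H}$ whose matrix in the orthonormal basis $\{\phi^{\epsilon,\kappa}_\gamma\}_{\gamma\in\Gamma}$ equals $\mathscr{M}^{\epsilon,\kappa}$. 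Since the basis is orthonormal, $\sigma(M)$ coincides with the spectrum of the ``magnetic matrix'' operator $M_\ell$ on $\ell^2(\Gamma)$ defined by the same matrix, which in turn coincides with the spectrum of $N:=M_\ell\otimes I_{L^2(E)}$ on $\ell^2(\Gamma)\otimes L^2(E)$. The integral kernel of $N$ is exactly $\Lambda^{\epsilon,\kappa}(\alpha,\beta)\,k^\epsilon(\alpha-\beta)\,\delta_0(\underline{x}-\underline{y})$.

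Subtracting $N$ from the transformed operator $\widetilde{L}^{\epsilon,\kappa}$ whose kernel $\widetilde{\mathfrak{K}}$ was computed just before the statement, the difference kernel reduces to
\begin{equation*}
\Lambda^{\epsilon,\kappa}(\alpha,\beta)\,k^\epsilon(\alpha-\beta)\,\bigl[\widetilde{\Omega}^{\epsilon,\kappa}(\alpha,\alpha+\underline{x},\beta+\underline{x})\,\widetilde{\Omega}^{\epsilon,\kappa}(\alpha,\beta+\underline{x},\beta)-1\bigr]\,\delta_0(\underline{x}-\underline{y}).
\end{equation*}
Applying \eqref{Est-Omega} to each flux factor (for both of which $\|\kappa\epsilon B_\epsilon\|_\infty\leq C\kappa\epsilon$) and noting that $\underline{x}\wedge\underline{x}=0$ while $\underline{x}$ ranges over the bounded cell $E$, I get $|\widetilde{\Omega}_j-1|\leq C\kappa\epsilon\,|\underline{x}\wedge(\beta-\alpha)|\leq C'\kappa\epsilon|\alpha-\beta|$ uniformly in $\underline{x}$, hence the pointwise bound $|\widetilde{\Omega}_1\widetilde{\Omega}_2-1|\leq C\kappa\epsilon\,|\alpha-\beta|$. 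Viewing $\widetilde{L}^{\epsilon,\kappa}-N$ as a block operator on $\ell^2(\Gamma;L^2(E))$ whose $(\alpha,\beta)$ block is the multiplication on $L^2(E)$ by the bracketed expression scaled by phase and $k^\epsilon$, the norm of each block is at most $C\kappa\epsilon\,|\alpha-\beta|\,|k^\epsilon(\alpha-\beta)|$, so Schur's test yields
\begin{equation*}
\|\widetilde{L}^{\epsilon,\kappa}-N\|\leq C\kappa\epsilon\sum_{\gamma\in\Gamma}|\gamma|\,|k^\epsilon(\gamma)|\leq C'\kappa\epsilon,
\end{equation*}
the sum converging by the rapid decay of $k^\epsilon(\gamma)$, which follows from $\mathfrak{h}^\epsilon\in S^{-\infty}(\Xi)$ and the Schwartz character of the quasi-Wannier functions (cf.\ Proposition \ref{prelim-res} and Proposition \ref{free-ps-W}).

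Combining this operator-norm bound with the unitary invariance of the spectrum and the identity $\sigma(N)=\sigma(M)$ will conclude the proof via the standard Hausdorff--norm inequality for bounded self-adjoint operators. The only genuinely delicate step is the bookkeeping of the three phase factors $\Lambda^{\epsilon,\kappa}$, $\Lambda^\epsilon$ and $\widetilde{\Lambda}^{\epsilon,\kappa}$ through the unitary conjugation $\mathscr{U}_{\epsilon,\kappa}\mathscr{U}_\epsilon\mathscr{W}_\Gamma$, which is precisely what distils the discrepancy down to the two triangular flux factors $\widetilde{\Omega}^{\epsilon,\kappa}$; once that reduction is in place (and it has already been carried out in the excerpt), only the geometric bound on those fluxes and the Schur-test estimate remain, and neither presents any obstacle.
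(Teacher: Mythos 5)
Your proposal follows essentially the same route as the paper: conjugate $\mathfrak{Op}^{\epsilon,\kappa}(\lambdabar^\epsilon)$ by $\mathscr{U}_{\epsilon,\kappa}\mathscr{U}_\epsilon\mathscr{W}_\Gamma$, compare its kernel with that of $M_\ell\otimes\bb1_{L^2(E)}$, bound the residual flux factors via \eqref{Est-Omega} by $C\kappa\epsilon\,|\alpha-\beta|$, sum against the rapidly decaying $k^\epsilon$, and identify the spectrum of $M_\ell\otimes\bb1_{L^2(E)}$ with that of the matrix operator $\widetilde{H}^{\epsilon,\kappa}$ on $\pi^{\epsilon,\kappa}\mathcal{H}$. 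You are slightly more explicit than the paper in invoking Schur's test for the operator-norm bound and in carefully writing the flux factors as $\widetilde{\Omega}^{\epsilon,\kappa}$ (associated with $\kappa\epsilon B_\epsilon$, which is what makes the bound of order $\kappa\epsilon$ rather than $\epsilon$), but these are presentational refinements rather than a different argument.
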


\begin{proof}
If we denote by
$x:=\alpha+\underline{x}$ and $x^\prime:=\beta+\underline{x}$ and use \eqref{Est-Omega} we obtain that
\beq
\left|1-\Omega^{\epsilon,\kappa}(\alpha,x,x^\prime)\right|\leq C\,\kappa\epsilon\, |\alpha-\beta|\,|\underline{x}|\leq C_1\, |\alpha-\beta|\, \kappa\epsilon,
\eeq
and
\beq
\left|1-\Omega^{\epsilon,\kappa}(\alpha,x^\prime,\beta)\right|\leq   C\kappa\epsilon|\alpha-\beta|\,|\underline{x}|\leq C_2\, |\alpha-\beta|\, \kappa\epsilon.
\eeq

Taking into account the rapid decay of $k^\epsilon(\gamma)$ for $|\gamma|\rightarrow\infty$ and considering the canonical orthonormal basis $\{\mathfrak{e}_\gamma\}_{\gamma\in\Gamma}$ of $\ell^2(\Gamma)$ defined by $\mathfrak{e}_\gamma(\alpha):=\delta_{\gamma,\alpha}$ we obtain
$$
\big \|\big(\mathscr{U}_{\epsilon,\kappa}\mathscr{U}_\epsilon\mathscr{W}_\Gamma\big)\left(\mathfrak
{Op}^{\epsilon,\kappa}(\lambdabar^\epsilon)\right) \big(\mathscr{U}_{\epsilon,\kappa}\mathscr{U}_\epsilon\mathscr{W}_\Gamma\big)^{-1}-
\hspace{-8pt}\underset{(\alpha,\beta)\in\Gamma\times\Gamma}{\sum}\mathscr{M}^{\epsilon,\kappa}(\alpha,\beta)\big(|\mathfrak{e}_\alpha\rangle\langle\mathfrak{e}_\beta|\otimes\bb1_{L^2(E)}\big)\big \|
\leq\ C\kappa\epsilon.$$ 
On the other hand it is evident that the two operators:
$$
\underset{(\alpha,\beta)\in\Gamma\times\Gamma}{\sum}\mathscr{M}^{\epsilon,\kappa}(\alpha,\beta)\big(|\mathfrak{e}_\alpha\rangle\langle\mathfrak{e}_\beta|\otimes\bb1_{L^2(E)}\big);\qquad
\widetilde{H}^{\epsilon,\kappa}=\underset{(\alpha,\beta)\in\Gamma\times\Gamma}{\sum}\mathscr{M}^{\epsilon,\kappa}(\alpha,\beta)|\phi^{\epsilon,\kappa}_\alpha\rangle\langle\phi^{\epsilon,\kappa}_\beta|
$$
have the same spectrum.
\end{proof}

Putting together the Propositions \ref{prelim-res}, \ref{magn-matrix-ke} and Proposition 3.19 in \cite{CHP} we obtain the third conclusion of our Theorem \ref{mainTh}.

\subsection{Behavior of the modified energy band function in the chosen energy window and proof of Theorem \ref{mainTh} {\rm(ii)}}\label{SS-def-band}

We will only prove Theorem \ref{mainTh} {\rm(ii)} in the case when $m=0$; the general case is similar due to the rapid decay of the quasi Wannier function $\psi_0$. 
\begin{proposition}\label{mod-Bloch-func}
For $b\in(0,\widetilde{b})$ with $\tilde b$ as in Lemma \ref{newlemma}  there exists $\epsilon_0>0$ and
$C>0$ such that, for any $\theta\in\Sigma_b\,$ and  any $\epsilon \in [0,\epsilon_0]$,
$$
\left|\lambdabar^\epsilon(\theta)-\lambda_{0}(\theta)\right|\leq 
C\,\epsilon.
$$
\end{proposition}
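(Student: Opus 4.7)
The plan is to split the estimate into two independent steps: first establish $\lambdabar^0(\theta) = \lambda_0(\theta)$ identically on $\Sigma_b$, then show $|\lambdabar^\epsilon(\theta) - \lambdabar^0(\theta)| \leq C\epsilon$ uniformly in $\theta \in \mathbb{T}_*$. The two combined give the claimed estimate.

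For the first step, I observe that the symbol $\mathfrak{h}^0 \in S^{-\infty}(\Xi)$ is $\Gamma$-periodic in the $x$-variable, so $\widetilde{H}^0 = \mathfrak{Op}(\mathfrak{h}^0)$ commutes with all $\Gamma$-translations and admits a Floquet decomposition. Each fiber acts as multiplication by a scalar $\mu(\theta)$ on the rank-one subspace $\hat{\pi}(\theta)\mathscr{F}_\theta$, and since $[\mathscr{U}_\Gamma\psi_\gamma](\cdot,\theta) = e^{i\langle\theta,\gamma\rangle}\hat{\psi}_0(\cdot,\theta)$, the numbers $k^0(\gamma) = \langle\psi_\gamma,\widetilde{H}^0\psi_0\rangle_{\mathcal{H}}$ are, up to the standard normalisation, the Fourier coefficients of $\mu$; by \eqref{final-4} this gives $\lambdabar^0 \equiv \mu$. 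For $\theta \in \Sigma_b$ we have $\hat{\psi}_0(\theta) = \hat{\phi}_0(\theta)$, so $\hat{H}^0(\theta)\hat{\psi}_0(\theta) = \lambda_0(\theta)\hat{\psi}_0(\theta)$, which forces $\hat{\pi}(\theta)\hat{H}^0(\theta)\hat{\pi}^\perp(\theta) = 0$. This cancellation kills both the Feshbach-Schur correction and the off-diagonal part of the dressing factor: from the definitions \eqref{dress}--\eqref{horiac6} one checks directly that $\hat{Y}(\theta) = \hat{\pi}(\theta)$ and $\hat{\widetilde{H}^0}(\theta) = \lambda_0(\theta)\hat{\pi}(\theta)$ on $\Sigma_b$, hence $\mu(\theta) = \lambda_0(\theta)$ there.

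For the second step, I would prove the stronger bound $|k^\epsilon(\gamma) - k^0(\gamma)| \leq C_m\,\epsilon\,\langle\gamma\rangle^{-m}$ for every $m \in \mathbb{N}$, which summed over $\gamma \in \Gamma$ produces a uniform $O(\epsilon)$ bound for $\lambdabar^\epsilon - \lambdabar^0$. Expanding $k^\epsilon(\gamma) = \langle\phi^\epsilon_\gamma,\mathfrak{Op}^\epsilon(\mathfrak{h}^\epsilon)\phi^\epsilon_0\rangle_{\mathcal{H}}$ one uses: Proposition~\ref{peps-p} to control $p^\epsilon - p$ and hence $h_\circ^\epsilon - h_\circ^0$, $h_\bullet^\epsilon - h_\bullet^0$, $\mathfrak{y}^\epsilon - \mathfrak{y}^0$, $r_\epsilon - r_0$ at order $\epsilon$ in appropriate symbol seminorms; the continuity of the magnetic Moyal product in the magnetic field (Propositions~B.12 and B.14 of \cite{CHP}) to deduce that $\mathfrak{h}^\epsilon - \mathfrak{h}^0$ is $O(\epsilon)$ in every $S^{-\infty}$ seminorm, and to compare $\mathfrak{Op}^\epsilon$ with $\mathfrak{Op}^0$ via the estimate $|\Lambda^\epsilon(x,y)-1| \leq C\epsilon|x\wedge y|$; and Proposition~\ref{Prop-m-q-W} to bound $\phi^\epsilon_\gamma - \psi_\gamma$ in weighted norms. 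The polynomial decay in $\gamma$ is then inherited from the Schwartz behaviour of $\psi_0$ together with the rapid-decay kernel estimate of Proposition~B.8 in \cite{CHP}.

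The main obstacle lies in the bookkeeping of the second step: each individual comparison must simultaneously yield an $O(\epsilon)$ prefactor and a rapidly decaying weight $\langle\gamma\rangle^{-m}$, which requires careful tracking of the magnetic phases $\Lambda^\epsilon$ generated whenever a quasi Wannier function slides across $\mathfrak{Op}^\epsilon$. The algebraic identification $\lambdabar^0 = \lambda_0$ on $\Sigma_b$ from the first step is comparatively immediate once one notices that all the corrections are proportional to $\hat\pi(\theta)\hat H^0(\theta)\hat\pi^\perp(\theta)$, which vanishes exactly where $\hat\psi_0$ is a true eigenvector of $\hat H^0$.
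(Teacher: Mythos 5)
Your proposal is correct, and it takes a genuinely different route from the paper's. You use the decomposition $\lambdabar^\epsilon-\lambda_0=(\lambdabar^0-\lambda_0)+(\lambdabar^\epsilon-\lambdabar^0)$, where $\lambdabar^0$ is the fully dressed Bloch band at $\epsilon=0$: the first piece vanishes \emph{identically} on $\Sigma_b$, and the second is a uniform-in-$\theta$ $\mathcal O(\epsilon)$ continuity estimate. The paper instead inserts the intermediate object $\lambdabar_\circ^\epsilon$ built from the diagonal symbol $h^\epsilon_\circ$ only, writing $\lambdabar^\epsilon-\lambda_0=(\lambdabar^\epsilon-\lambdabar_\circ^\epsilon)+(\lambdabar_\circ^\epsilon-\lambda_0)$; both of its pieces then have to be simultaneously pushed to $\epsilon=0$ \emph{and} localised to $\Sigma_b$. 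The key algebraic cancellation is identical in both proofs: for $\theta\in\Sigma_b$, $\hat\pi(\theta)$ is a genuine spectral projection of $\hat H^0(\theta)$, so $\hat\pi(\theta)\hat H^0(\theta)\hat\pi(\theta)^\perp=0$, the dressing $\hat Y(\theta)$ collapses to $\hat\pi(\theta)$ and the Feshbach correction dies. What your version buys is a cleaner separation of the two sources of smallness --- the $\epsilon$-continuity of the entire dressed symbol $\mathfrak h^\epsilon$ versus the exact localisation to $\Sigma_b$ --- whereas the paper interleaves them. What the paper's version buys is tactical: since $\pi^\epsilon\phi^\epsilon_0=\phi^\epsilon_0$, the diagonal term satisfies the exact identity $\langle\phi^\epsilon_\gamma,\mathfrak{Op}^\epsilon(h_\circ^\epsilon)\phi^\epsilon_0\rangle=\langle\phi^\epsilon_\gamma,\mathfrak{Op}^\epsilon(h)\phi^\epsilon_0\rangle$, replacing the symbol $h_\circ^\epsilon\in S^{-\infty}$ by the \emph{original} symbol $h$ and short-circuiting most of the comparison for that term; in your Step 2 you would rediscover this simplification, but would have to unpack it inside the $\epsilon$-comparison of the full $\mathfrak h^\epsilon$.

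One point to be careful about in Step 2: $r_\epsilon$ lives in $S^{-2}(\Xi)$, not $S^{-\infty}(\Xi)$, and the estimate $r_\epsilon-r_0=\mathcal O(\epsilon)$ is not an immediate corollary of Proposition \ref{peps-p} (which concerns $p^\epsilon-p$). You need the inverse-comparison argument that the paper carries out for $r_{\epsilon,\kappa}-r_\epsilon$ in the proof of Proposition \ref{prelim-res}, namely to compute $\big[h_\bot^\epsilon\,\sharp^\epsilon\,(r_\epsilon-r_0)\big]\,\sharp\, h_\bot^0$, rewrite it using the two resolvent identities, and invoke Propositions B.12/B.14 of \cite{CHP}. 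You list the right ingredients but gloss over this step; it is more than a direct application of Proposition \ref{peps-p}. Once this is supplied, the estimate $|k^\epsilon(\gamma)-k^0(\gamma)|\le C_m\,\epsilon\,\langle\gamma\rangle^{-m}$ does close by the kernel estimate of Proposition B.8 in \cite{CHP} together with Proposition \ref{Prop-m-q-W}, and the sum over $\gamma\in\Gamma\cong\mathbb Z^2$ converges for $m\ge 3$.
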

\begin{proof}
Let us also define the smooth function
\beq\label{final-5}
\lambdabar_\circ^\epsilon(\theta):=\underset
{\gamma\in\Gamma}{\sum}e^{-i<\theta,
\gamma>}\left\langle\phi^\epsilon_{\gamma}\,,\,
\mathfrak{Op}^\epsilon(h^\epsilon_\circ)\phi^\epsilon_{0}\right
\rangle_{\mathcal{H}}
\eeq
 and estimate the above difference  by a two step procedure:
\beq
\lambdabar^\epsilon(\theta)-\lambda_{0}(\theta)\ =\ \big(
\lambdabar^\epsilon(\theta)-\lambdabar_\circ^\epsilon(\theta)
\big)\,+\,\big(\lambdabar_\circ^\epsilon(\theta) 
-\lambda_{0}(\theta)\big).
\eeq

\noindent\textbf{Step 1.}
We begin by computing
$$
\lambdabar_\circ^\epsilon(\theta)-\lambda_{0}(\theta)\,=\,
\underset{\gamma\in\Gamma}{\sum}e^{-i<\theta,
\gamma>}\left\langle\phi^\epsilon_{\gamma}\,,\,
\mathfrak{Op}^\epsilon(h^\epsilon_\circ)\phi^\epsilon_{0}\right
\rangle_{\mathcal{H}}
-\lambda_{0}(\theta).
$$
Using Proposition \ref{Zak-transl} we obtain
$$ 
\left\langle\phi^\epsilon_{\gamma}\,,\,
\mathfrak{Op}^\epsilon(h^\epsilon_\circ)
\phi^\epsilon_{0}\right\rangle_{\mathcal{H}}  =
\left\langle\phi^\epsilon_{\gamma}\,,\,
\mathfrak{Op}^\epsilon(h)
\phi^\epsilon_{0}\right\rangle_{\mathcal{H}}\\ 
 =
\Lambda^\epsilon(\alpha,\beta)
\left\langle
\phi^\epsilon_{\alpha-\beta}\,,\,\mathfrak{Op}^
\epsilon(h)
\phi^\epsilon_{0}\right\rangle_{\mathcal{H}}\,.
$$
Taking also into account that we have fixed the gauge for $A_0$ as in \eqref{A0}, we conclude that
\beq\label{1-main}
\begin{array}{ll}
\left\langle\phi^\epsilon_{\gamma}\,,\,
\mathfrak{Op}^\epsilon(h)\phi^\epsilon_{0}\right
\rangle_{\mathcal{H}}& =
\left\langle\psi_{\gamma}\,,\,
H^0\psi_{0}\right\rangle_{\mathcal{H}}\\ & \quad 
+\, \epsilon \, \left[\left\langle\phi^\epsilon_{\gamma}\,,\,
\big(D\cdot A_0(\cdot)+A_0(\cdot)\cdot D+\epsilon
A_0(\cdot
)^2\big)\psi^\epsilon_{0}\right\rangle_{\mathcal{H}}
\right]\\ &\quad 
+\left[\left\langle\big(\tau_{\gamma}(\psi^\epsilon_
{0}-\psi_{0}) \big)\,,\,
H^0\psi^\epsilon_{0}\right\rangle_{\mathcal{H}}+
\left\langle\tau_{\gamma}\psi^\epsilon_{0}\,,\,
H^0\big(\psi^\epsilon_{0}-\psi_0\big)\right\rangle
_{\mathcal{H}}\right]
\\ & \quad 
+\left\langle\tau_{\gamma}\psi^\epsilon_{0}\,,\,\big(\Omega^\epsilon(\gamma,0,
\cdot)-1\big)
\big((D+\epsilon
a_0(\cdot,0))^2+V(\cdot)\big)\psi^\epsilon_{0}\right\rangle_{\mathcal{H}}.
\end{array}
\eeq
We recall from
Definition \ref{m-q-W-def} that, for any $\gamma\in\Gamma$,
$$
\phi^\epsilon_{\gamma}=\Lambda^\epsilon(\cdot,\gamma)\tau_{\gamma}
\underset{\beta\in\Gamma}{\sum}
\Omega^\epsilon(\beta,0,\cdot)\mathbf{F}
^\epsilon(\beta)\tau_{\beta}\psi_0\,.
$$
 Taking into account 
the rapid decay of $\mathbf{F}$ (see Proposition \ref{Prop-m-q-W}) and of  $\psi_0$,  we conclude that for any $m\in\mathbb{N}$ there exists $C_m>0$
such that for any $\epsilon\in[0,\epsilon_0]$:
$$
\left|\left\langle\phi^\epsilon_{\gamma}\,,\,
\big(D\cdot A_0(\cdot)+A_0(\cdot)\cdot D+\epsilon
A_0(\cdot)^2\big)\psi^\epsilon_{0}\right\rangle_
{\mathcal{H}}\right|\leq
C_m<\gamma>^{-m},\quad\forall\gamma\in\Gamma,
$$
and 
$$
\left|\left\langle\tau_{\gamma}\psi^\epsilon_{0}\,,\,\big(\Omega^\epsilon(\gamma,0,
\cdot)-1\big)
\big((D+\epsilon
A_0(\cdot
))^2+V(\cdot)\big)\psi^\epsilon_{0}\right\rangle_
{\mathcal{H}}\right|
\leq C_m\,\epsilon<\gamma>^{-m},\quad\forall\gamma\in\Gamma.
$$
Using Definition \ref{m-q-W-def} and point (2) in Proposition \ref{Prop-m-q-W} we get also that for any $m\in\mathbb{N}$
there exists $C_m>0$ such that for any
$\epsilon\in[0,\epsilon_0]$:
$$
\left|\left\langle\big(\tau_{\gamma}(\psi^\epsilon_
{0}-\psi_{0}) \big)\,,\,
H^0\psi^\epsilon_{0}\right\rangle_{\mathcal{H}}+
\left\langle\tau_{\gamma}\psi^\epsilon_{0}\,,\,
H^0\big(\psi^\epsilon_{0}-\psi_0\big)\right\rangle_
\mathcal{H}\right|
\leq C_m\, \epsilon<\gamma>^{-m},\quad\forall\gamma\in\Gamma.
$$
We have thus obtained:
$$
\lambdabar_\circ^\epsilon(\theta)-\lambda_0(\theta)
\,=\,\underset{\gamma\in\Gamma}{\sum}e^{-i<\theta,
\gamma>}\left\langle\psi_{\gamma}\,,\,
H^0\psi_{0}\right\rangle_{\mathcal{H}}
-\lambda_{0}(\theta)\,+\,\mathcal O (\epsilon)\,.
$$
Let us now use the Bloch-Floquet representation and the properties of $\hat{\psi}_0$ as constructed in Section~\ref{S_q-W} in order to write:
$$
\begin{array}{ll}
\left\langle\psi_{\gamma}\,,\,
H^0\psi_{0}\right\rangle_{\mathcal{H} } &=\int_{\mathbb{T}_*}d\omega\,
e^{i<\omega,\gamma>}\left\langle\hat{\psi}_{0}(\omega)\,,\left(\underset{n\geq0}
{\sum}\lambda_n(\omega)|\hat{\phi}_n(\omega)\rangle\langle\hat{\phi}
_n(\omega)|\right)\hat{\psi}_{0}(\omega)\right\rangle_{\mathscr{F}_\omega}
\\&
=\underset{n\in\mathbb{N}}{\sum}\int_{\mathbb{T}_*}
d\omega\,e^{i<\omega,\gamma>}\lambda_n(\omega)\left|\left\langle\hat{
\psi}_{0}(\omega)\,,\hat{\phi}_n(\omega)\right\rangle_{
\mathscr{F}_\omega}\right|^2\,,
\end{array}
$$
and we notice that for
$\omega\in\Sigma_b$ we have $\hat{
\psi}_{0}(\omega)=\hat{
\phi}_{0}(\omega)$.
 Finally we can write:
$$
\underset{\gamma\in\Gamma}{\sum}e^{-i<\theta,
\gamma>}\left\langle\psi_{\gamma}\,,\,
H^0\psi_{0}\right\rangle_{\mathcal{H}}=\underset{n\in
\mathbb{N}}{\sum} 
\lambda_n(\theta)\left|\left\langle\hat{
\psi}_{0}(\theta)\,,\hat{\phi}_n(\theta)\right\rangle_{
\mathscr{F}_\omega}\right|^2
$$
and thus 
$$\underset{\gamma\in\Gamma}{\sum}e^{-i<\theta,
\gamma>}\left\langle\psi_{\gamma}\,,\,
H^0\psi_{0}\right\rangle_{\mathcal{H}}=\lambda_{0}(\theta)\mbox{ for } \theta\in\Sigma_b\,.$$

\noindent\textbf{Step 2.}
We have to study the difference
$$
\lambdabar^\epsilon(\theta)-\lambdabar^\epsilon_\circ(\theta)
=\underset{\gamma\in\Gamma}{\sum}e^{-i<\theta,\gamma>}\Big(\langle\phi^\epsilon_\gamma,\mathfrak{Op}^\epsilon(\mathfrak{h}^\epsilon)\phi^\epsilon_0
\rangle_{\mathcal{H}}\ -\ \langle\phi^\epsilon_\gamma,\mathfrak{Op}^\epsilon(h^\epsilon_\circ)\phi^\epsilon_0
\rangle_{\mathcal{H}}\Big).
$$
Thus let us analyse the symbol: 
$$
\mathfrak{h}^\epsilon-h_\circ^\epsilon=\big(h_\circ^{\epsilon}\,\sharp\,^{\epsilon}
\mathfrak{z}^{\epsilon}+\mathfrak{z}^{\epsilon}\,\sharp\,^{
\epsilon}h_\circ^{\epsilon}\big)
+\mathfrak{z}^{\epsilon}\,\sharp\,^{
\epsilon}h_\circ^{\epsilon}\,\sharp\,^{\epsilon}\mathfrak{z}
^{\epsilon}+\mathfrak{y}^{\epsilon}\,\sharp\,^{\epsilon}h_\bot^{\epsilon}
\,\sharp\,^{\epsilon}r_{\epsilon}\,\sharp\,^{\epsilon}
h_\bot^{\epsilon}\,\sharp\,^{\epsilon}\mathfrak{y}^{\epsilon}\,,
$$
where $\mathfrak{y}^\epsilon$ and $\mathfrak{z}^\epsilon$ are defined in the last item in Definition \ref{rem-Hpi}. \\
Let $h_\circ\in S^{-\infty}(\Xi)$ be the symbol of the bounded operator
$$
\pi H^0\pi=\mathscr{U}_\Gamma^{-1}\left(\int_{\mathbb{T}_*}^
\oplus\hat{\pi}(\theta)\hat{H}^0(\theta)\hat{\pi}(\theta)\,d\theta\right)\mathscr{U}_\Gamma\,,
$$
$h_\bullet\in S^{-\infty}(\Xi)$ be the symbol of the operator $H_\bullet:=\pi H^0\pi^\bot\,$,  and $h^*_\bullet\in S^{-\infty}(\Xi)$ be the symbol of its adjoint. Moreover, using \eqref{l-bound-0} and then Proposition \ref{red-mPsD}, we denote by $R_\bot$ the inverse of $\pi^\bot H^0\pi^\bot$ as operator acting in $\pi^\bot\mathcal{H}$ and let $r\in S^{-m}_1(\Xi)$ be its symbol. 

Using Proposition \ref{Zak-transl} and Proposition \ref{Prop-m-q-W} and the fast decay of the modified Wannier function we can prove:
\beq\label{final-2}
\begin{array}{l}
\left\langle\phi^\epsilon_{\gamma}\,,\,\mathfrak{Op}^{\epsilon}\big((h_\circ^{\epsilon}\,\sharp\,^{\epsilon}
\mathfrak{z}^{\epsilon}+\mathfrak{z}^{\epsilon}\,\sharp\,^{
\epsilon}h_\circ^{\epsilon})+
\mathfrak{z}^{\epsilon}\,\sharp\,^{
\epsilon}h_\circ^{\epsilon}\,\sharp\,^{\epsilon}\mathfrak{z}
^{\epsilon}+\mathfrak{y}^{\epsilon}\,\sharp\,^{\epsilon}h_\bullet^{\epsilon}
\,\sharp\,^{\epsilon}r_{\epsilon}\,\sharp\,^{\epsilon}
(h_\bullet^{\epsilon})^*\,\sharp\,^{\epsilon}\mathfrak{y}^{\epsilon}
\big)\phi^\epsilon_0\right\rangle_{\mathcal{H}}\\
\qquad \qquad =\left\langle\tau_{\gamma}\psi_0
\,,\, Z \psi_0\right\rangle_{\mathcal{H}}
+\epsilon\, \mathscr{O}(<\gamma>^{-m})\,,
\end{array}
\eeq
with
\begin{equation}\label{eq:7.67}
Z:= \mathfrak{Op}\big((h_\circ\,\sharp\,
\mathfrak{z}+\mathfrak{z}\,\sharp\, h_\circ)+
\mathfrak{z}\,\sharp\, h_\circ\,\sharp\,\mathfrak{z}
+\mathfrak{y}\,\sharp\, h_\bullet
\,\sharp\, r\,\sharp\,
h_\bullet^*\,\sharp\,\mathfrak{y}
\big)\,.
\end{equation}
Here we also used that $\Lambda^\epsilon(\gamma,0)=1$ for any $\gamma\in\Gamma$.\\
The operator $Z$ has the following form:
\beq\label{final-II-0}
\begin{array}{ll}
Z &=H^0\big(Y^{-1/2}-\bb1\big)+\big(Y^{-1/2}-\bb1\big)H^0+
\big(Y^{-1/2}-\bb1\big)H^0\big(Y^{-1/2}-\bb1\big)\\ & \quad 
+Y^{-1/2}\pi H^0\pi^\bot R_\bot\pi^\bot H^0\pi Y^{-1/2}\,,
\end{array}
\eeq
and recall from \eqref{dress} that 
$$Y:=\pi H^0\pi^\bot R_\bot^2\pi^\bot H^0\pi\,.
$$
 All the operators appearing in \eqref{final-II-0}  have evidently $\Gamma$-periodic symbols and thus also $Z$ and $Y$.
More precisely we have the following direct integral decomposition:
$$
Y=\pi H^0\pi^\bot R_\bot^2\pi^\bot H^0\pi\equiv H_\bullet R_\bot^2H_\bullet=\mathscr{U}_\Gamma^{-1}\left(\int_{\mathbb{T}_*}^
\oplus d\theta\,
\hat{Y}(\theta)\right)\mathscr{U}_\Gamma,
$$
with
 $$
\hat{Y}(\theta):=\hat{H}_\bullet(\theta)\hat{R}_\bot
(\theta)^2\big[\hat{H}
_\bullet(\theta)\big]^*,
$$
 and $\hat{H}_\bullet(\theta)$ maps $\hat{\pi}^\bot(\theta)\mathscr{F}
_\theta$ into $\hat{\pi}(\theta)\mathscr{F}_\theta$ (see \eqref{hatpi} for the definition of $\hat \pi(\theta)$) and is defined by
$$ 
\hat{H}_\bullet(\theta)
=\hat \pi(\theta) 
\Big(\underset{n\in\mathbb{N}}{\sum}\lambda_n(\theta)|\hat{\phi}
_n(\theta)\rangle\langle\hat{\phi}_n(\theta)|\Big)\, \hat \pi(\theta)^\perp\,.
$$
Thus $$
Y=\pi H^0\pi^\bot R_\bot^2\pi^\bot H^0\pi=\mathscr{U}_\Gamma^{-1}\left(\int_{\mathbb{T}_*}^
\oplus d\theta
 \hat{Y}(\theta)
|\hat{\psi}_0(\theta)\rangle\langle\hat{\psi}_0(\theta)|\,,
\right)\mathscr{U}_\Gamma$$ 
with $ \hat{Y}\in C^\infty(\mathbb{T}_*;\mathbb{R}_+)$.\\
Taking into account the definitions and arguments in Section \ref{S_q-W}, 
we obtain that, for $\theta\in\Sigma_b\,$, 
$$
\hat{H}_\bullet(\theta)=|\hat{\phi}_0(\theta)\rangle\langle\hat{\phi}_0(\theta)|
\Big(\underset{n\in\mathbb{N}}{\sum}\lambda_n(\theta)|\hat{\phi}
_n(\theta)\rangle\langle\hat{\phi}_n(\theta)|\Big)\big(\bb1-
|\hat{\phi}_0(\theta)\rangle\langle\hat{\phi}_0(\theta)|\big)=0\,,
$$
so that we conclude that $ \hat{Y}(\theta)=0$ for $\theta\in \Sigma_b\,$. 

Now the operator $Z$ in \eqref{eq:7.67}  is also $\Gamma$-decomposable and 
$$
Z\ =\ \mathscr{U}_\Gamma^{-1}\left(\int^\oplus_
{\mathbb{T}_*}d\theta\, \hat{Z}(\theta)\right)\mathscr{U}_\Gamma ,
$$
where $ \hat{Z}\in C^\infty(\mathbb{T}_*)$ satisfies 
$
 \hat{Z}(\theta)=0$, $\forall\theta\in\Sigma_b$.
Finally, we obtain that
$$
\begin{array}{ll}
\lambdabar^\epsilon(\theta)-\lambdabar^\epsilon_\circ(\theta) &=\underset{\gamma\in\Gamma}{\sum}e^{-i<\theta,\gamma>}\langle\phi^\epsilon_\gamma,\mathfrak{Op}^\epsilon(\mathfrak{k}^\epsilon-h^\epsilon_\circ)\phi^\epsilon_0
\rangle_{\mathcal{H}}\\
&=\underset{\gamma\in\Gamma}{\sum}e^{-i<\theta,\gamma>}\langle\tau_{\gamma}\psi_0,Z\psi_0
\rangle_{\mathcal{H}}\,+\,\mathscr{O}(\epsilon)\\ & =\  \hat{Z}(\theta)\,+\,\mathscr{O}(\epsilon)=\mathscr{O}(\epsilon),\quad \forall\theta\in\Sigma_b,
\end{array}
$$
which ends the proof of the proposition and of Theorem \ref{mainTh} (ii). 
\end{proof}

\subsection{Proof of Corollaries \ref{C1} and \ref{C2}}\label{hcfin}
 
First, an application of Proposition \ref{magn-matrix-ke} and Proposition \ref{C-FS} with $\eta=\epsilon$ shows that the spectrum of $H^{\epsilon,\kappa}$ must have gaps of order $\epsilon$ in the interval $[0,N\epsilon]$, provided the same is true for $\mathfrak{Op}^{\epsilon,\kappa}(\widetilde{\lambdabar^\epsilon})$. 

Second, one has to perform the spectral analysis of $\mathfrak{Op}^{\epsilon,\kappa}(\widetilde{\lambdabar^\epsilon})$ applying the results in \cite{CHP} and prove the existence of gaps of order $\epsilon$ independent of $\kappa$ in its spectrum restricted to the interval $[0,N\epsilon]$, provided $\kappa$ and $\epsilon$ are small enough.

\begin{itemize}
\item  Assume $B^\Gamma=0$. Then we notice that the function $\lambdabar^\epsilon:\mathbb{T}_*\rightarrow\mathbb{R}$ has exactly the same properties as the function $\lambda^\epsilon:\mathbb{T}_*\rightarrow\mathbb{R}$ defined in Definition~3.18 in \cite{CHP} with the only difference that the estimate $|\partial^\alpha\lambda^\epsilon(\theta)-\partial^\alpha\lambda_0(\theta)|\leq C_m\epsilon$ which was valid for all $\theta\in\mathbb{T}_*$ (as stated in Proposition 4.1 in \cite{CHP}) is now only valid on $\Sigma_b\,$. Nevertheless, if we choose an upper bound $\delta_0$ for the cut-off parameter $\delta>0$ in Subsection4.3 (Paragraph 'Cut-off functions near the minimum') in \cite{CHP} such that $\{\theta\in E_*\mid|\theta|\leq\sqrt{2m_1^{-1}}\delta_0\}\subset\Sigma_b$, all the results of Section 4 of \cite{CHP} remain true for our function $\lambdabar^\epsilon:\mathbb{T}_*\rightarrow\mathbb{R}$ 
 and Corollary \ref{C1} follows.

\item Assume $B^\Gamma\ne0$. Then the situation is rather similar, but the function $\lambda_{0}:\mathbb{T}_*\rightarrow\mathbb{R}$ may no longer be symmetric around its local minimum $\theta_0$. In this case, a third order term may appear in the Taylor expansion (4.1) in \cite{CHP} and thus formula (4.3) in \cite{CHP} becomes
\beq
\lambda^\epsilon (\theta) - \lambda^\epsilon (\theta^\epsilon) =
 \,\underset{
1\leq j,k\leq2}{\sum} a_{jk}^\epsilon  (\theta_j-\theta^\epsilon_j)(\theta_k-\theta^\epsilon_k)  + \mathscr{O}(|\theta-\theta^\epsilon|^3) .
\eeq
As a consequence, the exponent $\mu>0$ relating the cut-off parameter $\delta>0$ with the intensity of the magnetic field $\epsilon>0$ through condition (4.20) in \cite{CHP} may vary only in the interval $(2,3)$ in $\mathbb{R}$. A 'symmetric choice', similar to the one in \cite{CHP} is thus $\mu=2.5$ and the only effect of this new choice is that in the second formula (4.28) in Proposition 4.5 in \cite{CHP} we now have the estimate $$ \|\mathfrak{Op}^{\epsilon,\kappa}
(\mathfrak{r}_{\delta,a})\|\,\leq\,C\;\epsilon^{1/5}\,.
$$
This ends the proof of Corollary \ref{C2}.
\end{itemize}

\vspace{0.2cm}

\noindent {\bf \large Acknowledgements}. Two of the authors (BH and RP) had an extended stay at the Department of Mathematical Sciences, Aalborg University, period during which this work had been finalized. The financial support is gratefully acknowledged.


\begin{thebibliography}{}

\bibitem{Be1} J. Bellissard: {\it $C^*$ algebras in solid state physics. 2D electrons in a uniform magnetic field.}  Operator
algebras and applications {\bf 2}, 49--76, London Math. Soc. Lecture Note Ser., 136, Cambridge
Univ. Press, Cambridge, (1988).

\bibitem{Be2} J. Bellissard: {\it Lipschitz continuity of gap boundaries for Hofstadter-like spectra}. Comm. Math.
Phys. {\bf 160}(3), 599--613 (1994).


\bibitem{CHP} H. D. Cornean, B. Helffer, and R. Purice: {\it Low lying spectral gaps
induced by
slowly varying magnetic fields.}
J. Funct. Anal. {\bf 273}(1), 206--282 (2017).


\bibitem{CHN} H. D. Cornean, I. Herbst, and G. Nenciu: {\it On the construction
of
composite Wannier functions}. Ann. H. Poincar{\'e} {\bf 17} (12), 3361--3398 (2016). 

\bibitem{CIP} H. D. Cornean, V. Iftimie,  and R. Purice: {\it Peierls
substitution and magnetic 
pseudo-differential calculus}. Preprint arXiv 1507.06114.


\bibitem{CM}
H.D.~Cornean and D.~Monaco: {\it On the construction of Wannier functions in topological insulators: the 3D case}. Ann. H. Poincar{\'e} Online First (2017) https://doi.org/10.1007/s00023-017-0621-y



\bibitem{CP-2} H. D. Cornean and  R. Purice: {\it Spectral edge regularity of
magnetic Hamiltonians}.  J. London Math. Soc. {\bf 92}(1), 89-104 (2015).

\bibitem{DN-L}G. De Nittis and M. Lein: {\it Applications of Magnetic $\Psi$DO Techniques to Space-adiabatic Perturbation
Theory}. Rev. Math. Phys. {\bf 23}(3), 233--260 (2011).




\bibitem{FMP} D. Fiorenza, D. Monaco, and G. Panati:  {\it Construction of real-valued localized composite Wannier functions for insulators}. Ann. H. Poincar{\'e} {\bf 17}(1), 63-97 (2016).

\bibitem{FT} S. Freund and S. Teufel: {\it Peierls substitution for magnetic Bloch bands}. Anal. \& PDE {\bf 9}(4),
773--811 (2016)

\bibitem{GS} S. J. Gustafson, I. M. Sigal: {\it Mathematical Concepts of Quantum Mechanics} (2-nd edition), Universitext Springer, 2011.



\bibitem{HS} B. Helffer and  J. Sj\"ostrand: {\it Equation de Schr\"odinger avec
champ magn{\'e}tique et  \'equation de
Harper}, in LNP {\bf 345}, Springer-Verlag, Berlin, Heidelberg and New York,
118--197 (1989).



\bibitem{HS4} B. Helffer and  J. Sj\"{o}strand: 
\newblock {\it On diamagnetism and the  de Haas-van Alphen effect}.
\newblock Annales de l'I.H.P. section A, {\bf 52}(4), 303--375 (1990).



\bibitem{IMP1} V. Iftimie, M. M\u antoiu, and  R. Purice: {\it Magnetic
pseudodifferential operators}, Publ. RIMS. {\bf 43}, 585--62 (2007).

\bibitem{IMP2} V. Iftimie, M. M\u antoiu, and R. Purice: {\it Commutator
criteria
for magnetic pseudodifferential operators}, Comm. Partial Diff. Equations. {\bf
35}, 1058--1094 (2010).



\bibitem{Ka} T. Kato: {\it Perturbation Theory for Linear Operators}.
\newblock Springer-Verlag Berlin Heidelberg, 1995.

\bibitem{KS} W. Kirsch and  B. Simon:{\it Comparison theorems for the gap of Schr\"odinger operators.} J.
Funct. Anal. {\bf 75}(2), 396-410 (1987).

\bibitem{Ku} P. Kuchment:{\it An overview of periodic elliptic operators.} Bull. Amer. Math. Soc. {\bf 53}, 343--414 (2016).

\bibitem{Lu} J.M. Luttinger: {\it The Effect of a Magnetic Field on Electrons in a Periodic Potential.} Phys.
Rev. {\bf 84}, 814-817 (1951).

\bibitem{MP1} M. M\u antoiu and R. Purice: {\it   The magnetic
Weyl calculus}, J. Math. Phys. {\bf 45}, No 4, 1394--1417 (2004).

\bibitem{MPR1} M. M\u antoiu, R. Purice, and S. Richard: {\it Spectral and
propagation results for magnetic Schr\"odinger operators; a $C^*$-Algebraic Approach}.
\newblock  J. Funct. Anal. {\bf 250}, 42--67 (2007).

\bibitem{Ne-LMP} G. Nenciu: {\it Bloch electrons in a magnetic field, rigorous
justification of the Peierls-Onsager effective Hamiltonian}.
\newblock Lett. Math.Phys.
{\bf 17}, 247--252 (1989).

\bibitem{Ne-RMP} G. Nenciu: {\it Dynamics of Bloch electrons in electric and
magnetic fields, rigorous
justification of the effective Hamiltonians}.
\newblock Rev. Mod.Phys.
{\bf 63} (1), 91--127 (1991).





\bibitem{PST} G. Panati, H. Spohn, and S. Teufel: {\it Effective dynamics for
Bloch
electrons: Peierls substitution and beyond}. 
\newblock Comm. Math. Phys. {\bf 242},
no. 3, 547--578 (2003).


\bibitem{Pe} R.E. Peierls: {\it Quantum Theory of Solids}.
\newblock  Oxford University
Press, 1955.

\bibitem{RB} R. Rammal and J. Bellissard: {\it An algebraic semi-classical approach to Bloch electrons in a magnetic field}. J. de Phys. {\bf 51}(17), 1803--1830
(1990). 

\bibitem{RS-4} M. Reed, B. Simon: {\it Methods of Modern Mathematical Physics.
Vol IV}, Academic Press, New York, 1975.

\bibitem{Sj} J. Sj\"ostrand: {\it Microlocal analysis for the periodic magnetic Schr\"odinger equation and related questions}. Microlocal Analysis and Applications. 
227--332 (1989). DOI: 10.1007/BFb0085120



\bibitem{Z} J. Zak: {\it Magnetic translation group}, Phys. Rev {\bf 134}, 1602 A, (1964).

\end{thebibliography}
\end{document}